\newif\ifarxiv\arxivtrue
\newif\ifsubm\submfalse
\newif\ifllncs\llncsfalse
\definecolor{myred}{RGB}{220,43,25}
\definecolor{mygreen}{RGB}{0,146,64}
\definecolor{myblue}{RGB}{0,143,224}
\tikzstyle{vertex} = [shape=circle,draw=black]
\tikzstyle{namedVertex} = [shape=circle,draw=black]
\tikzstyle{edge} = [draw,->,thick]
\tikzstyle{labeledNodeS}=[circle, color=black!75!white, draw, inner sep = 0.1em, minimum size = 1.5em, scale=1.25]
\tikzstyle{normalEdge}=[very thick, >=stealth]
\definecolor{purplekeywords}{rgb}{0.5,0,0.33}
\definecolor{greencomments}{rgb}{0,0.5,0}
\definecolor{bluestrings}{rgb}{0,0,1}
\definecolor{types}{rgb}{0.17,0.57,0.68}
\crefname{cons}{constraint}{constraints}
\Crefname{cons}{Constraint}{Constraints}
\crefname{claim}{claim}{claims}
\Crefname{claim}{Claim}{Claims}
\theoremstyle{definition}
	\spnewtheorem{defn}[definition]{Definition}{\bfseries}{}
	\newtheorem{defn}{Definition}[section]
	\newtheorem{example}[defn]{Example}
\theoremstyle{plain}
	\newtheorem{lemma}[defn]{Lemma}
	\newtheorem{theorem}[defn]{Theorem}
\newtheorem{claim}{Claim}
\newenvironment{proofClaim}[1][]{\ifthenelse{\equal{#1}{}}{\begin{proof}}{\begin{proof}[#1]}}{\end{proof}}
\theoremstyle{remark}
\newcommand{\IN}{\mathbb{N}}	
\newcommand{\IR}{\mathbb{R}}
\newcommand{\R}{\mathbb{R}}	
\newcommand{\F}{\mathcal{F}}	
\newcommand{\Pc}{\mathcal{W}}
\newcommand{\abs}[1]{\left|#1\right|}
\newcommand{\noqed}{\let\qed\relax}
\newcommand{\edgeLoad}[1][]{%
	\ifthenelse{\equal{#1}{}}
	{F^{\Delta}}
	{F^{\Delta}_{#1}}
}
\newcommand{\termTime}[1][]{%
	\ifthenelse{\equal{#1}{}}
	{\Theta}
	{\Theta_{\mathrm{#1}}}
}
\newcommand{\network}{\mathcal{N}}
\newcommand{\CharF}[1][]{%
	\ifthenelse{\equal{#1}{}}
	{\mathds{1}}
	{\mathds{1}_{#1}}
}
\newcommand{\edgesLeaving}[1]{\delta^+_{#1}}
\newcommand{\edgesEntering}[1]{\delta^-_{#1}}
\newcommand{\norm}[1]{\left\Vert#1\right\Vert}
\newcommand{\algoref}[1]{Algorithm~\ref{algo:#1}}
\newcommand{\eqnref}[1]{Equation~(\ref{eq:#1})}
\newcommand{\figref}[1]{\Cref{fig:#1}}
\newcommand{\tabref}[1]{Table~\ref{tab:#1}}
\newcolumntype{L}[1]{>{\raggedright\arraybackslash}m{#1}}
\newcolumntype{C}[1]{>{\centering\arraybackslash}m{#1}}
\newcolumntype{R}[1]{>{\raggedleft\arraybackslash}m{#1}}
\definecolor{darkgreen}{rgb}{0.2,0.8,0.55}
\definecolor{aliceblue}{rgb}{0.94, 0.97, 1.0}
\newcommand*\diff{\,\mathop{}\!\mathrm{d}} 
\DeclarePairedDelimiterX{\scalar}[2]{\langle}{\rangle}{#1, #2} 
\newcommand{\VI}{\textrm{VI}} 
\newcommand{\A}{\mathcal{A}} 
\tikzstyle{normalNode}=[draw, circle, fill=black, minimum size=1em]
\tikzstyle{terminalNode}=[draw, circle]
\tikzstyle{labeledNode}=[circle, draw, inner sep = 0.1em, minimum size=1em]
\tikzstyle{normalEdge}=[very thick, >=stealth]
\tikzstyle{holdoverEdge}=[normalEdge, dashed, black!60!white]
\tikzstyle{demandNode}=[inner sep=0em]
\tikzstyle{demandEdge}=[>=stealth, double, thick]
\tikzstyle{gadgetNode}=[draw, circle, fill=blue, scale=0.7]
\tikzstyle{gadgetEdge}=[>=stealth, thick]
\def\clap#1{\hbox to 0pt{\hss#1\hss}}
\title{Dynamic Traffic Assignment\\ for Electric Vehicles}
\author{Lukas Graf, Tobias Harks and Prashant Palkar}
\begin{document}
\maketitle

\begin{abstract}
We initiate the study of dynamic traffic assignment for electrical vehicles
addressing the specific challenges such as range limitations
and the possibility of battery recharge at predefined charging locations. 
We pose the dynamic equilibrium problem within the deterministic queueing model of Vickrey
and as our main result, we establish the existence of an energy-feasible dynamic equilibrium.
There are three key modeling-ingredients for obtaining this existence result: 
\begin{enumerate}
\item We introduce a \emph{walk-based} definition of dynamic traffic flows which allows for cyclic
routing behavior as a result of recharging events en route. 
\item We use  abstract convex feasibility sets in an appropriate function space
to model the energy-feasibility of used walks.
\item We introduce the concept of
\emph{capacitated dynamic equilibrium walk-flows} which generalize
the former unrestricted dynamic equilibrium path-flows. 
\end{enumerate}
Viewed in this framework, we show the existence of an energy-feasible dynamic equilibrium
by applying an infinite dimensional variational inequality, which in turn
requires a careful analysis of continuity properties of the network
loading as a result of injecting flow into walks.

We complement our
theoretical results by a computational study in which we design a fixed-point algorithm 
computing energy-feasible dynamic equilibria. We apply the algorithm to standard real-world
instances from the traffic assignment community illustrating the complex interplay of  resulting travel times, energy consumption
and prices paid at equilibrium.

\end{abstract}


\section{Introduction}
Electric vehicles (EVs) are a great promise for the coming 
decades in order to allow for mobility but at the same
time take measures against the climate change by reducing the emissions of classical combustion
engines.
The wide-spread operation of EVs, however, is by far not fully
resolved as the battery technology  comes with several complications, some of which
 listed  below:
\begin{itemize}
	\item\label{enum:bi} The limited battery power implies a limited driving range of EVs resulting
	in  complex resource-constrained routing behavior taking the 
	feasibility of routes w.r.t. the power consumption into account (cf.~\cite{DesaulniersEIS16,StorandtF12}). 
	\item\label{enum:charge} Feasible routes may contain cycles if
	the possibility of recharging at predefined charging stations is included (see~\cite{BaumSWZ17,Strehler17,StorandtF12}). The necessity of multiple recharging operations 
	is especially relevant for longer trips such as long-haul trucking or for the use of EVs in urban logistics~\cite{DesaulniersEIS16}.	
	\item The recharging  strategy itself can be quite complex involving  \emph{mode choices}
 ranging from relatively low-power supply modes (22 kW) up to high-power supply modes (350 kW) (cf.~\cite{EU21}). Different modes may come with substantially different recharging times and prices (cf.~\cite{Stat21}).
	 \item  For a selected recharge mode, the \emph{duration} of the recharge determines both, the resulting battery state (and hence the subsequent reach of the vehicle), and the corresponding total recharge price and, thus,  adds a further strategic dimension.
\end{itemize}
While some of the above challenges
have been partly addressed within the ``battery-constrained routing'' community (cf.~\cite{BaumSWZ17,DesaulniersEIS16,froger22,Strehler17,SchneiderSG14,StorandtF12} and references therein), the majority of these works rely on a \emph{static} and mostly \emph{decoupled} view on traffic assignment: Each vehicle is routed independently (subject to battery related side constraints)
and the interaction of vehicles in terms of congestion effects with increased travel times
is not considered. Only a few works (such as~\cite{Xiong18,Wang16,Zheng17}) take congestion effects of
routing EVs into account, yet, still relying on a static routing model.

In a realistic traffic system, vehicles travel \emph{dynamically} through the network and the route choices of vehicles
are mutually dependent as the propagation of traffic flow  leads to congestion at bottlenecks and in turn determines the route
choices to avoid congestion. 
This complex and self-referential dependency has been under
scrutiny in the traffic assignment community for a long time and it is usually resolved by  \emph{dynamic traffic assignments (DTA)} under which -- roughly speaking -- at any point in time no driver can   opt  to a better route.  As a result,  the actual equilibrium travel times do depend on the collective route choices of all vehicles and, even more strikingly, the equilibrium routes determine the actual power consumption profile of an EV  leading to a complex coupled dynamic system.
Note that emergent congestion effects are even relevant for the pure recharging process of an EV, since  with the rapid growth rates of EVs compared to the relatively scarce recharging infrastructure, significant waiting times at recharging stations are anticipated (cf.~\cite{EU21}).

DTA models have been studied in the transportation science community for more than 50 years  with remarkable success in deriving a concise  mathematical theory of dynamic equilibrium distributions, yet there is no such theory  for DTA models addressing the specific characteristics of
EVs. 
Let us quote a recent survey article by Wang, Szeto, Han and Friesz~\cite{WANG2018} that mentions the lack
of DTA models for the operation of EVs:
``To our best knowledge, a DTA model with path distance constraints for electric vehicles remains undeveloped; so do the corresponding solution algorithms.''

This research gap might have good mathematical reasons: virtually all known existence results in the DTA literature
rely on the assumption that feasible paths must be \emph{acyclic}  in order to obtain a well defined  \emph{path-delay operator} mapping the path-inflows to the experienced travel time (cf.  
~\cite{CominettiCL15,CominettiCO17, Friesz93,Koch11,ZhuM00,MeunierW10,Ran96}).
As explained above, the range-limitations of EVs  requires recharging stops and thus leads to
cyclic routing behavior with path length restrictions requiring a new approach to establish equilibrium existence.

\subsection{Our Contribution}
In this paper, we study a dynamic traffic assignment problem that 
addresses the operation of electrical vehicles including their 
range-limitations caused by limited battery energy and necessary recharging stops.
Our contributions can be summarized as follows:

\begin{enumerate}
\item We propose a  DTA model tailored to the operation of EVs that combines
the Vickrey deterministic queueing model with graph-based gadgets modeling complex
recharging procedures such as mode choices (low to high power supply) and  recharge durations.
A combined routing and recharging strategy of an EV can be reduced to choosing a  walk
within this extended network. 
\item A feasible walk may contain cycles (due to several recharging stops)
and the set of feasible walks that respect the battery-constraints may be quite complex. After establishing
some fundamental properties of the resulting network loading when flow is injected into walks, we introduce abstract
convex, closed and bounded feasibility sets in an appropriate function space to describe the resulting feasible
dynamic walk-flows. The set of such feasible dynamic walk-flows are then used
to set up the formal definition of a capacitated dynamic equilibrium in which also 
the monetary effect of  prices charged at recharging stations is integrated in the utility function of agents.
\item With the formalism of the network loading and the notion of a dynamic capacitated equilibrium,
we then proceed to the key question of equilibrium existence.
We show that the walk-delay operator that maps the walk-inflows to resulting travel times
is sequentially weak-strong continuous on the convex feasibility space (which corresponds to \emph{weakly-continuous} as 
previously used by Zhu and Marcotte~\cite{ZhuM00} for paths under the strict FIFO-condition).
This  allows us to apply a variational inequality
formulation by Lions~\cite{Lions} to establish the existence of dynamic equilibria.
While the general variational inequality approach dates back to Friesz et al.~\cite{Friesz93},
our result generalizes previous works on side-constraint
dynamic equilibria (e.g. Zhong et al.~\cite{zhong11}), because we do not assume a priori compactness
of the underlying convex restriction set, nor strict FIFO as in~\cite{ZhuM00}. 
\item We finally develop
a fixed-point algorithm for the concrete computation  of  energy-feasible dynamic equilibrium
and apply the algorithm to several real-world instances from the literature.
To the best of our knowledge, this work is among the first to  compute dynamic traffic equilibria
for electric vehicles and it can serves as the basis for evaluating the interplay between
congestion, travel times and used energy in a dynamic traffic equilibrium.
\end{enumerate}
\subsection{Related Work}
Our work touches upon several streams of the literature including
the routing aspect of individual EVs and the traffic assignment problem in the static and dynamic
setting.
\subparagraph{Routing Models and Algorithms for EVs.}
The algorithmic problem of computing optimal routes for EVs taking the limited range of the battery into account has been addressed
by~Storandt~\cite{Storandt12}.
Baum et al.~\cite{BaumDPSWZ20,BaumDWZ20}
also considered several variants of constrained shortest path problems 
and take additionally speed variations of EVs into account.
Funke and Storandt~\cite{StorandtF12} studied routing problems taking
possible stops at charging locations into account.
Desaulniers et al.~\cite{DesaulniersEIS16} and Schneider et al.~\cite{SchneiderSG14}
considered EV routing problems with side-constraints such as time windows and
derive integer-programming methods for their solution. Froger et al.~\cite{froger22}
studied EV routing problems and explicitly model hard capacities at recharge stations.
They derive a centralized optimization model and apply an integer-programming method to solve it.
It is worth mentioning that Froger et al.~\cite{froger22} give an excellent survey on the state
of the art for  ``EV-routing problems'' and we refer to this paper for further references.
All these mentioned works, however, do not integrate strategic route choices with coupled congestion effects in their models.

Funke et al.~\cite{FunkeNS15,FunkeNS16}
further studied the location of charging stations.
They reduced the problem -- assuming a static decoupled routing system -- 
to the hitting set problem and derived solution algorithms.
Xiong et al.~\cite{Xiong18} and Zheng et al.~\cite{Zheng17} modeled the location of charging stations
as a bilevel optimization problem modeling the lower level as a static
discrete and continuous congestion game, respectively.
\subparagraph{Static Traffic Assignment.}
The classical mathematical approaches used in the transportation science literature to predict traffic distributions rely on static traffic assignment models based on aggregated static multi-commodity flow formulations some of which allow convex side-constraints  (cf.~\cite{Larsson99,Patriksson1994tap,Sheffi,Wang16,wardrop52}). While static models have seen several decades of development and practical use, they abstract away too many important details and, thus, become less attractive. For the modeling
of traffic assignment of EVs, static models seem not  appropriate as  the time aspect
is crucial  for modeling the battery behavior and the corresponding routing of an EV.

\subparagraph{Dynamic Traffic Assignment.}
Flows over time are an important and more realistic generalization of classical static
flows since they are capable to incorporate
the critical time component (cf.~\cite{Ford62} and~\cite{Vickrey69}
for one of the earliest papers considering a game-theoretic perspective for flows over time.) 
Since then flows over time have been a central topic
in the transportation science literature, cf.~\cite{Friesz93,Ran96,TRC11} for further references. 
Classical works in the area of DTA models such as Friesz et al.~\cite{Friesz93}
introduced a variational inequality approach to characterize dynamic 
equilibria. After the work of Friesz et al., several works further analyzed the existence
of dynamic equilibria. Zhu and Marcotte~\cite{ZhuM00} established the existence of dynamic equilibria under  a strict FIFO condition. This strict FIFO condition is not satisfied for the Vickrey queueing model
as shown by Cominetti et al.~\cite{CominettiCL15} and first existence results
for the Vickrey model were shown by Koch and Skutella~\cite{Koch11} assuming single-commodity
instances and piece-wise constant inflow rates.  Han, Friesz and Yao~\cite{Han2013} then showed
existence of dynamic equilibria for multiple commodities, general inflow-rate functions and allowing for a simultaneous route and departure choice.
They formulated the problem (following~\cite{Friesz93})
as an infinite dimensional variational inequality problem
and then used an existence theorem of Browder~\cite{BROWDER1968}.
Browder's theorem requires continuity of the path-delay operator 
(which was already shown by the same authors in Han et al.~\cite{Han2013a,Han2013b})
and compactness of the underlying path-flow space.
The latter compactness property is in general not fulfilled and Han et al.~\cite{Han2013}
used a discretization scheme (restricting to price-wise constant path inflows which are compact)
and showed that the resulting sequence of discretized dynamic equilibrium flows
converges to a dynamic equilibrium for the original model.
Cominetti et al.~\cite{CominettiCL15} also established an existence result
for dynamic equilibria within the general multi-commodity Vickrey model using also an infinite dimensional variational inequality
formulation. They, however, used a theorem by Br\'{e}zis~\cite{brezis1968},
which does not rely on compactness of the underlying  path inflow space
but requires a stronger form of continuity (sequential weak-strong continuity) of the path-delay operator.
Zhong et al.~\cite{zhong11} studied the existence of dynamic equilibria
for traffic models with side-constraints. They also used a variational inequality formulation
but assumed a strong FIFO condition to hold and further assumed that the restriction
set is compact.

For the computation of dynamic equilibria it is a priori not clear how to obtain convergence
of a discretization scheme for an arbitrary flow over time (disregarding
equilibrium properties) within the Vickrey model unless some restrictive assumptions
such as fixed paths are assumed, see Sering et al.~\cite{SeringKZ21}.
While a computational study by Ziemke et al.~\cite{Limit2020}   shows some positive results with regards to convergence, Otsubo and Rapoport~\cite{Otsubo08} describe
``significant discrepancies'' between the continuous and a discretized
solution for the Vickrey model.  To overcome the
discontinuity issue, Han et al.~\cite{Han2013a} reformulated
the model using a PDE formulation. They
obtained a discretized model whose limit points
correspond to dynamic equilibria of the continuous model.
The algorithm itself, however, is numerical in the sense that a precision
is specified and within that precision an approximate
equilibrium is computed.
The overall discretization approach mentioned above stands in line with a class of numerical algorithms based on fixed point iterations computing approximate equilibrium flows within a certain numerical precision, see Friesz and Han~\cite{Friesz19} for a recent survey.

Iryo and Smith~\cite{IRYO2017} and Cominetti et al.~\cite{CominettiCL15} independently showed uniqueness of equilibrium travel times in the Vickrey model.
Sering and Vargas-Koch~\cite{Sering2019} incorporated spillbacks in the Vickrey model. 
The long term behavior of dynamic equilibria with infinitely lasting constant inflow rate at a single source was studied by  Cominetti, Correa and Olver~\cite{CominettiCO17}.  They introduced the concept of a steady state and showed that dynamic equilibria always reach a stable state provided that the inflow rate is at most the capacity of a minimal $s$-$t$ cut.  Olver, Sering and Koch~\cite{OlverSK21}  further characterized steady state properties of dynamic Nash equilibria under less restrictive assumptions.
For the instantaneous dynamic equilibrium concept in the Vickrey model, Graf, Harks and Sering~\cite{GrafHS20} 
proved equilibrium existence results and Graf and Harks~\cite{GrafH22} gave the first  finite time algorithm for dynamic equilibrium computation.
For further recent work w.r.t. price of anarchy in the Vickrey model, see~\cite{BhaskarFA15,CorreaCO19,Osterwijk21}.


\section{The Model}\label{sec:Model}
We now introduce our model for electric vehicles
in which we combine the Vickrey deterministic queuing model
with graph-based extensions in order to model
the key characteristics of the battery recharging technology for electric vehicles.
The complex strategic decision of an EV
involves
\begin{enumerate}
\item the route choice possibly involving necessary recharging stops and cycles,
\item the mode choice of the battery-recharge (e.g., Level 1, 2, 3),
\item the actual duration of each battery-recharge en route, which determines the resulting
battery state and the recharge cost while also influencing the EV's travel time.
\end{enumerate}
We model this complex decision space by using several graph-based gadgets
inside the Vickrey network model leading to the \emph{battery-extended network}. This construction is, in a sense, a local version of time-extended networks. That is, instead of making copies of the whole network we only have to duplicate the recharging nodes themselves such that essentially each copy corresponds to some choice of recharging mode and duration at that node. 
This way, we can reduce the complex strategy choice of an EV to selecting
a \emph{feasible  walk} inside the battery-extended network.
In the following, we first start with the physical Vickrey flow model
and then discuss the battery-extended network.

\subsection{The Physical Vickrey Network Model}

The physical Vickrey network model is based on a finite directed graph $G'=(V',E')$ with positive rate capacities $\nu_e\in \IR_+$
and positive transit times $\tau_e\in \IR_+$ for every edge $e \in E'$. There is a finite set of commodities~$I=[n]\coloneqq\{1,\dots,n\}$, each with a commodity-specific source node $s_i \in V'$ and a commodity-specific sink node $t_i \in V'$.  The (infinitesimally small) agents of every commodity $i\in I$ 
each represent a vehicle (electric or combustion engine) and they enter the network according to a bounded and integrable network inflow rate function $u_i:\IR_{\geq 0}\rightarrow\IR_{\geq 0}$ with bounded support. We denote by $T \coloneqq \sup\set{\theta \in \IR_{\geq 0} | \exists i \in I: u_i(\theta) > 0}$ the last time a vehicle enters the network.
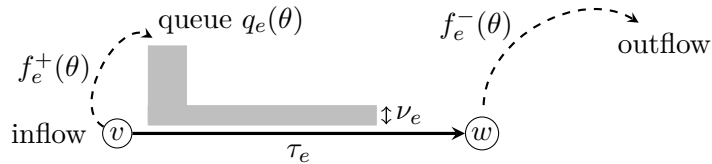
\begin{figure}[h!]
	\begin{center}
		\begin{adjustbox}{max width=0.7\textwidth}
			\begin{tikzpicture}[scale=1.2]

		\begin{scope}
		
		\node (s1) at (0, 0) {};
		\draw (s1) +(0, 0) node[labeledNode] (s1) {$v$};
		\node[rectangle, color=gray!50,draw, minimum width=3cm, minimum height=0.1cm,fill={gray!50}] at (1.59,0.2) {};
		\node[rectangle, color=gray!50,draw, minimum width=0.5cm, minimum height=1cm,fill={gray!50}] at (0.55,0.55) {};

		\draw (s1) +(4, 0) node[labeledNode] (s3) {$w$}
		edge[normalEdge, <-] node[below] {$\tau_e$}
		node[above] {} (s1)

		;
		;
		\node at (-0.75,0) [] (inflow) { inflow};
		\node at (1.25,1.25) [] (queue) { queue $q_e(\theta)$};
		\node at (0.5,1.0) [] (qu) { };
		\draw (s1) edge[thick,>=stealth,dashed,bend left=90,->] node [left] {$f_e^+(\theta)$} (qu);
		\node at (6,1) [] (outflow) { outflow};
		\node at (4,0.2) [] (out) { };
		\draw (out) edge[thick,>=stealth,dashed,bend left=60,->] node [left] {$f_e^-(\theta)\;\;$} (outflow);
		\draw [<->, line width=0.5pt] (2.95,0.1) -- (2.95,0.3);
		\node at (3.2,0.2) [] (nu) { $\nu_e$};
			\end{scope}
\end{tikzpicture}			
		\end{adjustbox}
	\end{center}
	\vspace{-0.5cm}
	\caption[format=hang]{An edge $e=vw$ with a nonempty queue of size $q_e(\theta)$ at time $\theta$.
	The terms $f_e^+(\theta)$ and $f_e^+(\theta)$ denote the in- and outflow 
	rates at time $\theta$, respectively.}
	\label{fig:queue}
	
\end{figure}
If the total inflow into an edge  $e=vw\in E'$ exceeds the rate capacity
$\nu_e$, a queue builds up and agents  need to 
wait in the queue before they are forwarded along the edge. The total travel time
along $e$  is thus composed of the waiting time spent in the queue plus the physical transit time $\tau_e$.
A schematic illustration of the inflow and outflow mechanics of an edge $e$ is given in \Cref{fig:queue}.
The Vickrey model is one of the corner stone models in DTA and has been analyzed in the transportation science literature for decades, see 
Li, Huang and Yang~\cite{LI2020} for an up to date research overview of the past 50 years.

\subsection{The Battery-Extended Network}\label{subsec:charging}
For vehicles corresponding to a commodity $i\in I$, we assume that
they all have an equal initial battery state of level $b_i>0, i\in I$. 
This assumption is without
loss of generality as we can introduce copies of commodities
with the same source-sink pair but different initial battery states.
If an agent of commodity $i$ travels along an edge $e\in E$, this comes with a battery cost of $b_{i,e}\in \R$
which may be positive (energy consumption) or negative (recuperation). This battery cost is a fixed value for every commodity-edge pair $(i,e)$ and, in particular, independent of the actual flow on the edge. 
The maximum battery capacity is denoted by $b_i^{\max}$.
Note that the assumption that battery cost is independent of
congestion is well justified, since the engine of an EV  completely
turns off when a vehicle stands still leading to negligible power consumption while queueing up. Yet, the chosen route
does depend on the perceived travel time and, thus, also the realized power consumption does (indirectly) depend on congestion.

Recharging may either take place at home before the trip
actually starts (resulting in a high initial battery state $b_i$) or at public or commercial
charging stations, e.g. at public parking spots or at conventional gas stations. 
An important difference between recharging stations is the offered \emph{mode} and \emph{price} charged for recharging. 
Modes of recharge may range from relatively low power supply (up to  3.7 kilowatts (kW), Level 1) to medium supply (up to  22 kW, Level 2) up to high supply (operates at powers from 25 kW to more than 350 kW, Level 3) or even complete
battery swaps. Each mode may result in different recharging times
for a fixed targeted state of charging (SOC), and also  the resulting prices may  significantly vary not only among modes but also
among recharge locations. The statistics for 2021 for the recharging prices in Germany show for instance a significant price span for the ``cents per kWh tariff'' ranging from 35 Euro cents at public stations to 79 cents at private stations (cf.~\cite{Stat21}). 
Besides the recharge location and mode choice, the  planned duration for the recharge is an important
decision as it directly affects the journey time, the resulting SOC and the price paid.
For an agent with a high preference for fast travel times, it might payoff
to take a detour to some recharging location offering an expensive
Level 3 access resulting in a high battery SOC within a short time span.\footnote{For instance,  Tesla Model S, Renault Zoe, BMW i3 can be recharged at high voltage supplies to roughly  80\% battery capacity after a few minutes, whereas at houshold supplies, recharge to a comparable capacity
	takes several hours. For more on the mathematical modelling of precise charging functions as functions mapping recharge time (and current battery state) to resulting battery state, see~\cite{BaumDPSWZ20}.}
Summarizing, the selection of a recharging station, a recharge mode
and the duration of the actual recharge is an important strategic decision of EV drivers.

Given a tariff for recharging,\footnote{Pricing happens frequently on the basis of a per-minute tariff, other tariffs
charge on a per kWh basis or on a per-session basis, see~\cite{Stat21} for an overview on pricing schemes
in Germany.} 
we can model the set of possible combinations of recharge times, battery states and recharge prices
via tuples of the form $(\tau, b_i, p_i), i\in I$, where $\tau\in \IN$ is the time (in minutes)
spent for recharging, $b_i\equiv b_i(\tau)$ is the resulting increase of the battery level and $p_i\equiv p_i(\tau)\in \R_+$
is the charged price for a vehicle of commodity $i\in I$. 
Note that the functions $b_i(\tau), p_i(\tau)$ can be directly derived from the SOC  function and the resulting tariffs, respectively (cf. Xiao et al~\cite{Xiao2021}).

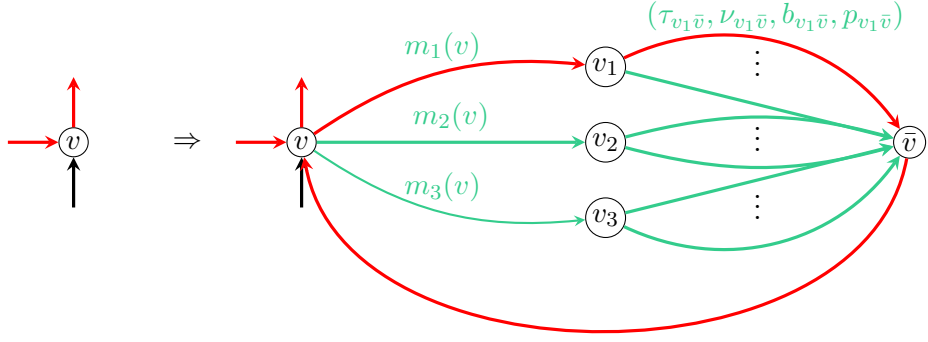
\begin{figure}
	\begin{center}
		\begin{tikzpicture}
			\begin{scope}[xshift=-3cm,yshift=-8mm]
				\node[labeledNode] (s) {$v$};
				\path (0, 1) node[]  (v1)  {}
				edge[<-, red,normalEdge]  (s);
				\path (0, -1) node[]  (v1)  {}
				edge[->, normalEdge]  (s);
				\path (-1, 0) node[]  (v1)  {}
				edge[->,red, normalEdge]  (s);
				\path (4, 0) node[] (v2) {};
				\path (1.5, 0.0) node {$\Rightarrow$};
			\end{scope}

			\begin{scope}[xshift=0,yshift=-8mm]
				\node[labeledNode] (s) {$v$};
				
				\path (0, 1) node[]  (v1)  {}
				edge[<-, red,normalEdge]  (s);
				\path (0, -1) node[]  (v1)  {}
				edge[->, normalEdge]  (s);
				\path (-1, 0) node[]  (v1)  {}
				edge[->, red, normalEdge]  (s);
				
				\path (4, 1) node[labeledNode]  (v1)  {$v_1$}
				edge[<-, normalEdge, red,bend right=20,label={above:$$}] node[above,darkgreen] {$m_1(v)$} (s);
				\path (4, 0) node[labeledNode] (v2) {$v_2$}
				edge[<-, normalEdge,label={above:$$},darkgreen] node[above] {$m_2(v)$} (s);
				\path  (4, -1) node[labeledNode] (v3) {$v_3$}
				edge[<-, gadgetEdge, bend left=20,label={above:$$},darkgreen] node[above] {$m_3(v)$} (s);
				\path  (8, 0) node[labeledNode] (t) {$\bar v$} 
				edge[<-, red,normalEdge, bend right=40] node[above,darkgreen] {$(\tau_{v_1\bar v},\nu_{v_1\bar v},b_{v_1\bar v},p_{v_1\bar v})$} (v1) 
				edge[<-, normalEdge, bend left=0,darkgreen] (v1)
				edge[<-, normalEdge, bend left=15,darkgreen] (v2)
				edge[<-, normalEdge, bend right=15,darkgreen] (v2)
				edge[<-, normalEdge, bend left=40,darkgreen] (v3)
				edge[<-, normalEdge, bend left=0,darkgreen] (v3)
				edge[->, red,normalEdge, bend left=80] (s) ;
				;
				\path (6, 1.15) node {$\vdots$};
				\path (6, 0.15) node {$\vdots$};
				\path (6, -0.75) node {$\vdots$};
				
			\end{scope}
		\end{tikzpicture}
	\end{center}
	\caption{Left: Initial vertex $v$ with an EV using a walk (red edges) without recharging. Right: Expansion of node $v$ using a graph-based gadget modeling the recharging options.  There are three recharging modes, say a low, medium or high power supply (Level 1, Level 2, Level 3) leading to the first three  edges $m_1(v),m_2(v), m_3(v)$. The subsequent parallel edges
		model the different charging times and resulting increase of the battery levels.
		The red edges describe one cycle inside the gadget and represent a recharge using mode $1$ 
		for time $\tau_{v_1\bar v}$ with resulting battery level increase of $|b_{v_1\bar v}|$
		at  price $p_{v_1\bar v}$. }\label{fig:gadget}
\end{figure}
Formally, 
recharging stations are identified with subsets of nodes of $V'$ denoted by $C_i\subseteq V', i\in I$, where we explicitly allow that $C_i$ depends on $i\in I$ to allow for different recharging technologies, that is, 
some vehicles may only
recharge at stations that have the required technology. By introducing copies
of commodities it is again without loss of generality to assume that every agent of commodity $i$ uses the same technology. For a recharging location $v\in C_i, i\in I$, we introduce a subgraph as depicted in \Cref{fig:gadget}.
The node $v\in C_i$ represents the original charging station viable for $i\in I$, 
the parallel edges leaving $v$ correspond to the different recharging modes available and
the subsequent edges model the different recharging times with corresponding recharge states
and prices.\footnote{For the sake of a simple illustration we allow
		parallel arcs but by introducing further dummy nodes subdividing an edge, one obtains a simple graph so that an edge can uniquely be represented by 
		a tuple  $vw$ for $v,w\in V$.} 
		At the end of this series-parallel graph-gadget, a backwards arc towards $v$ is introduced.	We associate with every edge
a tuple of the form  $(\tau_e,\nu_e, b_{i,e},p_{i,e})$, where
$\tau_e$ is the travel time (or recharge duration for a gadget edge), $\nu_e$ the
inflow capacity, $b_{i,e}$ the battery recharge and $p_{i,e}$ the price paid for the 
used recharge on edge $e$. Note that we have $p_{i,e}\equiv p_{i,e}(\tau_e)$ and $b_{i,e}\equiv b_{i,e}(\tau_e)$ for corresponding pricing and SOC functions, respectively.
 Any cycle in such a  gadget is in  one-to-one correspondence to
a mode $(e)$, recharge duration  $(\tau_e)$,  battery recharge $(b_{i,e})$ and price decision $(p_{i,e})$. If a mode is not compatible with the recharging technology used by EVs of type $i\in I$, we can set $b_{i,e}=+\infty$ to close the corresponding recharge edge for $i\in I$.
For every $i\in N$, we denote  the newly constructed vertices and edges by $V(C_i), E(C_i), i\in I$. 
\begin{defn}
The \emph{battery-extended network} is a tuple $\network=(G,\nu,\tau,b,p)$, where 
\begin{itemize}
\item $G=(V,E)$ is the  battery-extended graph with $V:=V'\cup_{i\in I}V(C_i)$
and $E:=E'\cup_{i\in I}E(C_i))$,
\item $\nu_e\in \R_+, e\in E$ denotes the inflow-capacities,
\item $\tau_e\in \R_+, e\in E$ denotes the travel times or recharge durations,
\item $b_{i,e}\in \R, i\in I,e\in E$ denotes the battery-consumption values,
\item $p_{i,e}\in \R_+, i\in I,e\in E$ denotes the recharge prices.
\end{itemize}
An $s_i$,$t_i$-walk in the battery-extended graph $G$  corresponds to a route choice in the original graph $G'$ together with recharging decisions
corresponding to cycles inside the gadgets, see Fig.~\ref{fig:gadget} for an example.
\end{defn}

\subsection{Feasible Walks in the Battery-Extended Network}
Assume that we are given the battery-extended network $\network$
 as described in Subsection~\ref{subsec:charging}.
 Let $W=(e_1,\dots, e_k)$ be a sequence of edges in the graph $G$.
We call $W$ a \emph{walk} if $e_j=v_{j-1}v_j$ for all $j\in [k]$ for $k\in \IN$. 
We assume that all walks considered in this paper are finite
and just use the term \emph{walk} to denote a finite walk. 
Note, that a walk is allowed to contain self-loops and/or nontrivial cycles as required for
a recharge operation.
We denote by $k_W \coloneqq k$ the length of $W$ and by $e^W_j$ the $j$-th edge of walk $W$.
$W$ is an $s_i$,$t_i$-\emph{walk}, if  $v_0=s_i$ and $v_k=t_i$. We denote by $\Pc_i$ the set of all $s_i$,$t_i$-walks and assume that this set is always non-empty, i.e. that every commodity has at least one walk from its source to its sink. Finally, we denote by $\Pc \coloneqq \set{(i,W) | i \in I, W \in \Pc_i}$ the set of all commodity-walk pairs.
The set $\Pc_{i}$ represent the set of strategies for a particle of commodity $i\in I$ and, thus, a complete strategy profile is a family of walk-flows for all commodities and all walks such that for every commodity the sum of its walk-flows matches its network inflow rate. We denote the set of all such strategy profiles by
\begin{align*}
	K \coloneqq \Set{h \in \big(L^2_{\geq 0}([0,T])\big)^{\Pc} | \forall i \in I: \sum_{W\in \Pc_{i}}h_{i}^W(\theta) = u_i(\theta) \text{ for almost all } \theta \in \IR_{\geq 0}},
\end{align*}
where $L^2_{\geq 0}([0,T])$ denotes the set of $L^2$-integrable non-negative functions
over the interval $[0,T]$
and any $h \in K$ is called a \emph{walk-flow}.
The crucial point when modeling electric vehicles is the energy-feasibility of a walk,
that is, the battery must not fully deplete while traversing a walk.
We capture this property in the following definition.
\begin{defn} \label{def:battery}
A walk $W=(e_1,\dots, e_k)\in \Pc_i$ is \emph{energy-feasible} for commodity $i\in I$, if the following condition is satisfied:
\begin{equation}\label{eq:battery-feasible}
	b_W(v_j)\in [0,b_i^{\max}] \text{ for all }j=1,\dots, k,
\end{equation}
where  $b_W(v_j)$ is defined inductively as
\begin{equation} 
	b_W(v_1)=b_i \text{ and  } b_W(v_{j+1})=\min\{b_W(v_j)-b_{i,e_{j+1}},b_i^{\max}\}.
\end{equation}
\end{defn}

\begin{figure}
	\begin{center}
		\begin{tikzpicture}[scale=1.0]
			
			\node[labeledNode] (v1) at (0, 0) {$s_i$};
			\draw (v1) ++ (2, 0) node[labeledNode] (v2) {$v_1$};
			\draw (v2) ++ (2, 0) node[labeledNode] (v3) {$v_2$};
			\draw (v3) ++ (0, -2) node[labeledNode] (v4) {$v_3$};
			\draw (v3) ++ (0, 2) node[labeledNode] (v5) {$t_i$};

			\draw (v4) ++ (0, -1) node[labeledNode] (v6) {$v_4$};
			\draw (v4) ++ (0, -2) node[labeledNode] (v7) {$v_5$};

			\draw[normalEdge,->] (v1) -- node[below] {\color{blue} $1$ } (v2);
			\draw[normalEdge,->] (v1) -- node[above] {\color{blue} $5$ } (v5);
			\draw[normalEdge,->] (v1) -- node[above] {\color{blue} $5$ } (v4);
			\draw[normalEdge,->] (v2) -- node[below] {\color{blue}$1$} (v3);
			\draw[normalEdge,->] (v3)  -- node[right] {\color{blue}$1$} (v4);
			\draw[normalEdge,->] (v3)  -- node[right] {\color{blue}$2$} (v5);
			\draw[normalEdge,->] (v4)  -- node[right] {\color{blue}$1$} (v2);
			\draw[normalEdge,->,darkgreen] (v4)  -- node[right] {\color{blue}$0$} (v6);
			
					\path (v6) edge [out=-45,in=+45,very thick, >=stealth,->,darkgreen] node[right] {\color{blue}$-4$} (v7);
			\path (v6) edge [out=-145,in=+145,very thick, >=stealth,->,darkgreen] node[left] {\color{blue}$-2$} (v7);
			
								\draw[->,normalEdge,darkgreen] (v7) .. controls (2,-4) and (2,-2) .. node[left] {\color{blue}$0$} (v4);
		\end{tikzpicture}
		\begin{tikzpicture}[scale=1.0]
			
			\node[labeledNode] (v1) at (0, 0) {$s_i$};
			\draw (v1) ++ (2, 0) node[labeledNode] (v2) {$v_1$};
			\draw (v2) ++ (2, 0) node[labeledNode] (v3) {$v_2$};
			\draw (v3) ++ (0, -2) node[labeledNode] (v4) {$v_3$};
			\draw (v3) ++ (0, 2) node[labeledNode] (v5) {$t_i$};

			\draw (v4) ++ (0, -1) node[labeledNode] (v6) {$v_4$};
			\draw (v4) ++ (0, -2) node[labeledNode] (v7) {$v_5$};

			\draw[normalEdge,->] (v1) -- node[above] {\color{blue} $5$ } (v5);
			\draw[normalEdge,->] (v1) -- node[above] {\color{blue} $5$ } (v4);
			\draw[normalEdge,red,->] (v1) -- node[below] {\color{blue} $1$ } (v2);
			\draw[normalEdge,red,->] (v2) -- node[below] {\color{blue}$1$} (v3);
			\draw[normalEdge,red,->] (v3)  -- node[right] {\color{blue}$1$} (v4);
			\draw[normalEdge,red,->] (v3)  -- node[right] {\color{blue}$2$} (v5);
			\draw[normalEdge,red,->] (v4)  -- node[right] {\color{blue}$1$} (v2);
			\draw[normalEdge,->,red] (v4)  -- node[right] {\color{blue}$0$} (v6);
			
			\path (v6) edge [out=-45,in=+45,very thick, >=stealth,->,red] node[right] {\color{blue}$-4$} (v7);
			\path (v6) edge [out=-145,in=+145,very thick, >=stealth,->,darkgreen] node[left] {\color{blue}$-2$} (v7);
			
								\draw[->,normalEdge,red] (v7) .. controls (2,-4) and (2,-2) .. node[left] {\color{blue}$0$} (v4);

		\end{tikzpicture}

	\end{center}
	\caption{Example of an instance with start node $s_i$ and sink  node $t_i$, $b_i=3$, $b_i^{\max}=4$.
	The green edges  represent the recharging gadget.
		The blue numbers at edges indicate the power consumption values $b_{i,e}$. 
		The shortest energy-feasible walk (assuming positive travel times) is 
		illustrated with red edges on the right which contains two  simple cycles
		$C_1:=\{v_3,v_4,v_5,v_3\}$ and $C_2:=\{v_1,v_2,v_3,v_1\}$, where the first cycle is contained in the recharging gadget
		and represents a mode and duration choice.}\label{ex:walk}
\end{figure}
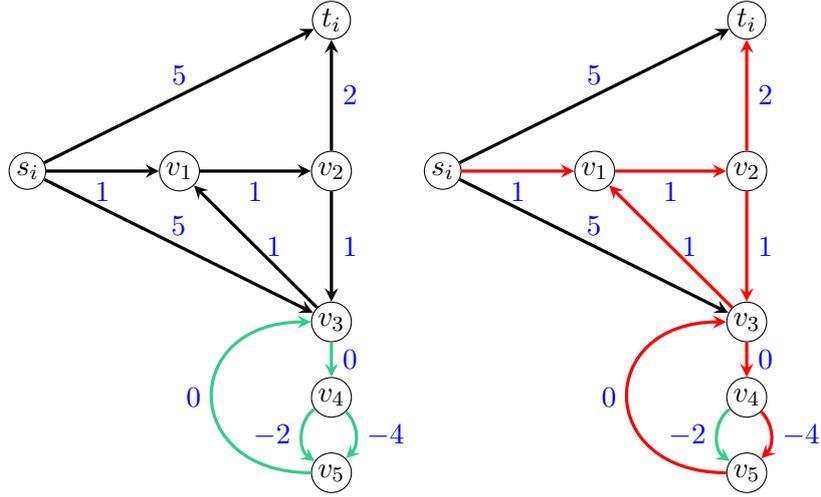

We assume that for every $i\in I$ there is at least one energy-feasible  walk and
that their collection is  denoted by
\[ \Pc_{i,b}:=\{W\in \Pc_{i}\vert W \text{ satisfies }\eqref{eq:battery-feasible}\}.\]

In \Cref{ex:walk}, we give an example illustrating that walking along cycles might indeed be necessary
to reach the sink. 
The set $\Pc_{i,b}$ represent the set of enery-feasible strategies for a particle of commodity $i\in I$. 
We further define $\Pc_{b} =\{(i,W)\vert i\in I, W\in \Pc_{i,b}\}$ to be the set
of commodity and battery-feasible walk pairs. Note that the set $\Pc_{b}$ need not be finite.

Now, a complete energy-feasible strategy profile is a family of walk-flows for all commodities and all walks such that for every commodity the sum of its walk-flows matches its network inflow rate.

\section{Dynamic   Equilibria with Convex  Constraints }
So far, we have reduced the strategy space of every
agent involving the routing and recharging decisions to the set of feasible walks 
inside the battery-extended graph $G$. What is still missing to formally introduce
the traffic assignment problem, or equivalently, the dynamic equilibrium problem,  is the
precise form of the utility function for an agent. We assume that agents want to
travel from $s_i$ to $t_i$ but
have preferences over travel time and recharge prices.
While the recharge prices can be directly derived from the chosen walk $W$,
the resulting travel time can only be described, if the walk-choices of all agents have
been unfolded over time giving the resulting queueing times of a walk.
This dynamic unfolding of the traffic inflow is usually termed as the \emph{network loading}
which is discussed in the following subsections.
\subsection{Edge-Walk-Based Flows over Time}
Given a feasible walk-flow $h\in K$, we develop the theoretical basis for the resulting \emph{network loading}. This network loading provides then the basis for a travel time function $\mu_i^W: \IR_{\geq 0} \to \IR_{\geq 0}$ which for every time $\theta$ provide us with the travel time for a particle entering walk $W$ at time $\theta$. These functions will then be used for our dynamic equilibrium concept which takes energy-feasibility of walks and their resulting travel time into account. 

Let 
$\mathcal{R}:=\set{(i,W,j) | i \in I, W \in \Pc_i, j \in [k_W]}$
denote the set of triplets consisting of the commodity identifier, walk and edge position in the 
walk, respectively. A flow over time is then a tuple $f = (f^+,f^-)$, where
$f^+, f^- \in \big(L^2_{\geq0}(\IR_{\geq 0})\big)^{\mathcal{R}}$  are vectors of $L^2$-integrable, non-negative functions modelling the  inflow rate $f^{W,+}_{i,j}(\theta)$ and  outflow rate $f^{W,-}_{i,j}(\theta)$ of commodity $i$ on the $j$-th edge of some walk $ W\in \Pc_i$. 
For any such flow over time we define the \emph{aggregated} edge in- and outflow rates of an edge $e \in E$ as
\begin{equation}\label{eq:agg-in} 
	f_{e}^+(\theta) \coloneqq \sum_{\substack{(i,W,j) \in \mathcal{R}:\\e^W_j = e}}f^{W,+}_{i,j}(\theta)
	\quad \text{ and } \quad
	f_{e}^-(\theta) \coloneqq \sum_{\substack{(i,W,j) \in \mathcal{R}:\\e^W_j = e}}f^{W,-}_{i,j}(\theta)
\end{equation}
and the cumulative edge in- and outflows by
\begin{align*}
	F_{i,j}^{W,+}(\theta):=\int_{0}^{\theta} f_{i,j}^{W,+}(z)dz \quad \text{ and } \quad F_{i,j}^{W,-}(\theta):=\int_{0}^{\theta}  f_{i,j}^{W,-}(z)dz
\end{align*}
as well as 
\begin{align*}
	F_{e}^+(\theta):=\int_{0}^{\theta} f_{e}^+(z)dz \quad \text{ and } \quad F_{e}^-(\theta):=\int_{0}^{\theta}  f_{e}^-(z)dz.
\end{align*}
Note, that $F^+_e, F^-_e, F_{i,j}^{W,+}$ and $F_{i,j}^{W,-}$ are non-decreasing, absolute continuous functions which satisfy
\begin{align*} 
	F_{e}^+(\theta) = \sum_{\substack{(i,W,j) \in \mathcal{R}:\\e^W_j = e}}F^{W,+}_{i,j}(\theta)
	\quad \text{ and } \quad
	F_{e}^-(\theta) = \sum_{\substack{(i,W,j) \in \mathcal{R}:\\e^W_j = e}}F^{W,-}_{i,j}(\theta).
\end{align*}
Furthermore, we define the \emph{queue length} of an edge $e$ at time $\theta$ by
\begin{align}\label[cons]{eq:Cont-FlowDefProperties-QueueLengthWithF}
	q_e(\theta) \coloneqq F^+_e(\theta) - F^-_e(\theta+\tau_e) & \quad \text{ for all } \theta \in \IR_{\geq 0}.
\end{align}
For any flow particle entering an edge $e=vw$ at time time $\theta$, its travel time on this edge is
\begin{align*}
	c_e(\theta) \coloneqq \tau_e + \frac{q_e(\theta)}{\nu_e}
\end{align*}
and its exit time of edge $e$ is given by
\begin{equation}\label{eq:travel} 
	T_e(\theta) \coloneqq \theta+c_e(\theta).
\end{equation}

Now, given some walk-flow $h \in K$ we call a flow over time $f$ a \emph{feasible flow over time associated with $h$} if it satisfies the following \cref{eq:FlowCons-Nodes,eq:FlowCons-Source,eq:RespCapacity,eq:WeakFlowCons-Edges,eq:FlowCons-Edges}:

The walk-flows of $h$ and $f$ match, i.e. for every $i \in I$, $W \in \Pc_i$ we have
\begin{align}\label[cons]{eq:FlowCons-Source} 
	f^{W,+}_{i,1}(\theta) = h_i^W(\theta) \text{ for almost all } \theta \in \IR_{\geq 0}.
\end{align}
The flow satisfies a balancing constraint at every node intermediate node, i.e. for every $i \in I$, $W \in \Pc_i$ and any $1 \leq j < k_W$ we have
\begin{align}\label[cons]{eq:FlowCons-Nodes} 
	f^{W,-}_{i,j}(\theta) = f^{W,+}_{i,j+1}(\theta) \text{ for almost all } \theta \in \IR_{\geq 0}.
\end{align}
The aggregated outflow respects the edges capacity, i.e. for every edge $e$ we have
\begin{align}\label[cons]{eq:RespCapacity} 
	f^-_e(\theta+\tau_e) \leq \nu_e \text{ for almost all } \theta \in \IR_{\geq 0},
\end{align}
as well as weak flow conservation over edges, i.e. for every edge $e$ we have
\begin{align}\label[cons]{eq:WeakFlowCons-Edges} 
	F^-_e(\theta+\tau_e) \leq F^+_e(\theta) \text{ for all } \theta \in \IR_{\geq 0}.
\end{align}
And, finally, the flow has to satisfy the following link transfer equation for every $i \in I$, $W \in \Pc_i$ and any $1 \leq j \leq k_W$:
\begin{align}\label[cons]{eq:FlowCons-Edges}
	F^{W,-}_{i,j}\left(T_{e^W_j}(\theta)\right) = F^{W,+}_{i,j}(\theta) \text{ for all } \theta \in \IR_{\geq 0}.
\end{align}

It turns out that every walk-flow $h \in K$ has a \emph{unique} associated feasible flow over time which we can obtain by a natural network loading procedure.

\begin{lemma}\label{lemma:UniqueNetworkLoading}
	For any $h \in K$, there is a unique (up to changes on a subset of measure zero) associated flow over time $f$.
\end{lemma}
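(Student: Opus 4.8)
The plan is to reduce the whole network loading to the classical single-edge deterministic point-queue operator and to break the apparent circularity in \cref{eq:FlowCons-Source,eq:FlowCons-Nodes,eq:RespCapacity,eq:WeakFlowCons-Edges,eq:FlowCons-Edges} by an induction over time, exploiting that all transit times are uniformly positive, so that $\tau_{\min}\coloneqq\min_{e\in E}\tau_e>0$.

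\textbf{Single-edge operator.} First I would recall the standard fact (cf.~\cite{Koch11,CominettiCL15}) that for a fixed edge $e$ and any non-decreasing absolutely continuous $F_e^+\colon\IR_{\geq 0}\to\IR_{\geq 0}$ with $F_e^+(0)=0$ there is a unique non-decreasing absolutely continuous $F_e^-$ with $F_e^-\equiv 0$ on $[0,\tau_e]$ that satisfies \cref{eq:RespCapacity}, \cref{eq:WeakFlowCons-Edges} and the aggregate link-transfer identity $F_e^-\bigl(T_e(\theta)\bigr)=F_e^+(\theta)$, with $q_e,T_e$ as in \cref{eq:Cont-FlowDefProperties-QueueLengthWithF,eq:travel}; explicitly $F_e^-(\theta+\tau_e)=\min_{0\le\xi\le\theta}\bigl(F_e^+(\xi)+\nu_e(\theta-\xi)\bigr)$. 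From this I would extract the properties that make the recursion work: $q_e(\theta)$ and $T_e(\theta)$ depend only on $F_e^+|_{[0,\theta]}$; $T_e$ is continuous, non-decreasing and $T_e(\theta)\ge\theta+\tau_e$; a flat piece of $T_e$ occurs only on a set where $f_e^+=0$; and the per-position outflows are recovered consistently from the per-position inflows via $F^{W,-}_{i,j}\bigl(T_{e^W_j}(\theta)\bigr)=F^{W,+}_{i,j}(\theta)$, where summing over the walk-positions $j$ with $e^W_j=e$ reproduces the aggregate identity by \cref{eq:agg-in}, and where the per-position outflow \emph{rates} are bounded pointwise by $\nu_e$.

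\textbf{Time bootstrap.} Next I would prove by induction on $n\in\INo$ the claim $H(n)$: every $f^{W,+}_{i,j}$ is, up to a null set, uniquely determined on $[0,n\tau_{\min}]$ by $h$. The case $n=0$ is vacuous. For the inductive step assume $H(n)$. The first-edge inflows satisfy $f^{W,+}_{i,1}=h_i^W$ on all of $\IR_{\geq 0}$ by \cref{eq:FlowCons-Source}. For every edge $e$, \cref{eq:agg-in} together with $H(n)$ determines $F_e^+$ on $[0,n\tau_{\min}]$, hence by the single-edge operator $q_e$ and $T_e$ on $[0,n\tau_{\min}]$; because $T_e(n\tau_{\min})\ge(n+1)\tau_{\min}$, the per-position cumulative outflows $F^{W,-}_{i,j}$ with $e^W_j=e$ are thereby determined on $[0,(n+1)\tau_{\min}]$; and \cref{eq:FlowCons-Nodes} then propagates this to $f^{W,+}_{i,j+1}$ on $[0,(n+1)\tau_{\min}]$ for all $j\ge1$. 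Together with the first edges this gives $H(n+1)$. Since $\bigcup_{n}[0,n\tau_{\min}]=\IR_{\geq 0}$, uniqueness follows; and existence follows because the same recursion \emph{constructs} an $f$, where \cref{eq:FlowCons-Source,eq:FlowCons-Nodes,eq:FlowCons-Edges} hold by construction and the capacity conditions \cref{eq:RespCapacity,eq:WeakFlowCons-Edges} are precisely the output guarantees of the single-edge operator.

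\textbf{Expected main obstacle.} Once the queue-operator facts are fixed (for which I would cite the existing literature), the steps above are mostly careful bookkeeping. The two points genuinely specific to the walk-based setting are: (i) a walk may traverse an edge several times, so the sums in \cref{eq:agg-in} run over several positions $j$ of the same walk — harmless but one must keep the indexing honest; and (ii) there may be countably infinitely many walks carrying flow, so one must verify that the constructed components $f^{W,\pm}_{i,j}$ really lie in $L^2(\IR_{\geq0})$ and that the aggregates $f_e^\pm$ of \cref{eq:agg-in} are well defined. For (ii) I would argue that each per-position outflow rate is bounded pointwise by the corresponding $\nu_e$ while carrying only the finite total mass $\lVert h_i^W\rVert_{L^1}$, which yields $f^{W,-}_{i,j}\in L^2(\IR_{\geq0})$, and that the aggregates are then locally bounded non-decreasing integrands to which the single-edge operator applies; I expect this integrability/summability accounting, rather than any conceptual difficulty, to be the part that needs the most care.
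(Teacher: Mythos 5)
Your proposal is correct and follows essentially the same route as the paper's proof: a single-edge queue operator determining the aggregate outflow from the aggregate inflow, recovery of the per-commodity/per-position outflows from the link-transfer equation (the paper cites \cite[Lemma 2]{CominettiCL15} for this FIFO splitting), and an induction over time exploiting $\tau_{\min}>0$ exactly as in \cite[Proposition 3]{CominettiCL15}. Your explicit $n\tau_{\min}$-bootstrap and the attention to summability over the possibly infinite set of walks are details the paper delegates to citations, but they do not constitute a different approach.
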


\begin{proof}
	The proof of this lemma mainly rests on the following two claims:
	\begin{claim}\label{claim:UniqueAggregatedOutflowrate}
		Given an aggregated edge inflow rate functions $f^+_e$ on some interval $[0,\bar\theta]$ then there exists a uniquely defined (up to changes on a set of measure zero) aggregated edge outflow rate $f^-_e$ on the interval $[0,T_e(\bar\theta)]$ satisfying \eqref{eq:RespCapacity}, \eqref{eq:WeakFlowCons-Edges} and 
		\begin{align}\label{eq:FlowCons-Edges-Aggregated}
			F^-_e\left(T_{e}(\theta)\right) = F^+_e(\theta)
		\end{align}
		for all $\theta \in [0,\bar\theta]$.
	\end{claim}
	
	\begin{proofClaim}
		We first observe, that \eqref{eq:FlowCons-Edges-Aggregated} is equivalent to
		\begin{align}\label{eq:WaitingTimeWellDefined}
			\frac{q_e(\theta)}{\nu_e} = \min\left\{w \geq 0 \middle\vert \int_\theta^{\theta+w}f^-_e(\zeta+\tau_e)\diff\zeta = q_e(\theta)\right\}.
		\end{align}
		Indeed, if \eqref{eq:WaitingTimeWellDefined} holds, we get
		\begin{align*}
			F^-_e\left(T_{e}(\theta)\right) 
				&= F^-_e\left(\theta + \tau_e + \frac{q_e(\theta)}{\nu_e}\right) = F^-_e(\theta+\tau_e) + \int_\theta^{\theta+\frac{q_e(\theta)}{\nu_e}}f^-_e(\zeta+\tau_e)\diff\zeta \\
				&\overset{\text{\eqref{eq:WaitingTimeWellDefined}}}{=} F^-_e(\theta+\tau_e) + q_e(\theta) = F_e^+(\theta).
		\end{align*}
		If, on the other hand, \eqref{eq:FlowCons-Edges-Aggregated} holds, then we have
		\begin{align*}
			\int_\theta^{\theta+\frac{q_e(\theta)}{\nu_e}}f^-_e(\zeta+\tau_e)\diff\zeta = F^-_e\left(T_{e}(\theta)\right) - F^-_e(\theta) =  F^-_e\left(T_{e}(\theta)\right) - F^+_e(\theta) + q_e(\theta) \overset{\text{\eqref{eq:FlowCons-Edges-Aggregated}}}{=} q_e(\theta)
		\end{align*}
		and, therefore, $\min\set{w \geq 0 | \int_\theta^{\theta+w}f^-_e(\zeta+\tau_e)\diff\zeta = q_e(\theta)} \leq \frac{q_e(\theta)}{\nu_e}$. At the same time, \eqref{eq:RespCapacity} clearly implies $\min\set{w \geq 0 | \int_\theta^{\theta+w}f^-_e(\zeta+\tau_e)\diff\zeta = q_e(\theta)} \geq \frac{q_e(\theta)}{\nu_e}$ and, thus, \eqref{eq:WaitingTimeWellDefined} holds.
		
		Now we can use \cite[Proposition 1]{CominettiCL15} stating that satisfying \eqref{eq:RespCapacity}, \eqref{eq:WeakFlowCons-Edges} and \eqref{eq:WaitingTimeWellDefined} is equivalent to the queue operating at capacity, i.e. 
		\begin{align*}
			f^-_e(\theta + \tau_e) = \begin{cases}
				\nu_e, &\text{ if } q_e(\theta) > 0 \\
				\min\set{f^+_e(\theta),\nu_e}, &\text{ else }
			\end{cases}
			\text{ for almost all } \theta \leq \bar\theta.
		\end{align*}
		Now, the claim follows from \cite[Lemma 3.4.5]{MarklThesis} or, alternatively, in the way mentioned in \cite[Section 2.2]{CominettiCL15}.
	\end{proofClaim}
	
	\begin{claim}
		Given a family of edge inflow rates $(f^+_{e,i})$ for some edge $e$ on some interval $[0,\theta]$ then there exist uniquely defined (up to changes on a set of measure zero) edge outflow rates $(f^-_{e,i})$ on $[0,T_e(\theta)]$ satisfying \cref{eq:RespCapacity,eq:WeakFlowCons-Edges,eq:FlowCons-Edges}.
	\end{claim}
	
	\begin{proofClaim}
		First, note that any such family of outflow rates must, in particular, have a corresponding aggregated outflow rate satisfying the properties of \Cref{claim:UniqueAggregatedOutflowrate}. Since we already know that there exists exactly one such outflow rate $f^-_e: [0,T_e(\theta)] \to \IR_{\geq 0}$, we can take this as given. Now, the claim reduces to showing that there exist uniquely defined outflow rates $(f^-_{e,i})$ adding up to the fixed aggregated edge outflow rate $f^-_e$ and satisfying \eqref{eq:FlowCons-Edges} for almost all $\theta \leq \bar\theta$. This is now exactly the statement of \cite[Lemma 2]{CominettiCL15}.
	\end{proofClaim}
	
	Using these two claims together with the flow conservation constraints \eqref{eq:FlowCons-Source} and \eqref{eq:FlowCons-Nodes} the (unique) existence of associated flows over time can be shown by induction over time in the same way as in the proof of \cite[Proposition 3]{CominettiCL15}. 
\end{proof}

For any fixed network we then denote by $\mathcal{F}$ the set of all feasible flows over time associated with some $h \in K$. Note, that \Cref{lemma:UniqueNetworkLoading} then provides us with a one-to-one mapping between $K$ and $\mathcal{F}$.

\subsection{Capacitated Dynamic Equilibria}

For a given walk-based flow $h \in K$ with associated feasible flow over time $(f^+, f^-)$, we now want to compute for every commodity type $(i,W)$
with $W=(e^W_1,\dots,e^W_{k_W})$ a label function
giving at time $\theta$ for any node on that walk the arrival time at $t_i$.
Let  $\hat W=(v_{0},\dots, v_{k_W})$ denote the representation of $W$
as a sequence of nodes satisfying  $e^W_{j}=v_{j-1}v_{j}, j\in [k_W]$ with $v_{0}=s_i, v_{k_W}=t_i$.
As a node can appear multiple times in $W$, we use the subindex $j\in [k_W]$ as a unique identifier 
of the position of that node in the walk.  
With this  notation  we can unambiguously 
and recursively define the following label function:
\begin{equation}
	\begin{aligned}
		\ell_{i,k_W}^W(\theta)&:= \theta, \text{ for  all }\theta\geq0,\\
		\ell_{i,j}^W(\theta)&:=\ell_{i,j+1}^W(T_{e^W_{j+1}}(\theta)), \text{ for }  j= [k_W]-1,\dots,0  \text{ and all } \theta\geq0
	\end{aligned}
\end{equation}
where $\ell_{i,j}^W$ is the label function of the $(j+1)$-th node when 
traversing the walk $\hat W$ beginning with the starting node at position $0$.
Recall that
$v_{k_W}=v_{t_i}$  and  $v_{0}=s_i$,
thus, for a particle
entering $W$ at time $\theta$, the value $\ell_{i,0}^W(\theta)$ measures the arrival time at $t_i$ (assuming that the particle  follows $W$). 
Note that $\ell_{i,j}^W$ is only defined for nodes contained in $W$ and
a node $v$  in $\hat W$ may be associated with several label functions whose
number is equal to the number of  occurrences of $v$ in $\hat W$.
We can easily compute the total travel time for a vehicle of commodity $i\in I$
leaving $s_i$ at time $\theta$ as
\begin{equation}\label{eq:travel_time}
	\mu_{i}^W(\theta):=\ell_{i,0}^W(\theta)-\theta.
\end{equation}

Finally, we need to connect the total travel time with the total price 
paid for recharging.
For this, we introduce an \emph{aggregation function}.
\begin{defn}
	A function $c:\R\times\R\rightarrow \R$ is an \emph{aggregation function},
	if $c$ is continuous and non-decreasing in both arguments.
\end{defn}
We will assume that every commodity has a commodity-specific aggregation function $c_i$ and, thus, given some fixed feasible flow $f$ particles of commodity $i$ starting their travel along some walk $W$ at time $\theta$ have a total cost of $c_i\left(\mu_{i}^W(\theta),\sum_{e\in W}p_{i,e}\right)$.
\begin{example}\label{example:AggregationFunction}
	For $i\in I$, we can think of an aggregation function as
	\[ c_i\left(\mu_{i}^W(\theta),\sum_{e\in W}p_{i,e}\right):=\lambda_i \mu_{i}^W(\theta)+ \sum_{e\in W}p_{i,e},\]
	where $\lambda_i>0$ is a parameter that translates the total travel time into 
	disutility measured in Euro.
\end{example}

Now, instead of letting particles choose \emph{any} walk between their respective source and sink node, we impose further restrictions to only use walk-flows from some closed, convex restriction set $S \subseteq L^2([0,T])^{\Pc}$. Using such $S$ we can, for example, not only model battery constraints (making certain walks infeasible) but also temporary road closures or restrictions on the set of feasible flows itself (as every $h$ corresponds to a unique feasible flow) -- though, in the latter case it is in general not obvious to see whether the resulting set $S$ satisfies convexity.

Now, we want to express that some $h \in S$ is an equilibrium, if no particle can improve its total cost (i.e. aggregate of travel time and total price) by deviating from its current path while staying within $S$. However, since individual particles are infinitesimally small, the deviation of a single particle does not influence the feasibility w.r.t. $S$. Instead, we have to consider deviations of arbitrarily small but positive volumes of flow leading to the notion of \emph{saturated} and  \emph{unsaturated} walks as used in the static Wardropian model by Larsson and Patriksson~\cite{Larsson95}. To do that we first define for any given walk-flow $h$, commodity $i$, walks $W, Q \in \Pc_i$, time $\bar\theta \geq 0$ and constants $\varepsilon,\delta > 0$ the walk-flow obtained by shifting flow of commodity $i$ from walk $W$ to walk $Q$ at a rate of $\varepsilon$ during the interval $[\bar\theta,\bar\theta+\delta]$ by 
\begin{align*}
	H^{W\to Q}_i(h,\bar\theta,\varepsilon,\delta) \coloneqq (h'_R)_{R \in \Pc} \text{ with } \begin{aligned} 
		h'^W_i &=[h^W_i-\epsilon \CharF[{[\bar\theta,\bar\theta+\delta]}]]_+\\
		h'^Q_i &=h^Q_i+h^W_i-[h^W_i-\epsilon \CharF[{[\bar\theta,\bar\theta+\delta]}]]_+\\
		h'^{R}_{i'} &= h^{R}_{i'} \text{ f.a. } (i',R) \in \Pc \setminus \{(i,Q),(i,W)\}\end{aligned},
\end{align*}
where 
\[\CharF[{[\bar\theta,\bar\theta+\delta]}]: [0,T] \to \R, \theta \mapsto \begin{cases}
	1, &\text{ if } \theta \in [\bar\theta,\bar\theta+\delta] \\
	0, &\text{ else }
\end{cases}\]
is the indicator function of the interval $[\bar\theta,\bar\theta+\delta]$ and for any function $g: [0,T] \to \IR$ the function $[g]_+$ is the non-negative part of $g$, i.e. the function
	\[[g]_+: [0,T] \to \IR, \theta \mapsto \max\{g(\theta),0\}.\]
Using this notation, we can now define the set of unsaturated alternatives to some fixed walk $W$ of commodity $i$ with respect to some $h \in S$ at time $\bar\theta \geq 0$ as
\begin{equation}\label{eq:feasible-deviation} 
	D^W_i(h, \bar\theta):=\left\{
		Q\in \Pc_i\middle\vert  \forall \delta' > 0: \exists \delta \in (0,\delta'], \varepsilon>0: 
		H^{W\to Q}_i(h,\bar\theta,\varepsilon,\delta) \in S
	\right\}.
\end{equation}

We can now formally define the concept of a dynamic equilibrium in our model.

\begin{defn}\label{def:DCE}
	Given a network $\network=(G,\nu,\tau,p)$, a set of commodities $I$, a restriction set $S$ and for every commodity $i \in I$ an associated source-sink pair $(s_i,t_i) \in V \times V$ as well as an aggregation function $c_i$, a feasible walk-flow $h\in S \cap K$ is a \emph{capacitated dynamic equilibrium}, if for all $(i,W) \in \Pc$ and almost all $\bar\theta\geq 0$ it holds that
	\begin{equation}\label{eq:CDE} 
		h^W_i(\bar\theta) > 0 \implies c_i\left(\mu_{i}^W(\bar\theta),\sum_{e\in W}p_{i,e}\right) \leq c_i\left(\mu_{i}^Q(\bar\theta),\sum_{e\in Q}p_{i,e}\right) \text{ for all }Q\in D^W_i(h, \bar \theta).
	\end{equation}
\end{defn}

A special case of particular interest are sets $S$ defined by restricting the set of walks that can be used by each commodity, i.e. $S \coloneqq \set{h \in K | h^W_i \equiv 0 \text{ for all } W \in \Pc\setminus\widetilde{\Pc}}$ for some subset $\widetilde{\Pc} \subset \Pc$ of feasible walks. Then we have $D^W_i(h,\bar\theta) = \set{W \in \Pc_i | (i,W) \in \widetilde{\Pc}}$ for any $i \in I,W \in \Pc_i,h \in S,\bar\theta \geq 0$ and the above definition just requires that whenever there is positive inflow into a walk, this walk must have minimal total costs among all feasible paths of the respective commodity at that time. 

In particular, in the case where all walks are allowed (i.e. $\widetilde{\Pc} = \Pc$) the above definition is equivalent to the classic definition of dynamic equilibria. For a battery-extended network we can take $\widetilde{\Pc} = \Pc_b$ and will call a capacitated dynamic equilibrium an \emph{energy-feasible dynamic equilibrium}.


\section{Existence of Capacitated Dynamic Equilibria}

In this section, we will show the existence of capacitated dynamic equilibria using 
an infinite dimensional variational inequalities as pioneered by Friesz et al.~\cite{Friesz93} and
also used by Cominetti et. al.~\cite{CominettiCL15}. Since we use a more general equilibrium concept and allow for flow to use arbitrary walks (from an a priori \emph{infinite} set of possible walks) instead of just simple paths, we have to adjust several of the technical steps of the proof.

The general structure of the proof will be as follows: First, we introduce the concept of dominating sets of walks which will allow us to only consider some finite subset $\Pc'$ of the set of all walks. We then define a mapping $\mathcal{A}: h \mapsto c_i\left(\mu_{i}^W(\_),\sum_{e\in W}p_{i,e}\right)$ mapping walk-flows to costs of particles of commodity $i$ using walk $W$. Using this mapping we can then formulate a variational inequality for which we can show that any solution to it is a capacitated dynamic equilibrium. Finally, a result by Lions~\cite{Lions} guarantees the existence of such solutions given that the mapping $\mathcal{A}$ satisfies an appropriate continuity property which we will show to hold for our model.

We start by giving the definition of dominating walks and sets to be able to formally state our main theorem:

\begin{defn}\label{def:dominating}
	A walk $(i,Q') \in \Pc$ is a \emph{dominating walk} for another walk $(i,Q)$ with respect to $S$, if for any walk-flow $h \in K \cap S$ we have
		\[c_i\left(\mu_{i}^{Q'}(\bar\theta),\sum_{e\in Q'}p_{i,e}\right) \leq c_i\left(\mu_{i}^Q(\bar\theta),\sum_{e\in Q}p_{i,e}\right) \text{ for all } \theta \in [0,T],\]
	and, additionally, $Q \in D^W_i(h,\bar{\theta})$ always implies $Q' \in D^W_i(h,\bar{\theta})$.

	A subset $\Pc' \subseteq \Pc$ \emph{is a dominating set with respect to $S$}, if it contains for any walk $(i,Q) \in \Pc$, a \emph{dominating  walk} $(i,Q') \in \Pc'$ with respect to $S$.
\end{defn}

\begin{theorem}\label{thm:existence}
	Let $\network = (G,\nu,\tau,p)$ be any network and $I$ a finite set of commodities each associated with an aggregation function $c_i$ and a source-sink pair $(s_i,t_i)$. Let $S \subseteq L^2([0,T])^{\Pc}$ be a restriction set which is closed, convex and has non-empty intersection with $K$ and there exists some finite dominating set $\Pc' \subseteq \Pc$ with respect to $S$. Then there exists a capacitated dynamic equilibrium in $\network$.
\end{theorem}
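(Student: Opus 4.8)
\emph{Overview.} I would follow the variational-inequality route announced before the statement, in three stages: (1) reduce the infinite walk set to the finite dominating set $\Pc'$; (2) phrase the capacitated dynamic equilibrium as an infinite-dimensional variational inequality on a weakly compact feasible set; (3) solve that VI via Lions' theorem \cite{Lions}, the essential input being sequential weak--strong continuity of the walk-cost operator.

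\emph{Reduction to $\Pc'$.} Two observations. First, it is enough to produce $h^\ast\in S\cap K$, supported on $\Pc'$, that satisfies \eqref{eq:CDE} only for deviations $Q\in D^W_i(h^\ast,\bar\theta)\cap\Pc'$: given $(i,W)$ with $h^{\ast W}_i(\bar\theta)>0$ and an arbitrary $Q\in D^W_i(h^\ast,\bar\theta)$, pick a dominating walk $Q'\in\Pc'$ for $Q$; by \Cref{def:dominating} we have $Q'\in D^W_i(h^\ast,\bar\theta)$ and $c_i(\mu^{Q'}_i(\bar\theta),\dots)\le c_i(\mu^{Q}_i(\bar\theta),\dots)$, so \eqref{eq:CDE} for $Q$ follows from \eqref{eq:CDE} for $Q'$. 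Second, using the two defining properties of dominating walks together with closedness and convexity of $S$ (successively rerouting flow from any used walk to a dominating walk in $\Pc'$ and passing to the limit), one shows that $S\cap K$ already contains a flow supported on $\Pc'$. Hence we may work in the Hilbert space $H:=L^2([0,T])^{\Pc'}$ with feasible set $\mathcal{K}':=\{h\in S\cap K : h^W_i\equiv 0 \text{ for }(i,W)\notin\Pc'\}$: it is nonempty, convex and norm-closed, hence weakly closed, and bounded -- from $\sum_W h^W_i=u_i$ with non-negative summands one gets $\sum_W\lVert h^W_i\rVert_2^2\le\lVert u_i\rVert_2^2$ -- hence weakly compact. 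Finiteness of $\Pc'$ moreover bounds, uniformly over $h\in\mathcal{K}'$, the number of times any edge is traversed by a used walk, and therefore the edge loads $F^\pm_e$, the queues $q_e$, and all travel times $\mu^W_i(\cdot\,;h)$; this would fail over all of $\Pc$, which is exactly why the dominating-set hypothesis is needed.

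\emph{The variational inequality and why its solutions are equilibria.} On $\mathcal{K}'$ define the operator $\mathcal{A}(h):=\bigl(\theta\mapsto c_i(\mu^W_i(\theta\,;h),\sum_{e\in W}p_{i,e})\bigr)_{(i,W)\in\Pc'}$, where $\mu^W_i(\cdot\,;h)$ is read off the unique network loading of \Cref{lemma:UniqueNetworkLoading}; by the uniform bounds above and continuity of $c_i$ this takes values in $(L^\infty([0,T]))^{\Pc'}\subseteq H$. Consider the problem: find $h^\ast\in\mathcal{K}'$ with $\langle\mathcal{A}(h^\ast),h-h^\ast\rangle\ge 0$ for all $h\in\mathcal{K}'$. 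Any solution is a within-$\Pc'$ equilibrium: otherwise there are $(i,W)\in\Pc'$, a positive-measure set of times with $h^{\ast W}_i>0$, and -- since $\Pc'$ is finite -- a fixed $Q\in\Pc'$ with $Q\in D^W_i(h^\ast,\bar\theta)$ and $c_i(\mu^W_i(\bar\theta),\dots)>c_i(\mu^Q_i(\bar\theta),\dots)$ on a positive-measure subset of that set; choosing $\bar\theta$ a Lebesgue density point of that subset and $\delta$ small enough (permitted because $D^W_i$ allows arbitrarily small shift windows and $\mu$ is continuous in $\theta$), the shifted flow $H^{W\to Q}_i(h^\ast,\bar\theta,\varepsilon,\delta)$ lies in $\mathcal{K}'$ and makes $\langle\mathcal{A}(h^\ast),\,\cdot-h^\ast\rangle<0$, a contradiction. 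Together with the reduction above, $h^\ast$ is then a capacitated dynamic equilibrium.

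\emph{Existence of a VI solution -- the main obstacle.} Since $\mathcal{K}'$ is nonempty, bounded, closed and convex in $H$, Lions' theorem \cite{Lions} yields a solution as soon as $\mathcal{A}$ is sequentially weak--strong continuous, i.e. $h_n\rightharpoonup h^\ast$ in $H$ implies $\mathcal{A}(h_n)\to\mathcal{A}(h^\ast)$ strongly in $L^2$; proving this is the hard part and the only place the detailed Vickrey structure enters. Weak convergence of the walk-inflows gives pointwise -- hence, by monotonicity of the cumulative functions and continuity of the limit, uniform -- convergence of the cumulative walk-inflows on $[0,T]$. I would propagate uniform convergence through the network loading by induction over time in steps of $\delta_0:=\min_e\tau_e>0$, which is well-founded despite cycles because $T_e(\theta)\ge\theta+\tau_e$: in each step one sums the \emph{finitely many} converging cumulative edge-position inflows to get uniform convergence of the aggregated $F^+_{e,n}$; applies continuity of the single-edge Vickrey outflow operator -- through the ``queue operating at capacity'' characterization and the disaggregation of \cite{CominettiCL15} (the same Prop.~1 and Lem.~2 used in the proof of \Cref{lemma:UniqueNetworkLoading}) -- to obtain uniform convergence of $F^-_{e,n}$, of $q_{e,n}$, $c_{e,n}$, and of the exit-time maps $T_{e,n}$; and uses node conservation \eqref{eq:FlowCons-Nodes} to carry convergence to the next edge position. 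Composing the uniformly convergent, uniformly continuous maps $T_{e,n}$ backwards along each (finite) walk yields uniform convergence of the labels $\ell^W_{i,j,n}$, hence of $\mu^W_{i,n}$; continuity of $c_i$ and the uniform bounds then give $\mathcal{A}(h_n)\to\mathcal{A}(h^\ast)$ uniformly, a fortiori in $L^2$. This mirrors the continuity argument of \cite{CominettiCL15} but must be carried out at the level of edge \emph{positions within walks} rather than of simple paths; the delicate points are the single-edge outflow continuity and verifying that the time-stepped induction terminates after finitely many steps (all flow has left the network by a finite time, since $\Pc'$ is finite and all queues are bounded).
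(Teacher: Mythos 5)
Your proposal is correct and shares the paper's overall architecture---reduction to the finite dominating set $\Pc'$, a variational inequality over the bounded, closed, convex set of $\Pc'$-supported flows, Lions' theorem, and a localized flow shift showing that VI solutions are equilibria---but it takes a genuinely different route at the two technical pressure points. For the crucial sequential weak--strong continuity of the walk-cost operator, you propagate uniform convergence forward in time through the network in steps of $\min_e \tau_e$, which rests on sup-norm continuity of the single-edge Vickrey input--output map (true, e.g.\ via the representation $F^-_e(\theta+\tau_e)=\min_{\xi\le\theta}\{F^+_e(\xi)+\nu_e(\theta-\xi)\}$, which makes it nonexpansive) plus a continuity statement for the per-walk disaggregation that you flag but would still have to prove. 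The paper instead argues ``softly'': it shows $h\mapsto f$ is weak--weak continuous by extracting a weakly convergent subsequence of loadings (boundedness plus reflexivity of $L^2$), verifying that all defining constraints are stable under weak limits, and invoking uniqueness of the network loading (\Cref{lemma:UniqueNetworkLoading}); weak--strong continuity of $f\mapsto(F^+_e,F^-_e)$ then comes from testing against indicators together with uniform Lipschitz bounds. Your induction is more constructive and closer to Cominetti et al.'s original treatment but needs the extra single-edge and disaggregation lemmas; the paper's argument avoids these at the cost of leaning on uniqueness. Likewise, to localize the profitable deviation you use the Lebesgue density theorem where the paper proves the ad hoc covering result \Cref{lemma:covering}; both work because $D^W_i$ admits arbitrarily small shift windows. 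Two further harmless discrepancies: your operator $\mathcal{A}$ omits the subtraction of $\min_{Q\in\Pc'_i}c_i(\cdot)$ present in the paper, which changes nothing because admissible deviations preserve per-commodity totals; and you are in fact more careful than the paper about why $S\cap K$ contains a $\Pc'$-supported flow (the paper asserts non-emptiness of $C$ without argument), though your rerouting-and-limit sketch for that point would need fleshing out---in the intended applications non-emptiness is immediate.
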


In order to prove this theorem we first need some definitions from functional analysis:
We will make use of two function spaces, namely the space $L^2([a,b])$ of $L^2$-integrable functions from an interval $[a,b]$ to $\IR$ and the space $C([a,b])$ of continuous functions from $[a,b]$ to $\IR$. The former is a Hilbert space with the natural pairing
\begin{align*}
	\langle.,.\rangle: L^2([a,b]) \times L^2([a,b]) \to \IR, (g,h) \mapsto \langle g,h\rangle \coloneqq \int_a^b g(x)h(x)\diff x.
\end{align*}
The latter is a normed space with the uniform norm $\norm{f}_\infty \coloneqq \sup_{\theta \in [a,b]}\abs{f(\theta)}$. Both, the natural pairing and the norm, can be extended in a natural way to $L^2([a,b])^d$ and $C([a,b])^d$, respectively, for any $d \in \IN$. In particular, all these spaces are topological vector spaces. We say that a sequence $h^k$ of functions in $L^2([a,b])^d$ \emph{converges weakly} to some function $h \in L^2([a,b])^d$ if for any function $g \in L^2([a,b])$ we have $\lim_{k \to \infty}\langle h^k, g\rangle = \langle h,g\rangle$. For any topological space $X$ (in the following this will be either $L^2([a,b])^d$ or $C([a,b])^d$) and any subset $C \subseteq  L^2([a,b])^d$ a mapping $\mathcal{A}: C \to X$ is called \emph{sequentially weak-strong-continuous} if it maps any weakly converging sequence of functions in $C$ to a (strongly) convergent sequence in $X$.

With this, we can now describe the kind of variational inequality we will use to show the existence of capacitated dynamic equilibria. Namely, given an interval $[a, b] \subseteq \IR_{\geq 0}$, a number $d \in \IN$, a subset $C \subseteq L^2([a, b])^d$ and a mapping $\A: C \to L^2([a, b])^d$, the variational inequality~\ref{eqn:VI} is the following:
\begin{equation}\label{eqn:VI}\tag{\ensuremath{\VI(C,\A)}}
	\text{Find }h^* \in C \text{ such that } \scalar{\A(h^*)}{\bar h-h^*} \geq 0 \text{ for all } \bar h \in C.
\end{equation}
Conditions to guarantee the existence of such an element $h^*$ are given by Lions in \cite[Chapitre 2, Théorème 8.1]{Lions} which, following Cominetti et. al.~\cite{CominettiCL15}, can be restated as follows:
\begin{theorem}\label{brezis}
	Let $C$ be a non-empty, closed, convex and bounded subset of $L^2([a, b])^d$. Let $\A : C \rightarrow L^2([a, b])^d$ be a sequentially weak-strong-continuous mapping. Then, the variational inequality~\eqref{eqn:VI} has a solution $h^* \in C$.
\end{theorem}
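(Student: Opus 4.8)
The plan is to prove this classical result directly via the Fan--KKM theorem, exploiting that $H := L^2([a,b])^d$ is a separable Hilbert space and hence reflexive. First I would record the two topological facts that make the weak topology usable. Since $C$ is closed, bounded and convex, it is weakly compact: closed convex sets are weakly closed by Mazur's theorem, and bounded sets in a reflexive space are relatively weakly compact by Banach--Alaoglu, so $C$ is both weakly closed and relatively weakly compact, hence weakly compact. Moreover, because $H$ is separable, the weak topology restricted to the bounded set $C$ is metrizable, so that weak closedness and weak sequential closedness coincide on $C$.

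The strategy is to realize the solution set as an intersection of weakly closed sets and then invoke compactness. For each $\bar h \in C$ define
\[
	F(\bar h) := \Set{ h \in C | \scalar{\A(h)}{\bar h - h} \geq 0 }.
\]
By definition, $h^*$ solves \eqref{eqn:VI} if and only if $h^* \in \bigcap_{\bar h \in C} F(\bar h)$, so it suffices to show this intersection is non-empty. I would first verify that $(F(\bar h))_{\bar h \in C}$ is a KKM family: given $\bar h_1, \dots, \bar h_m \in C$ and a convex combination $h = \sum_{l} \lambda_l \bar h_l$ with $\lambda_l \geq 0$ and $\sum_l \lambda_l = 1$, if $h$ belonged to none of the $F(\bar h_l)$ we would have $\scalar{\A(h)}{\bar h_l - h} < 0$ for every $l$; taking the $\lambda$-weighted sum would yield $\scalar{\A(h)}{h - h} < 0$, i.e. $0 < 0$, a contradiction. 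Hence $\mathrm{conv}(\bar h_1, \dots, \bar h_m) \subseteq \bigcup_l F(\bar h_l)$, which is exactly the KKM property. Note this step uses only the bilinearity of the pairing, not any continuity.

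The crux of the argument, and the only place the continuity hypothesis enters, is to show each $F(\bar h)$ is weakly closed. By metrizability on $C$ it suffices to argue sequentially: let $h_n \in F(\bar h)$ with $h_n \rightharpoonup h$ in $C$. Sequential weak-strong continuity of $\A$ gives $\A(h_n) \to \A(h)$ strongly in $H$, while $\bar h - h_n \rightharpoonup \bar h - h$ weakly; since the pairing of a strongly convergent sequence with a weakly convergent one converges to the pairing of the limits, we obtain $\scalar{\A(h_n)}{\bar h - h_n} \to \scalar{\A(h)}{\bar h - h}$. As each term is non-negative, the limit is non-negative and $h \in F(\bar h)$. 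I expect this to be the main obstacle, because it is precisely here that strong and weak convergence must be combined correctly---the pairing of two merely weakly convergent sequences need not converge to the pairing of the limits---so the weak-strong hypothesis is indispensable rather than cosmetic, and weakening it to mere weak sequential continuity of $\A$ would break the argument.

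Finally, each $F(\bar h)$ is a weakly closed subset of the weakly compact set $C$ and is therefore itself weakly compact. The KKM property yields the finite intersection property: applying the finite-dimensional KKM lemma on each simplex $\mathrm{conv}(\bar h_1, \dots, \bar h_m)$---where weak and norm topologies agree, so $h \mapsto \scalar{\A(h)}{\bar h_l - h}$ is genuinely continuous and $F(\bar h_l) \cap \mathrm{conv}(\bar h_1, \dots, \bar h_m)$ is relatively closed---gives $\bigcap_{l=1}^m F(\bar h_l) \neq \emptyset$. Weak compactness of $C$ then upgrades the finite intersection property to $\bigcap_{\bar h \in C} F(\bar h) \neq \emptyset$, and any element of this intersection is the desired $h^*$. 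As an alternative one could run a Galerkin scheme---solving finite-dimensional variational inequalities via the Hartman--Stampacchia/Brouwer theorem on $C \cap H_n$ for an exhausting sequence of subspaces $H_n$ and extracting a weakly convergent subsequence---but the Fan--KKM route avoids the bookkeeping of approximating $C$ from the inside and isolates the weak-strong continuity as the single analytic ingredient.
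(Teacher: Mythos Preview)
Your proof is correct and self-contained. The paper, however, does not prove this theorem at all: it simply cites it as \cite[Chapitre 2, Th\'eor\`eme 8.1]{Lions} (restated following Cominetti et al.~\cite{CominettiCL15}) and uses it as a black box. So there is no ``paper's own proof'' to compare against---you have supplied what the paper deliberately omitted.

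Your Fan--KKM argument is the standard modern route to this kind of result and is executed cleanly. The key ingredients---weak compactness of $C$ via reflexivity and Mazur, metrizability of the weak topology on bounded sets via separability, the KKM covering property from bilinearity of the pairing, and weak sequential closedness of each $F(\bar h)$ from the weak-strong continuity hypothesis---are all correctly identified and used. In particular, your observation that the pairing $\scalar{\A(h_n)}{\bar h - h_n}$ converges because one factor converges strongly and the other weakly (with the weakly convergent factor staying bounded) is exactly the place where the weak-strong hypothesis is consumed, and you are right that mere weak-to-weak continuity of $\A$ would not suffice. One minor remark: rather than reducing to the finite-dimensional KKM lemma on each simplex and then invoking compactness for the FIP upgrade, you could invoke Fan's infinite-dimensional KKM theorem directly in $(H,\text{weak})$, which is a locally convex TVS; this avoids worrying about affine independence of the $\bar h_l$. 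Either way the conclusion stands.
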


For our proof we choose $C \coloneqq \pi(S \cap K \cap \iota(\big(L^2([0,T])\big)^{\Pc'}))$, where $\iota: \big(L^2([0,T])\big)^{\Pc'} \to \big(L^2([0,T])\big)^{\Pc}$ is the canonical embedding (i.e. augmenting $\Pc'$-dimensional vectors with zero functions to $\Pc$-dimensional vectors) and $\pi: \big(L^2([0,T])\big)^{\Pc} \to \big(L^2([0,T])\big)^{\Pc'}$ the canonical projection. For ease of notation we will usually omit these embeddings/projections from our notation and assume that they are implicitly applied, whenever we change between elements of $\big(L^2([0,T])\big)^{\Pc'}$ and $\big(L^2([0,T])\big)^{\Pc}$. 
Next, we define a mapping $\A : C \rightarrow L^2([0, T])^{\Pc'}$ by defining for every walk-flow $h \in C$, commodity $i \in I$ and walk $W \in \Pc'_i \coloneqq \set{W \in \Pc_i | (i,W) \in \Pc'}$ the continuous function
$\A^W_i(h)$ by
\[\A^W_i(h): \theta\mapsto c_i\left(\mu_{i}^W(\bar\theta),\sum_{e\in W}p_{i,e}\right) - \min_{Q\in\Pc'_i} c_i\left(\mu_{i}^Q(\bar\theta),\sum_{e\in Q}p_{i,e}\right).  \]

Clearly, the assumptions on $S$ and the fact that $K$ is bounded, closed and convex imply that $C$ is a non-empty, closed, convex and bounded subset of $L^2([0, T])^{\Pc'}$. Thus, in order to be able to apply \Cref{brezis} it only remains to show that $\mathcal{A}$ is sequentially weak-strong continuous. Since taking differences and minima of sequentially weak-strong continuous mappings again results in such a mapping, it suffices to show that the maps
\begin{equation} 
	h\mapsto c_i\left(\mu_{i}^W(\_),\sum_{e\in W}p_{i,e}\right) \text{ for }W\in \Pc'_i, i\in I  
\end{equation}
are sequentially weak-strong continuous from $C$ to $L^2([0,T])$.
The first ingredient for this proof is a result by Cominetti, Correa and Larr\'{e}~\cite[Lemma 3]{CominettiCL15}
showing that the network loading for any $h\in C$ has compact support.
One argument in~\cite{CominettiCL15} uses that walks are simple (paths) 
in order to bound queues. We side-step this argument by using another Lemma
from Graf and Harks~\cite{GrafH20}.
\begin{lemma}\label{lemma:BoundedSupportOfNetworkLoading}
	There is a constant $M\geq 0$ such that for any $h\in C$ all edge flows
	of the network loading corresponding to $h$ are supported on $[0,M]$.
\end{lemma}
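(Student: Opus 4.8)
The plan is to turn the finiteness of the dominating set $\Pc'$ into a uniform bound on the maximal walk length and on the queue lengths, and then to propagate the resulting bounded per‑edge travel times along walks by a routine layer‑by‑layer induction. First I would fix the quantities $k_{\max} \coloneqq \max_{(i,W) \in \Pc'}k_W < \infty$, $\tau_{\max} \coloneqq \max_{e \in E}\tau_e$, $\nu_{\min} \coloneqq \min_{e \in E}\nu_e > 0$ and $U \coloneqq \sum_{i \in I}\int_0^T u_i(\theta)\diff\theta < \infty$ (finite since each $u_i$ is bounded with bounded support); all of these depend only on $\network$ and on $\Pc'$, not on the particular $h \in C$.

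The key quantitative step is a uniform queue bound. For $h \in C$ — which is supported on $\Pc'$ — and a walk $W \in \Pc'_i$, the constraints \eqref{eq:FlowCons-Source}, \eqref{eq:FlowCons-Nodes} and \eqref{eq:FlowCons-Edges} together with $T_e(\theta) \geq \theta$ imply, by induction on the position $j$, that $\lim_{\theta \to \infty}F^{W,+}_{i,j}(\theta) = \int_0^T h^W_i(\theta)\diff\theta$ for every $j \in [k_W]$ — the limit of the cumulative outflow at a position equals that of the cumulative inflow because the exit‑time maps $\theta \mapsto T_e(\theta)$ are cofinal in $\IR_{\geq 0}$. Since each $W \in \Pc'$ traverses a given edge $e$ at most $k_W \leq k_{\max}$ times, summing \eqref{eq:agg-in} over all $(i,W,j)$ with $e^W_j = e$ and using $\sum_{W \in \Pc'_i} h^W_i = u_i$ gives $F^+_e(\infty) \leq k_{\max}U$; as $F^+_e$ is non‑decreasing, $q_e(\theta) \leq F^+_e(\theta) \leq k_{\max}U$ for all $\theta$, and hence $c_e(\theta) = \tau_e + q_e(\theta)/\nu_e \leq \tau_{\max} + k_{\max}U/\nu_{\min} \eqqcolon \Delta$, uniformly over all edges, all times and all $h \in C$. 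This is exactly where the simplicity argument of \cite{CominettiCL15} is replaced by the bound borrowed from \cite{GrafH20}.

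With $\Delta$ in hand I would finish by induction on the walk position: for every $h \in C$, $(i,W) \in \Pc'$ and $j \in [k_W]$, the inflow $f^{W,+}_{i,j}$ is supported on $[0, T+(j-1)\Delta]$ and the outflow $f^{W,-}_{i,j}$ on $[0, T+j\Delta]$. The base case $j=1$ is \eqref{eq:FlowCons-Source}, since $h^W_i$ lives on $[0,T]$. For the step, if $f^{W,+}_{i,j}$ is supported on $[0,T_j]$ with $T_j = T+(j-1)\Delta$, then for $e = e^W_j$ we have $T_e(T_j) \leq T_j + \Delta$, so \eqref{eq:FlowCons-Edges} and monotonicity of $F^{W,-}_{i,j}$ give $F^{W,-}_{i,j}(T_j+\Delta) \geq F^{W,-}_{i,j}(T_e(T_j)) = F^{W,+}_{i,j}(T_j) = \lim_{\theta\to\infty}F^{W,+}_{i,j}(\theta) = \lim_{\theta\to\infty}F^{W,-}_{i,j}(\theta)$; hence $f^{W,-}_{i,j}$ vanishes almost everywhere beyond $T_j+\Delta$, and \eqref{eq:FlowCons-Nodes} carries this over to $f^{W,+}_{i,j+1}$. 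Setting $M \coloneqq T + k_{\max}\Delta$ then bounds the support of every $f^{W,\pm}_{i,j}$, and therefore — via \eqref{eq:agg-in} — of every aggregated edge flow $f^\pm_e$, uniformly in $h \in C$.

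The hard part is not the induction, which is mechanical, but recognising why the argument of \cite{CominettiCL15} does not transfer verbatim and what to do about it: a walk, unlike a simple path, may revisit the same edge many times, so the demand volume alone no longer caps the flow through an edge. Restricting attention to the finite dominating set $\Pc'$ is what rescues the argument, since it limits the number of revisits to $k_{\max}$ and thereby makes the queue bound — and with it $\Delta$ — uniform over $C$. The only remaining subtlety is the conservation of total flow on each walk all the way through the network, which is exactly where cofinality of the exit‑time maps ($T_e(\theta) \to \infty$) is invoked.
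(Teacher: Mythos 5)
Your proof is correct, and it follows the same overall skeleton as the paper's (uniform queue bound $\Rightarrow$ uniform per-edge delay bound $\Rightarrow$ propagate along walks of bounded length, giving $M = T + k_{\max}\Delta$ versus the paper's $M = T + m\delta$), but it obtains the one non-routine ingredient --- the uniform queue bound --- by a genuinely different argument. The paper invokes \cite[Lemma 2.1]{GrafH20}, the identity $\edgeLoad(\theta) = \sum_{i}U_i(\theta) - Z(\theta)$, to conclude $q_e(\theta) \leq \edgeLoad(\theta) \leq \bar q = \sum_i \int u_i$, a bound that holds for \emph{any} feasible flow regardless of which walks carry it. You instead first establish $\lim_{\theta\to\infty}F^{W,+}_{i,j}(\theta) = \int_0^T h^W_i$ via flow conservation and cofinality of the exit-time maps, and then count edge multiplicities: since $h \in C$ is supported on the finite set $\Pc'$ and each walk traverses $e$ at most $k_{\max}$ times, $q_e(\theta) \leq F^+_e(\theta) \leq k_{\max}U$. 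This is in effect a direct generalization of the simple-path argument of \cite{CominettiCL15} (replacing ``each path crosses $e$ at most once'' by ``at most $k_{\max}$ times''), which is precisely the argument the paper says it side-steps. Your route is self-contained and elementary, at the price of a bound that is looser by a factor $k_{\max}$ and of leaning on the finiteness of $\Pc'$ twice (once for the queue bound, once for the walk length), whereas the paper's queue bound is independent of $\Pc'$ and imports the conservation identity as a black box; both yield a valid uniform $M$, so the difference is one of technique rather than of substance.
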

\begin{proof}
	Let $h\in K$ and let $f\in \F$ be the associated unique network loading
	which itself is a feasible flow over time.
	For any feasible flow $f\in \F$ and every edge $e \in E$ we define the \emph{edge load} function $\edgeLoad[e]$ that gives us for any time $\theta$ the total amount of flow currently on edge $e$ (either waiting in its queue or traveling along the edge): 
	\[\edgeLoad[e]: \IR_{\geq 0} \to \IR_{\geq 0}, \theta \mapsto F^+_e(\theta)-F^-_e(\theta). \]
	The function $\edgeLoad(\theta) := \sum_{e \in E}\edgeLoad[e](\theta)$ then gives  the \emph{total amount of flow in the network at time $\theta$}. 
	Furthermore, we define a function $Z$ indicating the amount of flow that already reached the sinks $t_i, i\in I$ by time $\theta$:
	\begin{align}\label{eq:DefZ}
		Z: \IR_{\geq 0} \to \IR_{\geq 0}, \theta \mapsto \sum_{i\in I}\sum_{e \in \edgesEntering{t_i}}F^-_{i,e}(\theta) - \sum_{e \in \edgesLeaving{t_i}}F^+_{i,e}(\theta)
	\end{align}
	and for any node $s_i \neq t_i, i\in I$ the \emph{cummulative network inflow at $s_i$} as
	\[U_{i}: \IR_{\geq 0} \to \IR_{\geq 0}, \theta \mapsto \int_{0}^{\theta} u_i(\zeta)d\zeta.\]
	We will make use of the following connection between these functions:			
	\begin{lemma}[{Graf and Harks~\cite[Lemma 2.1]{GrafH20}}]\label{lem:GeIsTotalFlowInGraph}
		Let $f\in \F$ be a feasible flow. Then for any time $\theta$ we have
		\[\edgeLoad(\theta) = \sum_{i\in I }U_{i}(\theta) - Z(\theta). \]
	\end{lemma}
	
	From the above lemma, we immediately get:	
	\[ q_e(\theta)\leq \edgeLoad(\theta)\leq \bar q:= \sum_{i\in I} \int_{\IR}u_{i}(z)dz.\]
	The remaining proof is similar to that in~\cite{CominettiCL15}.
	Let $\delta:=\max_{e\in E}\{\frac{\bar q}{\nu_e}+\tau_e\}$
	and $m$ be the maximum number of edges of any of the (finitely many!) walks in $\Pc'$.
	Hence by setting $M:=T+\delta m$ we get $\ell_{i,0}(\theta)\leq M$ for all $\theta\in [0,T]$.
	With~\eqref{eq:FlowCons-Nodes} and \eqref{eq:FlowCons-Edges} we get that all
	appearing edge flows are supported on $[0,M]$.
\end{proof}

Now we discuss the continuity of the mapping from walk-flows $h$ to label functions $\ell^W_{i,j}$ following along the same lines as \cite[Lemmas 4-7]{CominettiCL15}.

\begin{lemma}\label{lemma:PathInflowToArrivalTimeIsWSC}
	Let $M \geq 0$ be a constant such that all edge flows of network loadings corresponding to any $h \in C$ have their support in $[0,M]$ (cf. \Cref{lemma:BoundedSupportOfNetworkLoading}). Then for any fixed edge $e \in E$ the map
	\[C\to C([0,M]), h \mapsto T_e\]
	is sequentially weak-strong continuous. Here, the map $h \mapsto T_e$ is defined by first finding the unique network loading for the given $h$ (see \Cref{lemma:UniqueNetworkLoading}) and then deriving the resulting exit time function $T_e$.
\end{lemma}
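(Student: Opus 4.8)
The plan is to follow the template of \cite[Lemmas~4--7]{CominettiCL15}, adapted from path-flows to walk-flows; the single point in that argument which relies on walks being simple --- a bound on queue lengths --- will be supplied instead by \Cref{lemma:BoundedSupportOfNetworkLoading}. Fix $e \in E$ and a sequence $(h^k)$ in $C$ with $h^k \rightharpoonup h$. Since $C$ is closed and convex it is weakly closed, so $h \in C$ and $T_e \coloneqq T_e(h)$ is well defined; writing $T_{e,k}$ for the exit-time function of the network loading associated with $h^k$ (cf.\ \Cref{lemma:UniqueNetworkLoading}), we must show $T_{e,k} \to T_e$ uniformly on $[0,M]$.

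The first step is to upgrade the weak convergence of the walk-inflows to uniform convergence of their cumulative counterparts. Since $h^k, h \in K$ and all walk-inflows are non-negative, $h^W_{i,k} \le u_i$ pointwise a.e., so the cumulative first-edge inflows $F^{W,+}_{i,1,k}(\theta) = \int_0^\theta h^W_{i,k}(z)\diff z$ are uniformly bounded and equi-Lipschitz with constant $\norm{u_i}_\infty$. Testing $h^W_{i,k} \rightharpoonup h^W_i$ against the indicators $\CharF[{[0,\theta]}] \in L^2([0,T])$ gives $F^{W,+}_{i,1,k}(\theta) \to F^{W,+}_{i,1}(\theta)$ for every $\theta$, and equi-Lipschitz continuity promotes this to uniform convergence on $[0,T]$, hence on $[0,M]$ (all these functions being constant beyond $T$). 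The second step is the single-edge continuity statement of \cite[Lemmas~4--5]{CominettiCL15}: on the class of non-decreasing, uniformly bounded cumulative-inflow functions that arise from network loadings, the operator $F^+_e \mapsto T_e$ is continuous from $C([0,M])$ to $C([0,M])$. Indeed, the ``queue operates at capacity'' characterization obtained inside \Cref{claim:UniqueAggregatedOutflowrate} expresses $F^-_e$ continuously (in the uniform norm) in terms of $F^+_e$, whence $q_e(\cdot) = F^+_e(\cdot) - F^-_e(\cdot+\tau_e)$, $c_e = \tau_e + q_e/\nu_e$ and $T_e = \id + c_e$ depend continuously on $F^+_e$ as well, and each $T_{e,k}$ is continuous and non-decreasing. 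The only quantity entering this step that is sensitive to walks being non-simple is an a priori bound on the queues, and the inequality $q_e(\theta) \le \bar q$ established in the proof of \Cref{lemma:BoundedSupportOfNetworkLoading} supplies it, so the remainder of the argument of \cite{CominettiCL15} carries over unchanged.

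These ingredients are combined in an induction over the finitely many time layers $[0, n\tau_{\min}]$, $n = 0,1,2,\dots$, with $\tau_{\min} \coloneqq \min_{e\in E}\tau_e > 0$, mirroring the time-induction in the proof of \cite[Proposition~3]{CominettiCL15}: at layer $n$ one shows that \emph{all} quantities of the network loading of $h^k$ --- aggregated and disaggregated edge in- and outflows, queues and exit times --- converge uniformly on $[0, n\tau_{\min}]$ as $k \to \infty$. For the inductive step one uses that on $[0,n\tau_{\min}]$ the inflow into any walk-position $j \ge 2$ coincides with the outflow of position $j-1$, which in turn is determined by data already controlled on $[0,(n-1)\tau_{\min}]$, because traversing an edge costs at least $\tau_{\min}$; since $\Pc'$ is finite, each aggregated inflow $F^+_{e',k}$ is a finite sum of uniformly convergent terms, hence uniformly convergent, the single-edge step yields uniform convergence of $F^-_{e',k}$ and $T_{e',k}$, and then the link-transfer equation $F^{W,-}_{i,j,k}\big(T_{e^W_j,k}(\theta)\big) = F^{W,+}_{i,j,k}(\theta)$, treated as in \cite[Lemmas~6--7]{CominettiCL15}, propagates uniform convergence to the disaggregated outflows and hence to the next-stage cumulative inflows $F^{W,+}_{i,j+1,k} = F^{W,-}_{i,j,k}$. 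After $\ceil{M/\tau_{\min}}$ layers one obtains $T_{e,k} \to T_e$ uniformly on $[0,M]$ for the fixed edge $e$, which is the assertion.

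The step I expect to be the main obstacle is this last propagation --- deducing uniform convergence of $F^{W,-}_{i,j,k}$ from that of $F^{W,+}_{i,j,k}$ and $T_{e^W_j,k}$. It amounts to uniform convergence of the generalized compositions of the $F^{W,+}_{i,j,k}$ with right-inverses of the exit-time functions, which may have flat pieces, and it is precisely here that one must exploit the FIFO monotonicity of $T_e$ together with a uniform modulus-of-continuity estimate; this is the delicate content of \cite[Lemmas~6--7]{CominettiCL15} that has to be reproduced. By comparison the passage from paths to walks is benign: apart from having to keep track of the repeated occurrences of an edge along a walk --- which only turns single terms into finite sums, using that $\Pc'$ is finite --- the one genuinely new ingredient is the queue bound, which we have already isolated in \Cref{lemma:BoundedSupportOfNetworkLoading}.
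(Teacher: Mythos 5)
Your proof is correct in outline but takes a genuinely different route from the paper's. You adapt the constructive forward-propagation argument of \cite{CominettiCL15}: uniform convergence of cumulative inflows is pushed through the network by a finite induction over time layers $[0,n\tau_{\min}]$, using single-edge continuity of $F_e^+\mapsto (F_e^-,T_e)$ and the link-transfer equation to pass from one walk position to the next, with the queue bound from \Cref{lemma:BoundedSupportOfNetworkLoading} as the only walk-specific input --- exactly as you say. The paper instead factors the map as $h\mapsto f\mapsto (F_e^+,F_e^-)\mapsto T_e$ and proves the first arrow (sequential weak-weak continuity of the network loading $h\mapsto f$) non-constructively: a putative failure of weak convergence yields, by boundedness and reflexivity of $L^2$, a subsequence of loadings converging weakly to some $f'$; all feasibility constraints are shown to be stable under weak limits, so $f'$ is a network loading for $h$; uniqueness (\Cref{lemma:UniqueNetworkLoading}) forces $f'=f$, a contradiction. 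The payoff of the paper's route is that the link-transfer constraint only ever needs to be verified in the forward direction --- composing uniformly convergent sequences (\Cref{lemma:PointwiseConvergenceOfConcatOfUniformConvergence}) --- so the exit-time functions never have to be inverted. Your route does require that inversion, i.e.\ recovering $F^{W,-}_{i,j,k}$ from $F^{W,+}_{i,j,k}$ and $T_{e^W_j,k}$, which you correctly flag as the delicate step; it does go through (the disaggregated cumulative outflows are non-decreasing and equi-Lipschitz with constant $\nu_e$, so pointwise convergence on the range of $T_{e^W_j}$ upgrades to uniform convergence), but it should be spelled out rather than deferred to \cite{CominettiCL15}, since it is the one place your argument could break. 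In exchange, your approach constructs the limit explicitly layer by layer, whereas the paper's buys brevity at the price of leaning on the uniqueness of the network loading.
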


\begin{proof}
	We show the desired sequential continuity by decomposing the map into three maps according to the following commutative diagram:
	\begin{center}
		\begin{tikzpicture}
			\node(domH)at(0,0) {$C$};
			\node(h)at($(domH)+(0.5,-0.5)$) {\small$h$};
			\node(domT)at(8,0) {$C([0,M])$};
			\node(Te)at($(domT)+(-0.5,-0.5)$) {\small$T_e$};
			\node(domF)at(0,-3) {\hspace{6em}$\mathcal{F} \subseteq \big(L^2([0,M])\big)^{\mathcal{R}}$};
			\node(fe)at($(domF)+(0.5,0.5)$) {\small$f$};
			\node(domQ)at(8,-3) {$\big(C([0,M])\big)^2$};
			\node(qe)at($(domQ)+(-0.5,0.5)$) {\small$(F_e^+,F_e^-)$\quad\quad\quad};
			
			\draw[->] (domH) --node[above]{\tiny weak-strong} (domT);
			\draw[->] (domH) --node[below,sloped]{\tiny weak-weak} (domF);
			\draw[->] (domF) --node[below]{\tiny weak-strong} (domQ);
			\draw[->] (domQ) --node[below,sloped]{\tiny strong-strong} (domT);
			
			\draw[|->] (h) -- (Te);
			\draw[|->] (h) -- (fe);
			\draw[|->] (fe) -- (qe);
			\draw[|->] (qe) -- (Te);
		\end{tikzpicture}
	\end{center}

	\begin{claim}\label{claim:fToFisWSC}
		The map $\F \to \big(C([0,M],\IR_{\geq 0})\big)^2, f \mapsto (F_e^+,F_e^-)$ is well defined and sequentially weak-strong continuous.
	\end{claim}
	
	\begin{proofClaim}
		Since $\theta \mapsto \int_0^\theta g(\zeta)\diff\zeta$ is an (absolute) continuous function for any integrable function $g$ the map given in the claim is clearly well defined. For the proof of the continuity we mostly follow the proof of \cite[Lemma 2.7]{SeringThesis}. Let $(f^k)_k$ be a sequence in $\F$ converging weakly to some $f \in \F$ and let ${F^+_e}^{(k)}, F^+_e, {F^-_e}^{(k)}$ and $F^-_e$ be the corresponding cumulative flows on edge $e$. As first step we want to show that ${F^+_e}^{(k)}$ converges point-wise $F^+_e$. So, fix some time $\theta \in [0,M]$ and define $g \in \big(L^2([0,M])\big)^{\mathcal{R}}$ by
			\begin{align*}
				g^{W,+}_{i,j} \coloneqq \begin{cases}
					\CharF[{[0,\theta]}], &\text{ if } e^W_j = e \\
					0					, &\text{ else }
				\end{cases} 
					\text{ and }
				g^{W,-}_{i,j} \coloneqq 0 \text{ for all } (i,W,j) \in \mathcal{R}.
			\end{align*}
		Then we have
			\[F_e^+(\theta) = \langle f,g\rangle = \lim_k \langle f^k,g\rangle = {F^+_e}^{(k)}(\theta).\]
		In exactly the same way, one can also show that ${F^-_e}^{(k)}$ converges point-wise to $F^-_e$. Now, we observe that any aggregated edge inflow rate into edge $e = vw$ for any feasible flow over time is clearly bounded almost everywhere by $L \coloneqq \sum_{e' \in \edgesEntering{v}}\nu_{e'} + \sup_{\theta \in [0,M]}\sum_{i \in I} u_i(\theta)$ and any aggregated edge outflow rate is bounded almost everywhere by $\nu_e$ (using \cref{eq:FlowCons-Source,eq:FlowCons-Nodes,eq:RespCapacity}). Thus, all ${F^+_e}^{(k)}$ are Lipschitz-continuous with Lipschitz constant $L$ while all ${F^-_e}^{(k)}$ are Lipschitz-continuous with Lipschitz constant $\nu_e$. Also, both $F^+_e$ and $F^-_e$ are continuous. Thus, we can apply \Cref{lemma:UniformConvergenceOfPointwiseConvergence} to obtain uniform convergence.
	\end{proofClaim}
	
	\begin{claim}\label{claim:FtoTisSSC}
		The map $\big(C([0,M])\big) \to C([0,M]), (F_e^+,F_e^-) \mapsto T_e$ is sequentially strong-strong continuous.
	\end{claim}
	
	\begin{proofClaim}
		This follows directly from the definition of $T_e(\theta) \coloneqq \theta + \tau_e  + \frac{F^+_e(\theta)-F^-_e(\theta+\tau_e)}{\nu_e}$.
	\end{proofClaim}
	
	\begin{claim}
		The network loading map $C \to \F, h \mapsto f$ is sequentially weak-weak continuous.
	\end{claim}
	
	\begin{proofClaim}
		Let $(h^k)_k \subseteq K$ be a sequence of walk-flows converging weakly to some $h \in C$. Let $f^k$ and $f$ be the associated network loadings. We want to show that then $f^k$ converges weakly to $f$. By way of contradiction we assume that this is not the case. In particular, that means that there exists some $\varepsilon > 0$ and a subsequence $f^{k_j}$ as well as an element $g \in L^2([0,M])^d$ such that $\langle g,f^{k_j}\rangle > \varepsilon$ for all $j \in \IN$. 
		
		By \Cref{lemma:BoundedSupportOfNetworkLoading}, the sequence $(f^{k_j})_j$ is bounded. Since $L^2([0,M])^d$ is a reflexive Banach space, this implies that it contains a weakly convergent subsequence (see \cite[Satz 6.10]{AltFunkAna}). By some abuse of notation we will denote this subsequence by $(f^k)$ and its weak convergence point by $f'$. We will now show that $f' \in \F$ and it is a network loading for $h$, i.e. we want to show that $f'$ satisfies \cref{eq:FlowCons-Source,eq:FlowCons-Nodes,eq:RespCapacity,eq:WeakFlowCons-Edges,eq:FlowCons-Edges}. We will do that by showing that all these constraints are stable under weak limits, i.e. if they hold for all $f^{k}$ and $h^{k}$, they also holds for the weak limit points $f'$ and $h$.
		
		\Cref{eq:FlowCons-Source,eq:FlowCons-Nodes,eq:RespCapacity} are all linear constraints and, thus, it is easy to see that they are stable under weak limits. We show this explicitly for \cref{eq:FlowCons-Source} and note that the proofs for the other two constraints are completely analogous. So, assume for contradiction that \cref{eq:FlowCons-Source} does not hold for $f'$ and $h$, i.e. assume that there is (wlog) some $i \in I, W \in  \Pc'_i, \varepsilon > 0$ and a set $A \subseteq [0,M]$ of positive measure such that for all $\theta \in A$ we have $f'^{W,+}_{i,1}(\theta) - h^W_i(\theta) > \varepsilon$. Then, we clearly have $\CharF[A] \in L^2([0,M])$ and (because \cref{eq:FlowCons-Source} holds for all $f^{k}$ and $h^{k}$ and they converge weakly to $f'$ and $h$, respectively):
			\[0 = \lim_k \langle \CharF[A], f'^{k,W,+}_{i,1}(\theta) - h^{k,W}_i(\theta)\rangle = \langle \CharF[A], f'^{W,+}_{i,1}(\theta) - h^W_i(\theta)\rangle \geq \varepsilon \mu(A) > 0, \]
		which is a contradiction.
		
		Next, \cref{eq:WeakFlowCons-Edges} is stable under weak limits by \Cref{claim:fToFisWSC}. Finally, to show that \cref{eq:FlowCons-Edges} is stable under weak limit we follow the proof of \cite[Lemma 5]{CominettiCL15}. From \Cref{claim:fToFisWSC,claim:FtoTisSSC} we know that the sequences $F^{k,W,+}_{i,j}$, $F^{k,W,-}_{i,j}$ and $T^{k}_e$ converge uniformly to $F'^{W,-}_{i,j}$, $F'^{W,-}_{i,j}$ and $T'_e$, respectively. From this, we directly get that
			\begin{align*}
				F'^{W,-}_{i,j}(T'_{e_j^W}(\theta)) \overset{\text{\Cref{lemma:PointwiseConvergenceOfConcatOfUniformConvergence}}}{=} \lim_k F^{k,W,-}_{i,j}(T^k_{e_j^W}(\theta)) \overset{\text{\eqref{eq:FlowCons-Edges}}}{=} \lim_k F^{k,W,+}_{i,j}(\theta) = F'^{W,+}_{i,j}(\theta)
			\end{align*}
		for any $\theta \in \IR_{\geq 0}$.
		
		From this we can now conclude that $f'$ is a network loading for $h$. However, by \Cref{lemma:UniqueNetworkLoading}, network loadings are unique and thus we have $f' = f$ almost everywhere. This, in turn, is now a contradiction to our initial assumption that $\langle g,f^{k}\rangle > \varepsilon$ for all $j \in \IN$ since we just showed, that $f^{k_j}$ has a subsequence which weakly converges to $f' = f$.
	\end{proofClaim}	
	Combining the three claims above implies the lemma.	
\end{proof}

\begin{lemma}\label{lem:weak-strong}
	For each $W\in \Pc'_i, i\in I$, the map 
	\[C \mapsto L^2([0,T]), h\mapsto \left([0,T] \to \IR, \theta \mapsto c_i\big(\mu_{i}^W(\theta),\sum_{e\in W}p_{i,e}\big)\right)\]
	is sequentially weak-strong continuous.
\end{lemma}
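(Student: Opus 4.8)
The plan is to reduce the claim to \Cref{lemma:PathInflowToArrivalTimeIsWSC} using the recursive structure of the label functions together with continuity properties of the aggregation function $c_i$. First I would unfold the definition of $\mu_i^W(\theta) = \ell_{i,0}^W(\theta) - \theta$ and observe that the price term $\sum_{e \in W} p_{i,e}$ is a \emph{constant} (independent of $h$), so the only $h$-dependent ingredient is the label function $\ell_{i,0}^W$. It therefore suffices to show that the map $C \to C([0,M]), h \mapsto \ell_{i,0}^W$ is sequentially weak-strong continuous, and then compose with the (strong-strong continuous) maps $\ell_{i,0}^W \mapsto \mu_i^W = \ell_{i,0}^W - \mathrm{id}$ and $\mu_i^W \mapsto c_i(\mu_i^W(\cdot), \sum_{e \in W} p_{i,e})$, and finally use that uniform convergence on $[0,M] \supseteq [0,T]$ implies $L^2([0,T])$-convergence (since $[0,T]$ has finite measure).

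The core step is the sequential weak-strong continuity of $h \mapsto \ell_{i,0}^W$, which I would prove by \emph{downward induction} on the position index $j$ from $k_W$ to $0$, mirroring the recursive definition
\begin{equation*}
	\ell_{i,k_W}^W(\theta) = \theta, \qquad \ell_{i,j}^W(\theta) = \ell_{i,j+1}^W\big(T_{e_{j+1}^W}(\theta)\big).
\end{equation*}
The base case $\ell_{i,k_W}^W = \mathrm{id}$ is constant in $h$, hence trivially continuous. For the inductive step, suppose $h^k \rightharpoonup h$ weakly in $C$; by \Cref{lemma:PathInflowToArrivalTimeIsWSC} the exit-time functions $T_{e_{j+1}^W}$ converge uniformly on $[0,M]$, and by the induction hypothesis $\ell_{i,j+1}^W$ converges uniformly on $[0,M]$. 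The point is that a composition of a uniformly convergent sequence of functions with another uniformly convergent sequence converges uniformly, provided the inner functions stay within the domain $[0,M]$ of the outer ones — this is exactly what \Cref{lemma:BoundedSupportOfNetworkLoading} and the estimate $\ell_{i,0}^W(\theta) \leq M$ from its proof guarantee (all exit times of edges along $W$ starting from $[0,T]$ remain bounded by $M$), and it is presumably the content of the auxiliary lemma \Cref{lemma:PointwiseConvergenceOfConcatOfUniformConvergence} already invoked in the excerpt. Hence $\ell_{i,j}^W = \ell_{i,j+1}^W \circ T_{e_{j+1}^W}$ converges uniformly, completing the induction.

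The final step handles the aggregation function. Since $c_i : \R \times \R \to \R$ is continuous, it is uniformly continuous on the compact set $[-M, M] \times \{\sum_{e \in W} p_{i,e}\}$ (or any compact box containing the ranges), so uniform convergence of $\mu_i^W$ on $[0,T]$ yields uniform — in particular $L^2([0,T])$ — convergence of $\theta \mapsto c_i(\mu_i^W(\theta), \sum_{e \in W} p_{i,e})$. I expect the main obstacle to be purely bookkeeping: carefully tracking that all arguments of the label functions and of $c_i$ remain in a fixed compact set independent of $k$, so that one may upgrade pointwise/weak statements to uniform ones — this is where \Cref{lemma:BoundedSupportOfNetworkLoading} is essential and is the place where the argument genuinely differs from the simple-paths case of \cite{CominettiCL15}, since walks may revisit nodes and one cannot bound the number of edges by $|E|$ but must instead use the finiteness of the dominating set $\Pc'$. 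No new idea beyond \Cref{lemma:PathInflowToArrivalTimeIsWSC} is needed; the lemma is essentially its corollary.
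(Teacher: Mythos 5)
Your proposal follows essentially the same route as the paper: reduce to the sequential weak-strong continuity of $h \mapsto \ell_{i,0}^W$ via \Cref{lemma:PathInflowToArrivalTimeIsWSC} and uniform convergence of compositions of the exit-time functions, subtract the identity to get $\mu_i^W$, and then use uniform continuity of $c_i$ on a compact set to pass to uniform and hence $L^2([0,T])$ convergence. The only difference is that you spell out the downward induction along the walk explicitly, which the paper compresses into a single sentence about compositions of uniformly convergent sequences.
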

\begin{proof}
	From \Cref{lemma:PathInflowToArrivalTimeIsWSC} we can deduce that $C \mapsto C([0,T]), h \mapsto \ell_{i,0}^W$ is sequentially weak-strong continuous since it maps weakly convergent sequences to compositions of uniformly convergent sequences which, therefore, also converge uniformly. Furthermore, it is easy to see that a constant mapping like $C \mapsto C([0,T]), h \mapsto (\theta \mapsto \theta)$ is also sequentially weak-strong continuous. Thus, $C \mapsto C([0,T]), h \mapsto \mu^W_i$ is sequentially weak-strong continuous as difference of two such mappings. 
	
	Together with the continuity of $c_i$ this directly implies the lemma as follows: Let $h^k \overset{\text{w}}{\to} h$ be any weakly convergent sequence in $C$. We now have to show strong convergence of the image sequence in $L^2([0,T])$ (i.e. $L^2$-convergence). 
	
	We start by showing uniform convergence in $C([0,T])$. So, let $\varepsilon > 0$. As $c_i$ is uniformly continuous on $[0,M]$, there exists some $\delta > 0$ such that $\abs{c_i(x) - c_i(y)} \leq \varepsilon$ whenever $\abs{x-y} \leq \delta$. Furthermore, there exists some $K \in \IN$ such that for any $k \geq K$ we have $\norm{{\mu^W_i}^{(k)}-\mu^W_i}_\infty \leq \delta$ since the ${\mu^W_i}^{(k)}$ converge strongly to $\mu^W_i$ in $C([0,T],\IR_{\geq 0})$ (i.e. uniformly). This then implies that for every $k \geq K$ we have
	\begin{align*}
		&\norm{\left(c_i\big({\mu_{i}^W}^{(k)}(\_),\sum_{e\in W}p_{i,e}\big) - c_i\big(\mu_{i}^W(\_),\sum_{e\in W}p_{i,e}\big)\right)}_\infty \\
		\quad&= \sup_{\theta \in [0,T]}\left(c_i\big({\mu_{i}^W}^{(k)}(\theta),\sum_{e\in W}p_{i,e}\big) - c_i\big(\mu_{i}^W(\theta),\sum_{e\in W}p_{i,e}\big)\right) \leq \varepsilon.
	\end{align*}
	Thus, $C \mapsto L^2([0,T]), h\mapsto \left([0,T] \to \IR, \theta \mapsto c_i\big(\mu_{i}^W(\theta),\sum_{e\in W}p_{i,e}\big)\right)$ maps weakly convergent sequences to uniformly convergent sequences. Since $C([0,T]) \subseteq L^2([0,T])$ and uniform convergence implies $L^2$-convergence, this concludes the proof of this lemma.
\end{proof}

Using this lemma we can now finally show the existence of capacitated dynamic equilibria:

\begin{proof}[Proof of \Cref{thm:existence}]
	With \Cref{lem:weak-strong} we have that for each $W\in \Pc'_i, i\in I$, the map $h \mapsto c_i\left(\mu_{i}^W(\_),\sum_{e\in W}p_{i,e}\right)$	is weak-strong continuous from $C$ to $L^2([0,T])$. Taking the minimum of finitely many weak-strong continuous mappings results in a weak-strong continuous mapping
	and, finally, the difference of two weak-strong continuous  mappings is also weak-strong continuous.
	Thus, $\A$ is sequentially weak-strong-continuous from $C$ to $L^2([0,T])^{\Pc'}$. Applying \Cref{brezis} provides a solution $h^*$ for $\VI(C,\A)$. It remains to show that this is, in fact, a capacitated dynamic equilibrium. We will do this by contradiction, i.e. we assume that $h^*$ is not a capacitated dynamic equilibrium and show that in this case we get a new walk-flow $\bar h$ which contradicts \eqref{eqn:VI}. 
	
	\begin{claim}\label{claim:hStarNoCDE}
		If $h^*$ is not a capacitated dynamic equilibrium, then there exists a time $\bar\theta \geq 0$, a commodity $i$, two walks $W,Q \in \Pc'_{i}$ and three positive numbers $\varepsilon, \delta, \gamma > 0$ such that
		\begin{itemize}
			\item $H^{W \to Q}_i(h^*,\bar\theta,\varepsilon,\delta) \in S$,
			\item $c_i\left(\mu_{i}^W(\theta),\sum_{e\in W}p_{i,e}\right) - c_i\left(\mu_{i}^Q(\theta),\sum_{e\in Q}p_{i,e}\right) \geq \gamma$ for all $\theta \in [\bar\theta,\bar\theta+\delta]$ and
			\item $\int_{\bar\theta}^{\bar\theta+\delta}\min\set{{h^*}^W_i(\theta),\varepsilon}\diff\theta > 0$.
		\end{itemize}
	\end{claim}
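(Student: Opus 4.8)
The plan is to unfold the failure of the equilibrium condition of \Cref{def:DCE}, cut the a priori infinitely many competing walks down to the finite dominating set $\Pc'$, and then upgrade an ``almost every $\bar\theta$'' statement into a statement on an honest interval by combining continuity of the cost maps with the Lebesgue density theorem. So suppose $h^*$ (identified as usual with its embedding, an element of $S\cap K$) is not a capacitated dynamic equilibrium. Then \Cref{def:DCE} fails for some pair $(i,W)\in\Pc$ on a set $B\subseteq[0,T]$ of positive measure: for every $\bar\theta\in B$ we have ${h^*}^W_i(\bar\theta)>0$ and there is a walk $Q_{\bar\theta}\in D^W_i(h^*,\bar\theta)$ with $c_i\big(\mu_{i}^W(\bar\theta),\sum_{e\in W}p_{i,e}\big) > c_i\big(\mu_{i}^{Q_{\bar\theta}}(\bar\theta),\sum_{e\in Q_{\bar\theta}}p_{i,e}\big)$. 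Since $h^*\in C$ is supported only on walks of $\Pc'$, the coordinate ${h^*}^W_i$ is not almost everywhere zero, and hence $W\in\Pc'_i$.

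Next I reduce the competitors to $\Pc'$. For each $\bar\theta\in B$, \Cref{def:dominating} supplies a dominating walk $(i,Q'_{\bar\theta})\in\Pc'$ of $(i,Q_{\bar\theta})$; by definition $Q'_{\bar\theta}\in D^W_i(h^*,\bar\theta)$ and $c_i(\mu_{i}^{Q'_{\bar\theta}}(\cdot),\cdot)\le c_i(\mu_{i}^{Q_{\bar\theta}}(\cdot),\cdot)$ pointwise on $[0,T]$, so the strict cost inequality above survives with $Q'_{\bar\theta}$ in place of $Q_{\bar\theta}$. As $\Pc'_i$ is finite, pigeonholing over the value of $Q'_{\bar\theta}$ and discarding a null set yields a single walk $Q\in\Pc'_i$ together with a measurable set of positive measure, still called $B$, on which simultaneously ${h^*}^W_i>0$, $Q\in D^W_i(h^*,\bar\theta)$, and $g(\bar\theta):=c_i\big(\mu_{i}^W(\bar\theta),\sum_{e\in W}p_{i,e}\big)-c_i\big(\mu_{i}^Q(\bar\theta),\sum_{e\in Q}p_{i,e}\big)>0$. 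Writing $B\subseteq\bigcup_{n\in\IN}\{g\ge 1/n\}$ and pigeonholing once more gives a $\gamma_0>0$ with $B':=B\cap\{g\ge\gamma_0\}$ of positive measure; set $\gamma:=\gamma_0/2$. The map $\theta\mapsto c_i\big(\mu_{i}^W(\theta),\sum_{e\in W}p_{i,e}\big)$ is continuous (as noted in the definition of $\A$), hence $g$ is continuous, $\{g>\gamma\}$ is open and contains $B'$, and decomposing it into connected components yields an open interval $J$ with $g>\gamma$ on $J$ and $B'\cap J$ of positive measure.

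Now I extract the required data. Put $E:=\{\theta\in[0,T]\mid {h^*}^W_i(\theta)>0\}$, so $B'\subseteq E$ up to a null set. By the Lebesgue density theorem almost every point of $E$ is a density point of $E$, so I may pick $\bar\theta\in B'\cap J$ which is such a density point; then $|E\cap(\bar\theta,\bar\theta+r)|\ge r/2$ for all sufficiently small $r>0$. Fix $\delta'>0$ small enough that $[\bar\theta,\bar\theta+\delta']\subseteq J$ and that this last inequality holds for all $r\le\delta'$. Because $\bar\theta\in B'$ we have $Q\in D^W_i(h^*,\bar\theta)$, so the definition of $D^W_i(h^*,\bar\theta)$ in \eqref{eq:feasible-deviation} applied with this $\delta'$ produces $\delta\in(0,\delta']$ and $\varepsilon>0$ with $H^{W\to Q}_i(h^*,\bar\theta,\varepsilon,\delta)\in S$ --- the first bullet. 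Since $[\bar\theta,\bar\theta+\delta]\subseteq[\bar\theta,\bar\theta+\delta']\subseteq J$, we have $g\ge\gamma$ on $[\bar\theta,\bar\theta+\delta]$ --- the second bullet. Finally $|E\cap[\bar\theta,\bar\theta+\delta]|\ge\delta/2>0$, and on $E$ the integrand $\min\{{h^*}^W_i,\varepsilon\}$ is strictly positive, so $\int_{\bar\theta}^{\bar\theta+\delta}\min\{{h^*}^W_i(\theta),\varepsilon\}\diff\theta>0$ --- the third bullet. With $W,Q\in\Pc'_i$ established above, this is exactly the assertion of the claim.

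The one genuinely delicate point is the third bullet: the width $\delta$ of the shifting window is not ours to choose --- it is handed to us by the definition of $D^W_i(h^*,\bar\theta)$ --- and if ${h^*}^W_i$ vanished almost everywhere on $[\bar\theta,\bar\theta+\delta]$ then $H^{W\to Q}_i(h^*,\bar\theta,\varepsilon,\delta)$ would coincide with $h^*$ and the integral would be $0$. Anchoring $\bar\theta$ at a Lebesgue density point of $\{{h^*}^W_i>0\}$ and shrinking $\delta'$ into the corresponding regime is precisely what rules this out, uniformly over every admissible $\delta\le\delta'$; everything else is bookkeeping with the finite walk set $\Pc'$ and with the continuity of the cost maps already in hand.
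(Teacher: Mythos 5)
Your proposal is correct, and it reaches the claim by a genuinely different route in its one nontrivial step. The skeleton is the same as the paper's: unfold the failure of \Cref{def:DCE} on a positive-measure set, use \Cref{def:dominating} plus finiteness of $\Pc'$ to pigeonhole down to a single competitor $Q\in\Pc'_i$, and use continuity of the cost maps to turn a pointwise strict inequality into a uniform gap $\gamma$ on a closed interval. The divergence is in how the third bullet is secured. The paper assigns to each bad time $\bar\theta$ its own window $\delta_{\bar\theta}$ (already shrunk so the cost gap holds throughout) and then invokes its bespoke covering lemma (\Cref{lemma:covering}, proved in the appendix via countability of ``boundary'' points, inner regularity and compactness) to find one $\bar\theta$ whose window meets the bad set in positive measure. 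You instead decouple the two requirements: the cost gap is handled by extracting an open interval $J$ from the open set $\{g>\gamma\}$, and the positivity of $\int\min\{{h^*}^W_i,\varepsilon\}$ is handled by anchoring $\bar\theta$ at a Lebesgue density point of $E=\{{h^*}^W_i>0\}$, which gives the lower bound $\abs{E\cap[\bar\theta,\bar\theta+r]}\geq r/2$ \emph{uniformly} over all small $r$ --- exactly what is needed, since the definition \eqref{eq:feasible-deviation} only hands you some $\delta\leq\delta'$ rather than letting you choose it. Your order of quantifiers (fix $\delta'$ first, then receive $\delta$ and $\varepsilon$) is the cleanest reading of \eqref{eq:feasible-deviation}, and determining $\gamma$ before $\bar\theta$ and $\delta$ via the double pigeonhole over $\{g\geq 1/n\}$ tidies the bookkeeping. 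What your approach buys is the replacement of the paper's ad hoc covering lemma by a standard textbook theorem; what it costs is essentially nothing, beyond the small observation (which you supply) that a two-sided density point yields the one-sided estimate. The remaining glosses --- measurability of the bad set, behaviour near $\theta=T$ --- are present to the same degree in the paper's own argument.
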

	
	\begin{proofClaim}
		If $h^*$ is not a capacitated dynamic equilibrium then, by definition, there exists some commodity $i$, a walk $W \in \Pc'_i$ and a subset $J \subseteq [0,T]$ of positive measure such that for all $\bar\theta \in J$ we have ${h^*}^W_i(\bar\theta) > 0$ and there exists some $Q_{\bar\theta} \in D^W_i(h^*,\bar\theta)$ with 
		\begin{align}\label{eq:CDE-condition-violated}
			c_i\left(\mu_{i}^W(\bar\theta),\sum_{e\in W}p_{i,e}\right) > c_i\left(\mu_{i}^{Q_{\bar\theta}}(\bar\theta),\sum_{e\in Q_{\bar\theta}}p_{i,e}\right)
		\end{align}
		From the definition of $D^W_i(h^*,\bar\theta)$ we get for every such $\bar\theta$ some constants $\delta_{\bar\theta}, \varepsilon_{\bar\theta} > 0$ such that $H^{W\to Q_{\bar\theta}}_i(h^*,\bar\theta,\varepsilon_{\bar\theta},\delta_{\bar\theta}) \in S$. Since $\Pc'$ is a dominating set with respect to $S$ we can, wlog, assume that all $Q_{\bar\theta} \in \Pc'$. Furthermore, since $\Pc'$ is finite, there must be some $Q \in \Pc'$ such that we can restrict $J$ to only those $\bar\theta$ where we can choose $Q_{\bar\theta} = Q$ and still have that $J$ has positive measure. Finally, because $c_i\left(\mu_{i}^W(\theta),\sum_{e\in W}p_{i,e}\right)$ is continuous in $\theta$ we can wlog assume that each $\delta_{\bar\theta}$ is small enough such that \eqref{eq:CDE-condition-violated} holds for all $\theta \in [\bar\theta,\bar\theta+\delta_{\bar\theta}]$. 		
		
		Then, by \Cref{lemma:covering}, there exists some $\bar\theta \in J$ such that $J \cap [\bar\theta,\bar\theta+\delta_{\bar\theta}]$ still has positive measure. Consequently, the fact that we have $\min\set{{h^*}^W_i(\theta),\varepsilon_{\bar\theta}}>0$ for all $\theta \in J \cap [\bar\theta,\bar\theta+\delta_{\bar\theta}]$ implies $\int_{\bar\theta}^{\bar\theta+\delta_{\bar\theta}}\min\set{{h^*}^W_i(\theta),\varepsilon_{\bar\theta}}\diff\theta > 0$.
		Thus, setting $\varepsilon \coloneqq \varepsilon_{\bar\theta}, \delta \coloneqq \delta_{\bar\theta}$ and, finally, $\gamma \coloneqq \min\set{c_i\left(\mu_{i}^W(\theta),\sum_{e\in W}p_{i,e}\right) - c_i\left(\mu_{i}^Q(\theta),\sum_{e\in Q}p_{i,e}\right) | \theta \in [\bar\theta,\bar\theta+\delta]}$ gives us the desired objects.
	\end{proofClaim}

	Now, assume that a solution $h^*$ to \eqref{eqn:VI} is not a capacitated dynamic equilibrium. Then, using $\bar\theta, i, W, Q, \varepsilon, \delta, \gamma$ and $h'$ from \Cref{claim:hStarNoCDE} and setting $\bar h \coloneqq H^{W \to Q}_i(h^*,\bar\theta,\varepsilon,\delta)$ we clearly have $\bar h \in K$. Furthermore, $\bar h$ only uses walks that are already used in $h^*$ and additionally walk $Q$. Therefore all walks used by $\bar h$ are in $\Pc'$. Thus, we can conclude that $\bar h \in C$. But at the same time we also have
	\begin{align*}
		&\scalar{\A(h^*)}{\bar h-h^*} 
			= \int_0^T \scalar{\A(h^*(\theta))}{\bar h(\theta)-h^*(\theta)}\diff\theta\\
		&\quad\quad=\int_{\bar\theta}^{\bar\theta+\delta}\A(h^*)_W(\theta)\cdot\left(\bar h^W_i(\theta)-{h^*}^W_i(\theta)\right) + \A(h^*)_Q(\theta)\cdot\left(\bar h^Q_i(\theta)-{h^*}^Q_i(\theta)\right)\diff\theta \\
		&\quad\quad=\int_{\bar\theta}^{\bar\theta+\delta}\left(\A(h^*)_Q(\theta)-\A(h^*)_W(\theta)\right)\cdot\min\set{{h^*}^W_i(\theta),\varepsilon}\diff\theta \\
		&\quad\quad=\int_{\bar\theta}^{\bar\theta+\delta}\left(c_i\left(\mu_{i}^Q(\bar\theta),\sum_{e\in Q}p_{i,e}\right) - c_i\left(\mu_{i}^W(\bar\theta),\sum_{e\in W}p_{i,e}\right)\right)\cdot\min\set{{h^*}^W_i(\theta),\varepsilon}\diff\theta\\
		&\quad\quad\leq -\gamma \int_{\bar\theta}^{\bar\theta+\delta}\min\set{{h^*}^W_i(\theta),\varepsilon}\diff\theta < 0,
	\end{align*}
	which is a contradiction to $h^*$ being a solution to \eqref{eqn:VI}. Therefore, any solution $h^* \in C$ to \eqref{eqn:VI} is also a capacitated dynamic equilibrium and, in particular there always exists a capacitated dynamic equilibrium.
\end{proof}

\subsection{Special Cases}

We discuss two special cases for which our existence theorem can be applied by suitable choices of the abstract restriction set $S$: dynamic equilibria and energy-feasible dynamic equilibria.

\subsubsection{Dynamic Equilibria}

If we choose $S = L^2([0,T])^{\Pc}$, then capacitated dynamic equilibria are exactly the dynamic equilibria as defined in~\cite{CominettiCL15,Friesz93,Koch11,ZhuM00,MeunierW10}. To see this, note, that in this case we always have $D^W_i(h,\bar\theta) = \Pc_i$. Thus, \eqref{eq:CDE} translates to the constraint that whenever there is positive inflow into some walk $W$, this walk has to be a shortest walk at that time. 
Since dynamic flows in the Vickrey-model satisfy FIFO, the set of simple paths is a dominating set for the set of all walks with respect to $S = L^2([0,T])^{\Pc}$ (i.e. removing a cycle from a walk can never increase its aggregated cost). As the set of simple paths is clearly finite, one can use~\Cref{thm:existence} to show existence of dynamic equilibria. Note that the classical existence proofs for dynamic equilibria (e.g. by Han et. al.~\cite{Han2013} or Cominetti et. al.~\cite{CominettiCL15}) usually have the restriction to simple paths as part of the model itself, i.e. they only allow walk-flows from $L^2([0,T])^{\Pc'}$ where $\Pc'$ is the set of simple source-sink paths.

\subsubsection{Energy-Feasible Dynamic Equilibria}

Now let us turn to the case of energy-feasible dynamic equilibria, i.e. equilibria of flows in battery-extended networks. We show that~\Cref{thm:existence} implies the existence of energy-feasible dynamic equilibria.

\begin{theorem}
	Let $\network$ be any battery-extended network network and $S:=\iota(L^2([0,T])^{\Pc_b}) \subseteq L^2([0,T])^{\Pc}$. Then, there exists an energy-feasible dynamic equilibrium in $\network$, i.e. a capacitated dynamic equilibrium with respect to $S$.
\end{theorem}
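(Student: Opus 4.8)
The plan is to deduce the statement from \Cref{thm:existence} by verifying its three hypotheses for $S := \iota(L^2([0,T])^{\Pc_b})$; an energy-feasible dynamic equilibrium is then, by construction, precisely a capacitated dynamic equilibrium with respect to this $S$. Two of the hypotheses are immediate: $S$ is the linear subspace of $L^2([0,T])^{\Pc}$ consisting of all walk-flows that vanish in every coordinate $(i,W)\notin\Pc_b$, hence it is closed and convex; and $S\cap K\neq\emptyset$, since by assumption every commodity $i$ has at least one energy-feasible walk, and routing all of $u_i$ along it yields an element of $S\cap K$. What remains — and where essentially all the work lies — is to produce a \emph{finite} dominating set $\Pc'\subseteq\Pc$ with respect to $S$.

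The engine for this is that in the Vickrey model \emph{deleting a cycle from a walk can never increase its cost}. Concretely, for any feasible flow over time the exit-time maps $T_e$ are non-decreasing and satisfy $T_e(\theta)\geq\theta$, and $\ell^W_{i,0}$ is the composition of one such map per edge of $W$; so if $W'$ is obtained from $W$ by deleting a cyclic sub-walk $(e_{j+1},\dots,e_{j'})$ with $v_j=v_{j'}$, then $\mu_i^{W'}(\theta)\leq\mu_i^W(\theta)$ for every $\theta$ and every $h\in K\cap S$, while $\sum_{e\in W'}p_{i,e}\leq\sum_{e\in W}p_{i,e}$ because prices are non-negative; monotonicity of $c_i$ then gives $c_i(\mu_i^{W'}(\theta),\sum_{e\in W'}p_{i,e})\leq c_i(\mu_i^W(\theta),\sum_{e\in W}p_{i,e})$ for all $\theta$. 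A short computation with the definitions of $H^{W\to Q}_i$ and $D^W_i$ shows that for this particular $S$ one has $D^W_i(h,\bar\theta)=\Pc_{i,b}$ whenever $h^W_i$ does not vanish a.e.\ near $\bar\theta$ (and $D^W_i(h,\bar\theta)=\Pc_i$ otherwise); hence $W'$ is a \emph{dominating walk} for $W$ as soon as $W'$ is energy-feasible whenever $W$ is. For a walk $Q$ that is not energy-feasible this lets us delete all cycles and reduce $Q$ to a simple $s_i$-$t_i$ path, of which there are finitely many.

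The remaining task is the energy-feasible case: reduce each energy-feasible walk, by repeatedly deleting cycles whose removal keeps \Cref{def:battery} satisfied, to an \emph{irreducible} energy-feasible walk — one admitting no such deletion — and bound the length of irreducible walks. Here the relevant monotonicity is that the battery update $\beta\mapsto\min\{\beta-b_{i,e},b_i^{\max}\}$ is non-decreasing, so a walk suffix that is energy-feasible from a battery level $\beta'$ is also energy-feasible from any $\beta\geq\beta'$. Consequently, in an irreducible energy-feasible walk the deletion of any contiguous closed sub-walk must break feasibility, which forces every revisit of a node to occur at a \emph{strictly} larger battery level; since battery levels lie in $[0,b_i^{\max}]$, a node cannot be revisited too often, and a counting/potential argument — using that there are only finitely many simple cycles (each net-recharging one raising the battery by at least a fixed positive amount) and that the total energy recharged along a walk is at most the total energy consumed plus $b_i^{\max}$ — yields a bound $M'$ on the length of irreducible energy-feasible walks that depends only on $\network$. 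Taking $\Pc'$ to be the (finite) union of all simple $s_i$-$t_i$ paths and all irreducible energy-feasible $s_i$-$t_i$ walks then gives the desired finite dominating set, and \Cref{thm:existence} finishes the proof.

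The step I expect to be the main obstacle is exactly this uniform length bound on irreducible energy-feasible walks; the delicate point is the truncation of the battery at $b_i^{\max}$, which breaks the naive ``each recharging cycle adds a fixed amount'' accounting and has to be handled by tracking the truncation losses separately. Everything else is either a routine verification of hypotheses or a reuse of the cycle-deletion monotonicity already needed to recover the unconstrained dynamic-equilibrium case.
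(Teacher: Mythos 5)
Your proposal follows essentially the same route as the paper's proof: verify the hypotheses of \Cref{thm:existence} for $S=\iota(L^2([0,T])^{\Pc_b})$, with all the substance concentrated in the finite dominating set, obtained by cycle deletion (FIFO plus monotonicity of $c_i$ and non-negativity of prices for the cost comparison, monotonicity of the battery update $\beta\mapsto\min\{\beta-b_{i,e},b_i^{\max}\}$ for preservation of energy-feasibility), together with the observation that in a walk admitting no such deletion the battery levels at successive visits to a fixed node must be strictly increasing. The one step you flag as the main obstacle --- a uniform length bound on irreducible energy-feasible walks --- is precisely where the paper inserts its only quantitative device: it sets $\alpha_i$ to be the minimum \emph{positive} value of $\sum_{e\in E'}b_{i,e}$ over edge subsets $E'\subseteq E$, argues that each strict increment of the revisit levels is at least $\alpha_i$, and concludes that no node is visited more than $\kappa_i=b_i^{\max}/\alpha_i$ times, so that $\Pc'_i:=\{W\in\Pc_i \mid \psi_W(v)\le\kappa_i \text{ for all } v\}$ is finite. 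Your instinct that the truncation at $b_i^{\max}$ (and, one could add, the fact that closed sub-walks are edge \emph{multisets} rather than subsets) makes the ``each cycle adds at least a fixed amount'' accounting delicate is sound --- the paper asserts the $\alpha_i$ lower bound without addressing either point --- but structurally your argument and the paper's coincide, and the strict monotonicity of revisit levels inside $[0,b_i^{\max}]$ is the load-bearing idea in both.
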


\begin{proof}
	First, it is quite obvious that $S$ is closed and convex and has non-empty intersection with $K$ (using our assumption that every commodity has at least one energy-feasible source-sink walk).
	For the existence of a finite dominating walk set, we will show that due to the FIFO condition in the Vickrey model, there exists a constant $\kappa_i$ such that for every agent playing against any walk choices of all other agents there exists an optimal strategy which enters any (recharging) node at most $\kappa_i$ times.
	To see this let us define the following quantities
	 \[ \kappa_{i}:=\max\left\{\frac{b_i^{\max}}{\alpha_{i}}\right\},
	\text{ where } 
	 \alpha_i:=\min_{E'\subseteq E}\left\{\sum_{e\in E'} b_{i,e}\vert \sum_{e\in E'} b_{i,e}>0\right\}.\]
	The quantity $\alpha_i$ is a lower bound on the minimum positive increment for $i\in I$ along any simple cycle.
	 Suppose there is some  node $v$, which is visited $k\in \IN$ times by a walk $W$ of commodity $i$. By renaming indices, we can assume that  $v$ appears in $W$
	 in the order $v_1,\dots,v_k$ with $v_j=v, j\in [k] $. Clearly, whenever we have $b_W(v_{\ell}) \geq b_W(v_j)$ for some $\ell < j$, we can delete the cycles between $v_{\ell}$ and $v_j$ to obtain another energy-feasible walk $W'$ of the same commodity. Due to FIFO and the fact that the aggregation function $c_i$ is non-decreasing, the new walk $W'$ then has at most the same aggregated cost as $W$. Thus, commodity $i$ always has an optimal walk where the sequence $b_W(v_1)<\cdots<b_W(v_k)$ is monotonically increasing with increments of at least $\alpha_i>0$. With $b_W(v_k)\leq b_i^{\max}$, we get $k\leq \kappa_{i}$ as wanted. To explicitly construct a  finite dominating set $\Pc'$,  for a walk  $W\in \Pc_i$, we
	define $\psi_W(v):=|\{v_j\in \hat W\vert v_j=v, j\in [k_W]\}|$.
	Recall that $\hat W$ is the node-multiset representation of $W$.
	Then, a finite dominating set is given as
	\[ \Pc':=\{(i,W)\vert W\in \Pc'_i, i\in I\} \text{ with } \Pc'_i:=\{W\in \Pc_i\vert \psi_W(v)\leq \kappa_{i} \text{ for all }v\in \hat W\}, i\in I.\]
	Thus, all conditions of~\Cref{thm:existence} are satisfied and we obtain the existence of  
	an energy-feasible dynamic equilibrium.
\end{proof}

\section{Computational Study}
In this section, we focus on computing energy-feasible dynamic equilibria on a set of
moderate sized networks. We discretize the continuous time scale and then use a fixed point algorithm similar to the one used by Han et al. in \cite{han2019computing} to compute walk-flows in which agents only use walks which are close to the least expensive (with regards to total costs) energy-feasible walks.

\subsection{A Fixed Point Algorithm}

The main steps of our algorithm are as follows: First we compute the set of energy-feasible walks $\Pc_{b}$ as well as one initial walk-flow $h \in K$ using only those walks (e.g. by sending the whole flow volume of any commodity $i$ along a physically shortest walk in $\Pc_{i,b}$). Next, we use the network loading procedure from the proof of \Cref{lemma:UniqueNetworkLoading} to determine the feasible flow $f$ associated with $h$ and derive from this the total travel times $\mu^W_i$ for all walks and commodities. Finally, we update the walk-flow $h$ by shifting flow from walks with high total cost to walks with lower total cost, whereby the amount of flow shifted is proportional to the difference in cost between the two walks, and again compute the network loading for the updated walk-flow. We then repeat this update process until we reach a walk-flow $h$ that changes only very little during the update-step or, equivalently, where the walks used in $h$ all have costs very close to the minimal cost under the current walk-flow $h$.

More formally, the update is computed as follows: Given a walk-flow $h^k$ we denote the costs determined by the associated feasible flow by $c^h_{W,i}(\theta) \coloneqq c_i\left(\mu^W_i(\theta),\sum_{e \in W}p_{i,e}\right)$ for all commodities $i \in I$, energy-feasible walks $W \in \Pc_{i,b}$ and times $\theta \in [0,T]$. We then want to compute functions $v_i: \mathbb{R}_{\geq 0} \to \mathbb{R}$ for each $i \in I$ such that
\begin{align}\label{FP-update}
	\sum_{W \in \Pc_{i,b}} \left[h^{k,W}_{i}(\theta) - \alpha^k \cdot
	c^h_{W,i}(\theta) + v_i(\theta)\right]_+ = u_i(\theta), 
	\quad \text{ for all } \theta \in [0,T] \tag{FP-Update},
\end{align}
where $\alpha^k > 0$ is the current step size, and obtain the new walk-flow $h^{k+1}$ by setting $$h^{k+1,W}_{i}(\theta) := \left[h^{k,W}_{i}(\theta)
- \alpha^k \cdot c^h_{W,i}(\theta) + v_i(\theta)\right]_+.$$
We observe that fixed points of this update procedure are exactly the energy-feasible dynamic equilibria:
\begin{lemma}
	$h^k$ corresponds to a energy-feasible dynamic equilibrium if and only if $h^{k+1} = h^k$ almost everywhere.
\end{lemma}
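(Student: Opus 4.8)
The plan is to argue pointwise in $\theta$ and per commodity, reading \eqref{FP-update} as the complementarity condition of a truncated ``projection'' step. First I would record the precise shape of the equilibrium condition in this setting: since $S=\iota(L^2([0,T])^{\Pc_b})$ is one of the walk-restricting sets discussed above, we have $D^W_i(h,\bar\theta)=\Pc_{i,b}$ for every $i$, $W$, $\bar\theta$, so an energy-feasible dynamic equilibrium is characterized by the statement that for almost all $\theta$ and every commodity $i$, every walk $W\in\Pc_{i,b}$ with $h^{k,W}_i(\theta)>0$ satisfies $c^h_{W,i}(\theta)=m_i(\theta)$, where $m_i(\theta)\coloneqq\min_{Q\in\Pc_{i,b}}c^h_{Q,i}(\theta)$. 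I would also note at the outset that the update $h^{k+1}$ is well defined regardless of which solution $v_i$ of \eqref{FP-update} is chosen: if $v_i(\theta)\le v_i'(\theta)$ both satisfy \eqref{FP-update}, then each summand $[h^{k,W}_i(\theta)-\alpha^k c^h_{W,i}(\theta)+v]_+$ is non-decreasing in the shift $v$ while the two sums coincide, forcing every summand to coincide.

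For the direction ``equilibrium $\Rightarrow$ fixed point'' I would, for almost every $\theta$, simply exhibit the valid choice $v_i(\theta)\coloneqq\alpha^k m_i(\theta)$. For a walk with $h^{k,W}_i(\theta)>0$ the equilibrium gives $c^h_{W,i}(\theta)=m_i(\theta)$, so the bracket equals $h^{k,W}_i(\theta)>0$; for a walk with $h^{k,W}_i(\theta)=0$ the bracket equals $\alpha^k\bigl(m_i(\theta)-c^h_{W,i}(\theta)\bigr)\le 0$, so its positive part is $0$. Summing over $W\in\Pc_{i,b}$ yields $\sum_W h^{k,W}_i(\theta)=u_i(\theta)$, so this $v_i$ solves \eqref{FP-update} and produces $h^{k+1,W}_i(\theta)=h^{k,W}_i(\theta)$ for all $W$; collecting this over all (finitely many) commodities and a.a. $\theta$ gives $h^{k+1}=h^k$ almost everywhere.

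For the converse I would argue by contradiction: assume $h^{k+1}=h^k$ a.e. but that, on a set of positive measure of times $\theta$, there is a commodity $i$ and a walk $W_0\in\Pc_{i,b}$ with $h^{k,W_0}_i(\theta)>0$ and $c^h_{W_0,i}(\theta)>m_i(\theta)$. Fixing such a $\theta$ at which additionally $h^{k+1}_i(\theta)=h^k_i(\theta)$, the identity $h^{k,W_0}_i(\theta)=[\,h^{k,W_0}_i(\theta)-\alpha^k c^h_{W_0,i}(\theta)+v_i(\theta)\,]_+>0$ forces $v_i(\theta)=\alpha^k c^h_{W_0,i}(\theta)$. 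Substituting this into the update for a minimizing walk $W_1$ (so $c^h_{W_1,i}(\theta)=m_i(\theta)$) gives $h^{k+1,W_1}_i(\theta)=[\,h^{k,W_1}_i(\theta)+\alpha^k\bigl(c^h_{W_0,i}(\theta)-m_i(\theta)\bigr)\,]_+=h^{k,W_1}_i(\theta)+\alpha^k\bigl(c^h_{W_0,i}(\theta)-m_i(\theta)\bigr)>h^{k,W_1}_i(\theta)$, contradicting $h^{k+1,W_1}_i(\theta)=h^{k,W_1}_i(\theta)$. Hence no such $\theta$ exists, which is exactly the equilibrium condition.

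The computation is elementary — it is just the Karush--Kuhn--Tucker reading of the truncated update — so the points that need care are purely bookkeeping: the well-definedness of $h^{k+1}$ independent of $v_i$, merging the various ``almost all $\theta$'' exceptional sets over the finitely many commodities (and the finite walk set actually handled by the algorithm) into a single null set, and the degenerate case $u_i(\theta)=0$, where $h^{k,W}_i(\theta)=0$ for all $W\in\Pc_{i,b}$ and both the update and the equilibrium condition hold vacuously. I expect the only mild obstacle to be stating the argument so that it is clean when $\Pc_{i,b}$ is infinite, which is handled by observing that only a cost-minimizing walk $W_1$ and the walks carrying positive flow at time $\theta$ enter the reasoning, so it suffices that $m_i(\theta)$ be attained — which holds for the finite walk sets used in the algorithm.
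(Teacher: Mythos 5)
Your proof is correct and follows essentially the same route as the paper's: both read the fixed-point identity $h^{k,W}_i(\theta)=\bigl[h^{k,W}_i(\theta)-\alpha^k c^h_{W,i}(\theta)+v_i(\theta)\bigr]_+$ as a complementarity condition forcing $c^h_{W,i}(\theta)=v_i(\theta)/\alpha^k$ on walks with positive flow and $c^h_{W,i}(\theta)\geq v_i(\theta)/\alpha^k$ on all others, which is exactly the equilibrium characterization. Your added remarks (well-definedness of $h^{k+1}$ independent of the chosen solution $v_i$, and the handling of null sets and the case $u_i(\theta)=0$) are correct refinements of details the paper leaves implicit.
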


\begin{proof}
	First note, that as explained after \Cref{def:DCE} a walk-flow $h^k$ corresponds to an energy-feasible dynamic equilibrium if and only if for all $i \in I$ and almost all $\theta \in [0,T]$
	\[h^{k,W}_{i}(\theta) > 0 \implies c^h_{W,i}(\theta) = \min\{c^h_{W',i}(\theta)
	\,|\, W' \in \Pc_{i,b} \}.\]
	
	Now assume that $h^{k+1} = h^k$, i.e. $h^{k,W}_{i}(\theta) := \left[h^{k,W}_{i}(\theta)
	- \alpha^k \cdot c^h_{W,i}(\theta) + v_i(\theta)\right]_+$ for
	all commodities $i$, walks $W \in \Pc_{i,b}$ and times $\theta$. This is equivalent to
	\[\max\left\{v_i(\theta)-\alpha^k c^h_{W,i}(\theta),-h^{k,W}_{i}(\theta)\right\} = 0.\]
	From this, a direct computation shows that $c^h_{W,i}(\theta) \geq
	\frac{v_i(\theta)}{\alpha^k}$ for all walks $W \in \Pc_{i,b}$ with $h^{k,W}_{i}(\theta) > 0$
	implies $c^h_{W,i}(\theta) = \frac{v_i(\theta)}{\alpha^k}$. Thus, $h^k$
	corresponds to a energy-feasible dynamic equilibrium. The other direction of the proof is completely analogous.
\end{proof}

\begin{algorithm2e}
	\caption{A Fixed-point algorithm for computing energy-feasible dynamic equilibrium flows}\label{algo:fixedPoint}
	\KwIn{A battery-expanded network $\network$, constants $N \in \IN$, $\alpha^0>0$}
	Compute the set of energy-feasible walks $\Pc_{i,b}$ for each commodity $i \in I$ \\
	Pick some initial walk-flow $h^0$ and set the iteration count $k \leftarrow 0$.\\
	\Repeat{$\frac{\norm{h^{k+1} - h^k}}{\norm{h^k}} \leq \varepsilon$}{
		Calculate the network loading for $h^k$ and determine the values $c^{h,j}_{W,i}$. \\
		Find values $v_i^j$ satisfying
		\begin{align}
			\sum_{W \in \Pc_{i,b}} \left[\bar h^{k,W,j}_{i} - \alpha^k \cdot
			\bar c^{h,j}_{W,i} + v^j_i\right]_+ = u^j_i \text{ for all } i \in I, j \in {1, \dots, N} \tag{FP-Update}
		\end{align}	
		Set $h^{k,W,j+1}_{i} \coloneqq \left[\bar h^{k,W,j}_{i} - \alpha^k \cdot
		\bar c^{h,j}_{W,i} + v^j_i\right]_+$, $\alpha^{k+1} \coloneqq s(\alpha^k,h^k,h^{k+1})$ and increment $k$.
	}
\end{algorithm2e}

Now, in order to be able to compute the network loading, the travel cost functions and updates we will use discretized time instead of the continuous time of our formal model. Therefore, we split the time interval $[0,T]$ into $N$ parts using equally spaced breakpoints $a_0,a_1,\ldots,a_N$ and then only consider walk-flows which are constant on each of the intervals $[a_{j-1},a_{j}]$, leading in turn to an associated feasible flow with piecewise constant flow rates as well (though, possibly, over smaller intervals). This allows us to exactly compute the network loading. For the update step we then also approximate the network inflow rates $u_i$ for each interval by a single value $\bar u_i^j \coloneqq \int_{a_{j-1}}^{a_j} u_i(\theta) d\theta$ and the cost function $c^h_{W,i}(\theta)$ by the value attained in the middle of each interval, i.e. by $\bar c^{h,j}_{W,i} \coloneqq c^h_{W,i}((a_j+a_{j-1})/2)$. Finally, we denote by $\bar h^{k,W,j}_i$ the value of $h^{k,W}_i$ within the interval $[a_{j-1},a_j]$. This then results in the following discretized version of \eqref{FP-update}: Find values $v_i^j$ satisfying
\begin{align}\label{FP-update-disc}
	\sum_{W \in \Pc_{i,b}} \left[\bar h^{k,W,j}_{i} - \alpha^k \cdot
	\bar c^{h,j}_{W,i} + v^j_i\right]_+ = u^j_i \text{ for all } i \in I, j \in {1, \dots, N}. \tag{FP-Update}
\end{align}
The new walk-flow $h^{k+1}$ is then defined by setting 
$$\bar h^{k+1,W,j}_{i} \coloneqq \left[\bar h^{k,W,j}_{i} - \alpha^k \cdot
\bar c^{h,j}_{W,i} + v^j_i\right]_+ \text{ for all } i \in I, j \in {1, \dots, N}.$$ 
To solve \eqref{FP-update-disc} for $v^j_{i}$, standard root finding algorithms can be used.
We use the Newton's method (available in the Scipy package of Python).

Our complete fixed point algorithm, thus, has the form depicted in \algoref{fixedPoint}.
Here, $s(.)$ represents a function that updates the step size $\alpha$ at the end of each
iteration. In our preliminary experiments, we observed that it helps the algorithm to convergence
faster if $\alpha$ is updated dynamically based on the values of $\alpha^k$,
$h^{k+1}$ and $h^{k}$. We compute a parameter $\gamma^{k+1} := 1 - ({\lvert \lvert {h^{k+1} -
h^k}}\rvert \rvert)/({\lvert \lvert{h^{k+1} + h^k}}\rvert \rvert)$ and set
$\alpha^{k+1} = \gamma^{k+1}(\gamma^{k+1}\alpha^{k}) + (1-\gamma^{k+1})\alpha^{k}$.

\subsection{Data sets}
We illustrate the performance of \algoref{fixedPoint} first on a small 
example and then on two benchmarking instances from 
the literature (cf. e.g. \cite{Han2013}), namely the Nguyen network and the Sioux Falls network.
\tabref{datasets} describes the characteristics of these networks.
\begin{table}[htbp]
\caption{Description of test instances used for the computational study. The variant
A has no energy feasibility constraints while the variant B involves energy
constraints. The variant C includes recharging stations/edges (indicated separately in
the `\# edges' row on the right side of the + sign) in addition to energy constraints.}
\scriptsize
\begin{tabular}{lrrrrrrrrr} \toprule
Network & \multicolumn{3}{c}{Example1} & \multicolumn{3}{c}{Nguyen} &
\multicolumn{3}{c}{Sioux Falls} \\ \cmidrule(lr){2-4}\cmidrule(lr){5-7}\cmidrule(lr){8-10}
& A & B & C & A & B & C & A & B & C \\
\cmidrule(lr){2-2}\cmidrule(lr){3-3}\cmidrule(lr){4-4}\cmidrule(lr){5-5}\cmidrule(lr){6-6}
\cmidrule(lr){7-7}\cmidrule(lr){8-8}\cmidrule(lr){9-9}\cmidrule(lr){10-10}
\# nodes & 4 & 4 & 4 & 13 & 13 & 13 & 24 & 24 & 24 \\
\# edges & 5 & 5 & 5+2 & 19 & 19 & 19+3 & 76 & 76 & 76+1 \\
\# commodities & 1 & 1 & 1 & 4-20 & 4-20 & 4-20 & 4 & 4 & 4 \\
Inflow rate (u) & 3 & 3 & 3 & 3 & 3 & 3 & 3 & 3 & 3 \\
Time horizon & [0,10] & [0,10] & [0,10] & [0, 300] & [0, 300] & [0, 300] & [0, 480] &
[0, 480] & [0, 480] \\
Total inflow & 30 & 30 & 30 & 3600-18000 & 3600-18000 & 3600-18000 & 5760 & 5760 & 5760 \\
Energy cons.? & No & Yes & Yes & No & Yes & Yes & No & Yes & Yes \\
Energy values & -- & 0-4 & 0-4 & -- & 0.5-4 & 0.5-4 & -- & 1-6 & 1-6 \\
\# r/c stations & -- & 0 & 1 & -- & 0 & 3 & -- & 0 & 1 \\
Price for r/c & -- & -- & 0 & -- & -- & 0-60 & -- & -- & 0 \\ \bottomrule
\end{tabular}
\label{tab:datasets}
\end{table}
The edge travel times and edge capacities for the illustrative example are shown in
\figref{example6WithRecharge}. The corresponding values for the Nguyen and the Sioux
Falls network have been taken from \cite{bstabler}. We have chosen the energy consumption
values for edges based on their travel time and involvement in different walks from the
range shown in \tabref{datasets} in the row `Energy values'. The price for
non-recharging edges is set to $0$ for all networks.

\footnote{The source code of our implementation as well as the used input files can be found at \url{https://github.com/ArbeitsgruppeTobiasHarks/electric-vehicles}}

\subsection{Performance Measures}
We assess the performance of \algoref{fixedPoint} based on the following measures:
In order to quantify how well a walk-flow at an iteration $k$ satisfies the dynamic
equilibrium conditions \eqref{def:DCE}, we use the following measure
\begin{align}\label{eq:qopidef}
QoPI_{k} &:= \sum_i \left ( \frac{\mathlarger{\sum}_{W \in \Pc_{i,b}} \mathlarger{\int}_{0}^{T} h^{k,W}_{i}(\theta) \left (\frac{c^h_{W,i}(\theta) - \min_{W' \in
\Pc_{i,b}}\{c^h_{W',i}(\theta)\}}{\min_{W' \in
\Pc_{i,b}}\{c^h_{W',i}(\theta)\}}\right) \; d\theta}{\mathlarger{\int}_{0}^{T} u_i(\theta) \;
d\theta} \right ).
\end{align}
If we calculate this value without dividing by the total inflow volume, we call the resulting value QoPI (absolute).
It is easy to verify that a walk-flow $h$ is an energy-feasible dynamic equilibrium flow if and only if QoPI is
zero. A value close to zero means that agents are using walks whose costs are quite close to the minimum costs at any
given time $\theta$.
The integral value in the numerator of \eqnref{qopidef} is approximated by the area
under a piecewise linear function obtained by connecting the corresponding values at the
discrete time points $(a_{j-1} + a_j)/2$. 
We also measure the absolute change in the $L^1$-norm of walk-flows ($\Delta h^{k+1}:=
\norm{h^{k+1} - h^k}$) and the relative change ($\Delta h^{k+1}/\norm{h^k}$). 

Furthermore, we show the energy consumption profile for a walk-flow as a time-plot of the measure
\begin{align}\label{eq:energyProf}
\eta(\theta) &:= \sum_i \sum_W \frac{h^W_i(\theta) b_W}{u_i(\theta)},
\end{align}
where $b_W = \sum_{e \in W} b_e$ represents the energy consumption corresponding to a
walk $W$. The value $\eta(\theta)$, thus, indicate the average energy consumed by all particles starting their travel at time $\theta$.
We also use the following measures of energy consumption for each commodity for each
time $\theta$.
\begin{itemize}
\item Minimum energy consumption := $\min_{W: h^W_i(\theta) > 0} {b_{W}}$
\item Maximum energy consumption := $\max_{W: h^W_i(\theta) > 0} {b_{W}}$
\item Mean energy consumption := $\sum_W \frac{h^W_i(\theta) b_W}{u_i(\theta)}$
\end{itemize}
For multicommodity networks, we use the following measures of walk travel times
consolidated over the commodities.
\begin{itemize}
\item Minimum walk travel times := $\min_{i \in I} {\mu^W_i(\theta)}$
\item Maximum walk travel times := $\max_{i \in I} {\mu^W_i(\theta)}$
\item Mean walk travel times := $\sum_{i \in I} \frac{\sum_{W:h^W_i(\theta)>0}\mu^W_i(\theta)}{\lvert \{W:h^W_i(\theta)>0\} \rvert}$
\item Mean of minimum walk travel times := $\frac{\sum_{i \in I} \min_{W:h^W_i(\theta)>0}\mu^W_i(\theta)}{\lvert I \rvert}$
\item Mean of maximum walk travel times := $\frac{\sum_{i \in I} \max_{W:h^W_i(\theta)>0}\mu^W_i(\theta)}{\lvert I \rvert}$
\end{itemize}

\subsection{Computational Results}
For simplicity, we assume in our computational study that the batteries are
recharged fully up to $b_i^{max}$ at all recharging stations. 
For experiments with a positive price of recharging, we use an alternate aggregation
function defined as
\begin{align}
c_i\left(\mu_{i}^W(\theta),\sum_{e\in W}p_{i,e}\right) &:= \mu_{i}^W(\theta) +
\widetilde{\lambda}_i\sum_{e\in W}p_{i,e},
\end{align}
to avoid numerical instabilities in our algorithm. Here, the parameter
$\widetilde{\lambda}_i \geq 0$ (used instead of the parameter $\lambda_i$ as defined
in Example~\ref{example:AggregationFunction}) converts the recharging price to travel time.
Also, we terminate the
algorithm based on the criterion $\Delta h < \epsilon$ which is typically more strict
than the one mentioned in \algoref{fixedPoint}, however, we use it for the
computational study along with a soft maximum time limit and a maximum
iteration limit to obtain the best possible walk-flows. 
The computations have been carried out on a 64-bit
Intel(R) Xeon(R) E5-2670 v2, 2.50GHz CPU with 128GB RAM. The source code has been
written in Python 3.5.

We will now describe the results of our tests on our three test networks: The small toy network, the Nguyen network and the Sioux Falls network.

\subsubsection{Results for a Toy Instance}

First, we analyze the performance of \algoref{fixedPoint} on variants of the toy instance Example1
depicted in \figref{example6WithRecharge}.
Example1-A is without energy constraints, Example1-B is with energy constraints but without recharging stations and Example1-C incorporates energy constraints and recharging stations.
There are four
$s,t$-walks for Example1-A. When energy constraints
are introduced (resulting in Example1-B), the walk $W_0:=(e_1, e_3, e_4)$ becomes infeasible.
Example1-C includes a recharging station at the node $v$ with two different modes of
recharge, $m_1$ and $m_2$, represented using green coloured self-loops, each with a single
time-duration and a single price for recharging. $p^{max}$ denotes the price-budget of
an agent. As shown, a value 6 for $p^{max}$ rules out recharging at $v$ via mode $m_2$. 
This results in seven energy-feasible walks
for this network: $W_0$ with recharging at $v$, three walks from Example1-B, and
the same three walks but with recharging at $v$. However, an agent will not travel
via these latter walks with recharging due to a positive travel time of recharging edges.

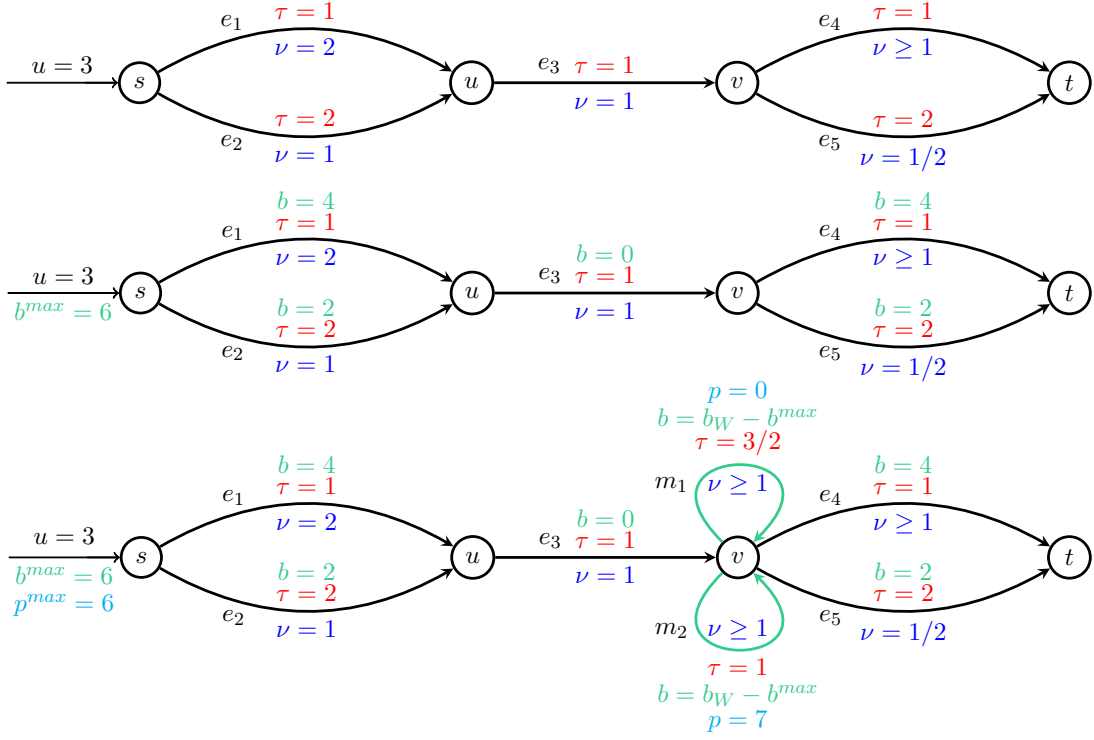
\begin{figure}[h!]
	\centering
	\resizebox{.9\linewidth}{!} {\begin{tikzpicture}[node distance={50mm}, thick, main/.style = {draw,
circle, normalEdge},initial text={}] 
\node[initial,main] (s)      {$s$};
\draw[<-] (s) -- node[above] {$u=3$} ++(-2cm,0);
\node[main] (u) [right of=s] {$u$}; 
\node[main] (v) [right of=u, node distance=40mm] {$v$}; 
\node[main] (t) [right of=v] {$t$}; 
\path[->] (s) edge [bend left, normalEdge]  node[near start, above] (e1){$e_1$} node[midway,above] {\color{red} $\tau = 1$} node[midway,below, sloped] {\color{blue} $\nu = 2$} (u);
\path[->] (s) edge [bend right, normalEdge] node[near start, below] (e2){$e_2$} node[midway,above] {\color{red} $\tau = 2$} node[midway,below, sloped] {\color{blue} $\nu = 1$} (u);
\path[->] (u) edge [normalEdge] node[near start, above] (e3){$e_3$} node[midway,above] {\color{red} $\tau = 1$} node[midway,below, sloped] {\color{blue} $\nu = 1$} (v);
\path[->] (v) edge [bend left, normalEdge]  node[near start, above] (e4){$e_4$} node[midway,above] {\color{red} $\tau = 1$} node[midway,below, sloped] {\color{blue} $\nu \geq 1$} (t);
\path[->] (v) edge [bend right, normalEdge] node[near start, below] (e5){$e_5$} node[midway,above] {\color{red} $\tau = 2$} node[midway,below, sloped] {\color{blue} $\nu = 1/2$} (t);
\end{tikzpicture}}
	\vspace{-0.25cm}
	\resizebox{.9\linewidth}{!} {\begin{tikzpicture}[node distance={50mm}, thick, main/.style = {draw,
circle, normalEdge},initial text={}] 
\node[initial,main] (s)      {$s$};
\draw[<-] (s) -- node[below] {\color{darkgreen} $b^{max}=6$} ++(-2cm,0);
\draw[<-] (s) -- node[above] {$u=3$} ++(-2cm,0);
\node[main] (u) [right of=s] {$u$}; 
\node[main] (v) [right of=u, node distance=40mm] {$v$}; 
\node[main] (t) [right of=v] {$t$}; 
\path[->] (s) edge [bend left, normalEdge]  node[near start, above] (e1){$e_1$} node[above of=e1, node distance=6mm] {\color{darkgreen} $b=4$} node[midway,above] {\color{red} $\tau = 1$} node[midway,below, sloped] {\color{blue} $\nu = 2$} (u);
\path[->] (s) edge [bend right, normalEdge] node[near start, below] (e2){$e_2$} node[above of=e2, node distance=6mm] {\color{darkgreen} $b=2$} node[midway,above] {\color{red} $\tau = 2$} node[midway,below, sloped] {\color{blue} $\nu = 1$} (u);
\path[->] (u) edge [normalEdge] node[near start, above] (e3){$e_3$} node[above of=e3, node distance=6mm] {\color{darkgreen} $b=0$} node[midway,above] {\color{red} $\tau = 1$} node[midway,below, sloped] {\color{blue} $\nu = 1$} (v);
\path[->] (v) edge [bend left, normalEdge]  node[near start, above] (e4){$e_4$} node[above of=e4, node distance=6mm] {\color{darkgreen} $b=4$} node[midway,above] {\color{red} $\tau = 1$} node[midway,below, sloped] {\color{blue} $\nu \geq 1$} (t);
\path[->] (v) edge [bend right, normalEdge] node[near start, below] (e5){$e_5$} node[above of=e5, node distance=6mm] {\color{darkgreen} $b=2$} node[midway,above] {\color{red} $\tau = 2$} node[midway,below, sloped] {\color{blue} $\nu = 1/2$} (t);
\end{tikzpicture}}
	\resizebox{.9\linewidth}{!} {\begin{tikzpicture}[node distance={50mm}, thick, main/.style = {draw,
circle, normalEdge},initial text={}] 
\node[initial,main] (s)      {$s$};
\draw[<-] (s) -- node[below] (b) {\color{darkgreen} $b^{max}=6$} ++(-2cm,0)
node[above of=b, node distance=6mm] (u0) {$u=3$}
node[below of=b, node distance=4.5mm] (p) {\color{cyan} $p^{max}=6$};

\node[main] (u) [right of=s] {$u$}; 
\node[main] (v) [right of=u, node distance=40mm] {$v$}; 
\node[main] (t) [right of=v] {$t$}; 
\path[->] (s) edge [bend left, normalEdge]  node[near start, above] (e1){$e_1$} node[above of=e1, node distance=6mm] {\color{darkgreen} $b=4$} node[midway,above] {\color{red} $\tau = 1$} node[midway,below, sloped] {\color{blue} $\nu = 2$} (u);
\path[->] (s) edge [bend right, normalEdge] node[near start, below] (e2){$e_2$} node[above of=e2, node distance=6mm] {\color{darkgreen} $b=2$} node[midway,above] {\color{red} $\tau = 2$} node[midway,below, sloped] {\color{blue} $\nu = 1$} (u);
\path[->] (u) edge [normalEdge] node[near start, above] (e3){$e_3$} node[above of=e3, node distance=6mm] {\color{darkgreen} $b=0$} node[midway,above] {\color{red} $\tau = 1$} node[midway,below, sloped] {\color{blue} $\nu = 1$} (v);
\path[->] (v) edge [bend left, normalEdge]  node[near start, above] (e4){$e_4$} node[above of=e4, node distance=6mm] {\color{darkgreen} $b=4$} node[midway,above] {\color{red} $\tau = 1$} node[midway,below, sloped] {\color{blue} $\nu \geq 1$} (t);
\path[->] (v) edge [bend right, normalEdge] node[near start, below] (e5){$e_5$} node[above of=e5, node distance=6mm] {\color{darkgreen} $b=2$} node[midway,above] {\color{red} $\tau = 2$} node[midway,below, sloped] {\color{blue} $\nu = 1/2$} (t);
\path[->,every loop/.style={looseness=12}] (v) edge [in=45,out=135, loop, normalEdge,darkgreen] node[near start, left, black] (m1){$m_1$} node[above of=m1, sloped, node distance=7mm] {\color{darkgreen} $b=b_{W}-b^{max}$} node[above of=m1, node distance=11mm] {\color{cyan} $p = 0$} node[midway,above, sloped] {\color{red} $\tau = 3/2$} node[midway,below, sloped] {\color{blue} $\nu \geq 1$} (v);
\path[->,every loop/.style={looseness=12}] (v) edge [in=-45,out=-135, loop, normalEdge,darkgreen] node[near start, left, black] (m2){$m_2$} node[below of=m2, sloped, node distance=7mm] {\color{darkgreen} $b=b_{W}-b^{max}$} node[below of=m2, node distance=11mm] {\color{cyan} $p = 7$} node[midway,below, sloped] {\color{red} $\tau = 1$} node[midway,above, sloped] {\color{blue} $\nu \geq 1$} (v);
\end{tikzpicture}}
	\caption{Pictorial depiction of variants of Example1-A (top), -B (middle) and
		-C (bottom) and the corresponding travel time ($\tau$), capacity ($\nu$), and energy
		consumption values ($b$) for different edges.}
	\label{fig:example6WithRecharge}
\end{figure}

\paragraph*{Convergence measures.} 
\tabref{example6due} shows the corresponding QoPI values upon termination of \algoref{fixedPoint} for all three instances. As these values are quite small, this indicates that the eventual flow largely uses walks with the least possible cost. The last column of \tabref{example6due} indicates the walks with a positive flow at termination. 
\figref{example6Results} provides a more detailed picture of the flow determined in the final iteration of \algoref{fixedPoint} for Example1-A, Example1-B and Example1-C, respectively. The first row shows the inflow rates in the different walks while the second row shows the total travel costs along each walk (i.e. the value of the aggregation function $g_i$). Note that for these instances the
the total costs equal the travel time since the costs at the only used recharging station are zero (via mode $m_1$) here. 
Comparing the plots in the first and second row shows that whenever there is positive inflow into a walk, this walk has minimal total cost among all feasible walks in accordance to the definition of capacitated dynamic equilibrium. The QoPI values for each walk presented in the third row confirm this as they are also quite close to zero at all times.

\begin{table}[htbp]
	\caption{Number of feasible walks ($\lvert \mathcal{W} \rvert$), QoPI attained at
		dynamic equilibrium, change in norm of walk inflows at termination ($\Delta
		h$) and walks with positive inflow rates at dynamic equilibrium on variants of
		Example1. The following parameter settings are used: $\epsilon =
		0.01$, $\alpha^0 = 0.5$.}
	\centering
	\begin{tabular}{lcccl}\toprule
		Network & $\lvert \mathcal{W} \rvert$ & QoPI & $\Delta h$ & Walks with
		$h^+_W > 0$ at termination \\ \hline 
		Example1-A    & 4 & $1.2e^{-4}$ & 0.001 & $W_0:=(e_1, e_3, e_4)$, $W_2:=(e_2,e_3,e_4)$ \\ 
		Example1-B & 3 & $2.4e^{-4}$ & 0.001 & $W_1:=(e_1, e_3, e_5)$, $W_2$ \\ 
		Example1-C & 7 & $3.7e^{-4}$ & 0.001 & $W_1$, $W_2$, $W_4:=(e_1,e_3,m_1,e_4)$ \\ \bottomrule
	\end{tabular}
	\label{tab:example6due}
\end{table}

\begin{figure}[h!]
	\centering
	\includegraphics[width=0.33\textwidth]{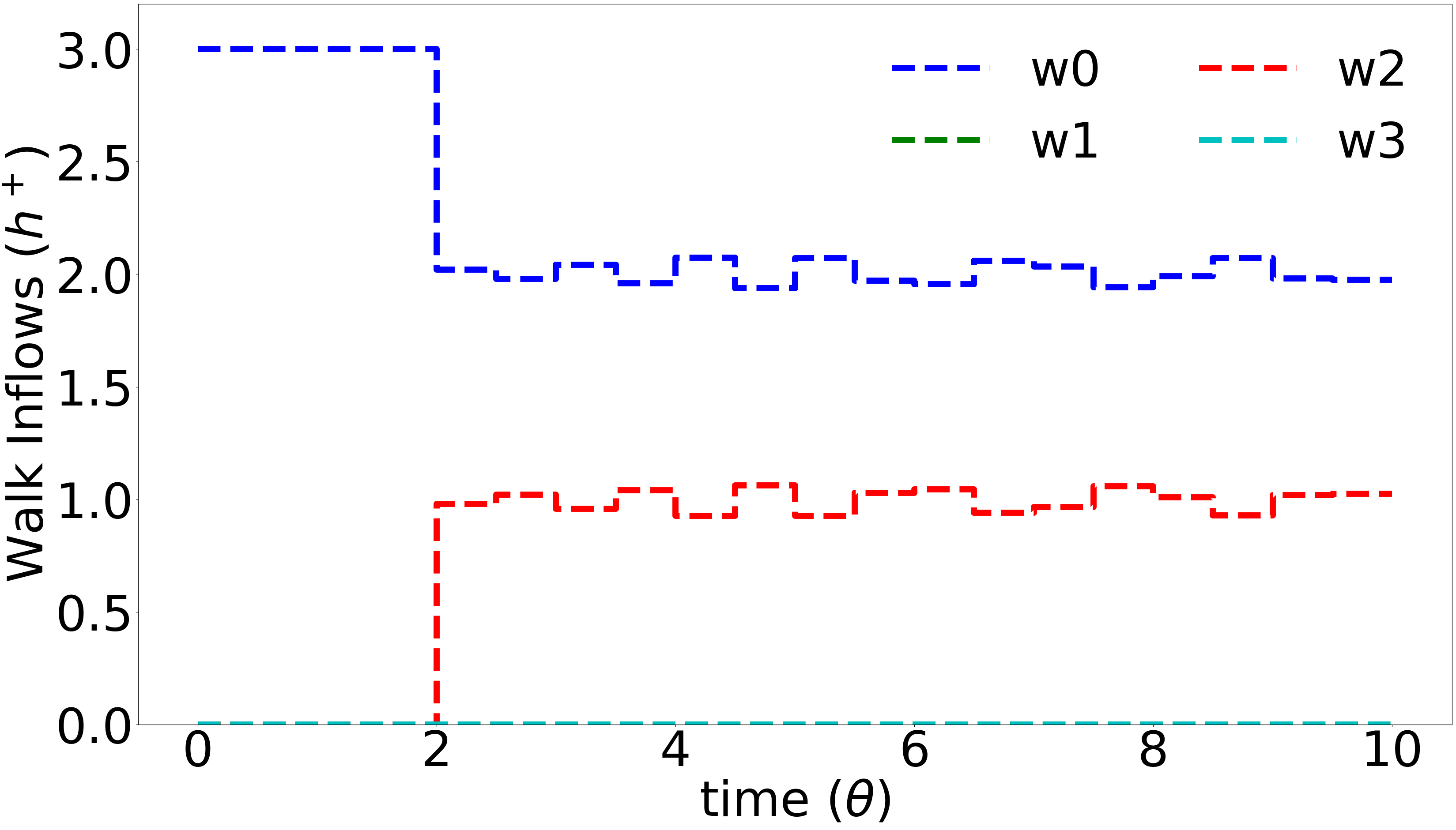}
	\includegraphics[width=0.32\textwidth]{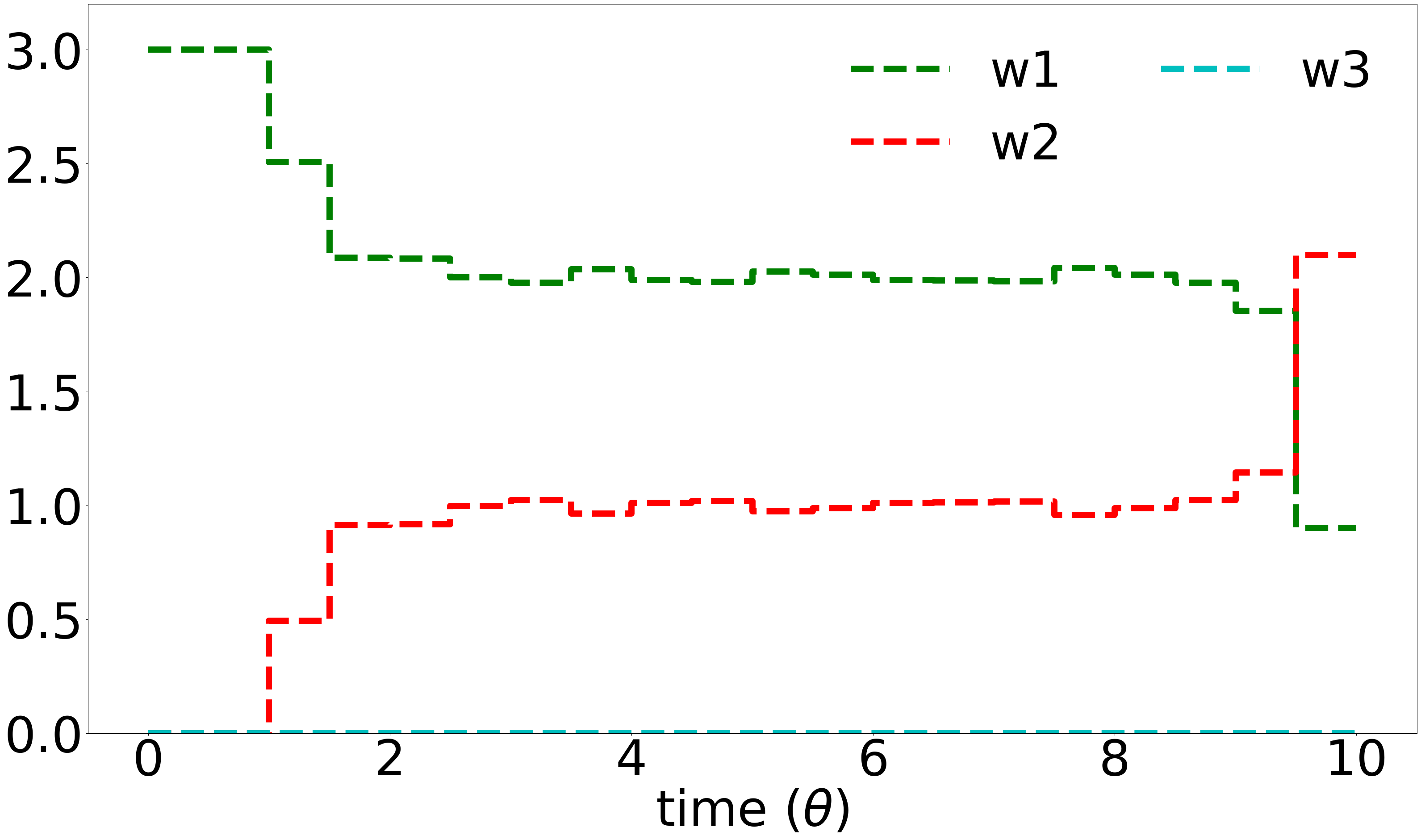}
	\includegraphics[width=0.32\textwidth]{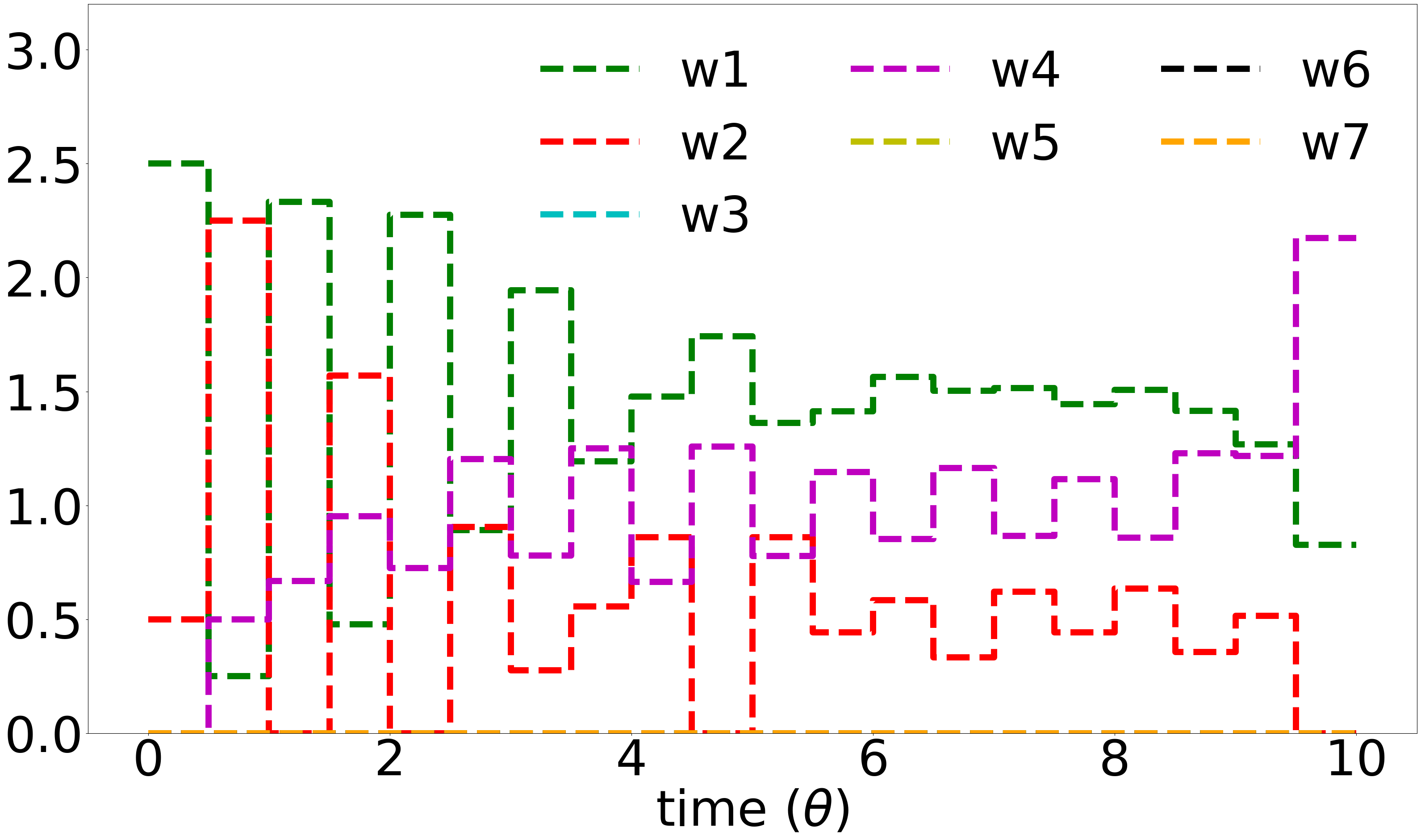}
	\includegraphics[width=0.33\textwidth]{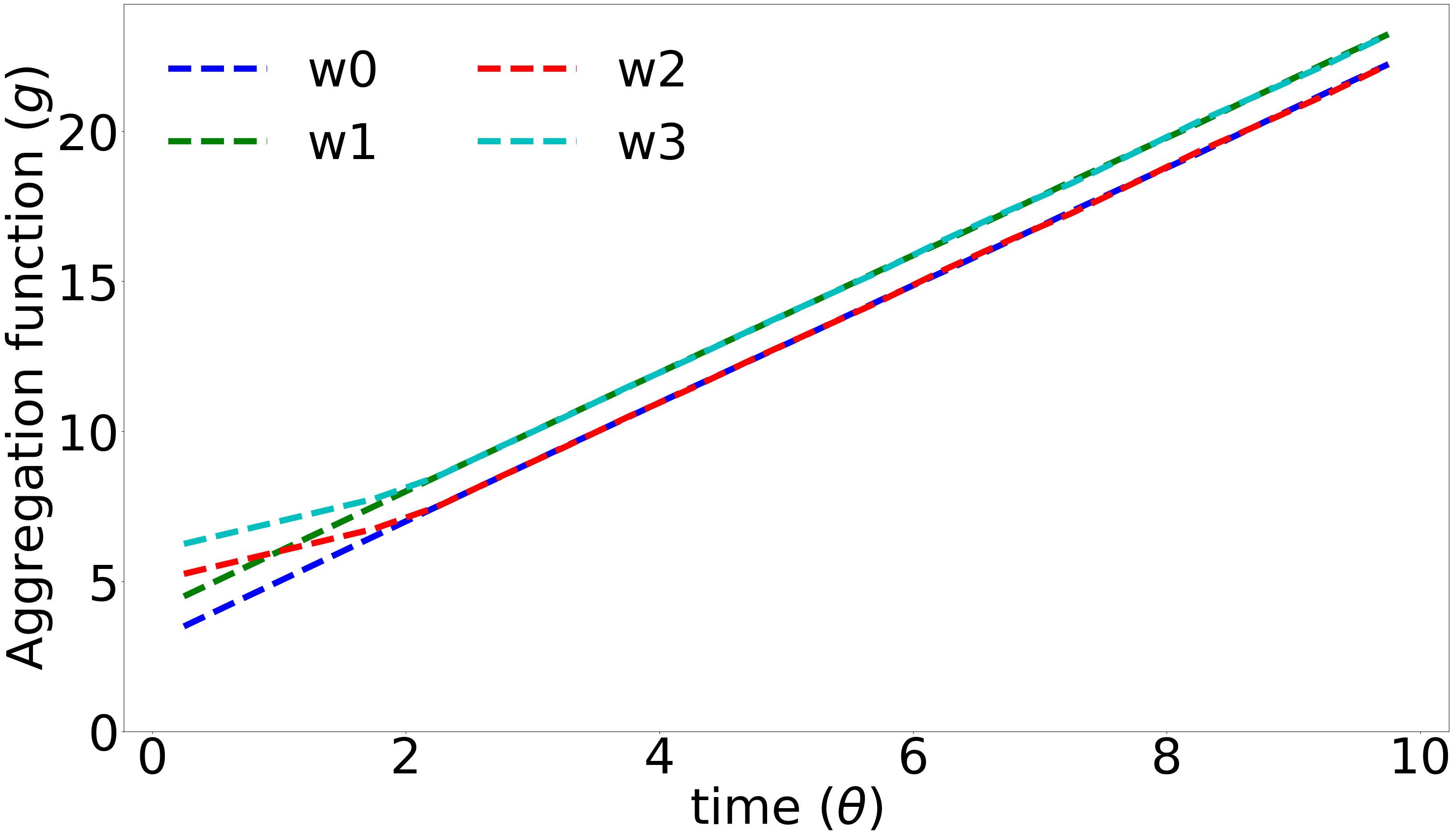}
	\includegraphics[width=0.32\textwidth]{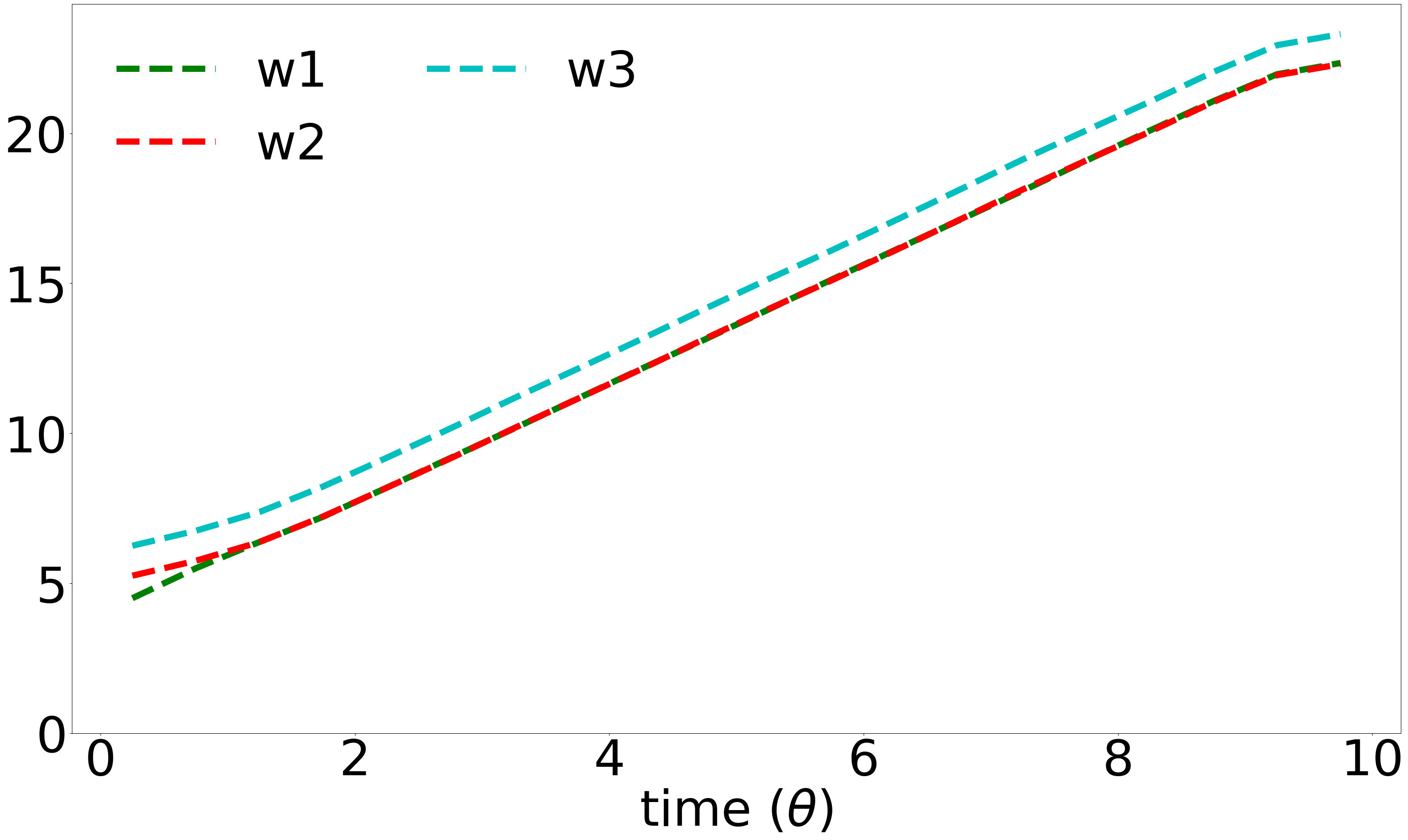}
	\includegraphics[width=0.32\textwidth]{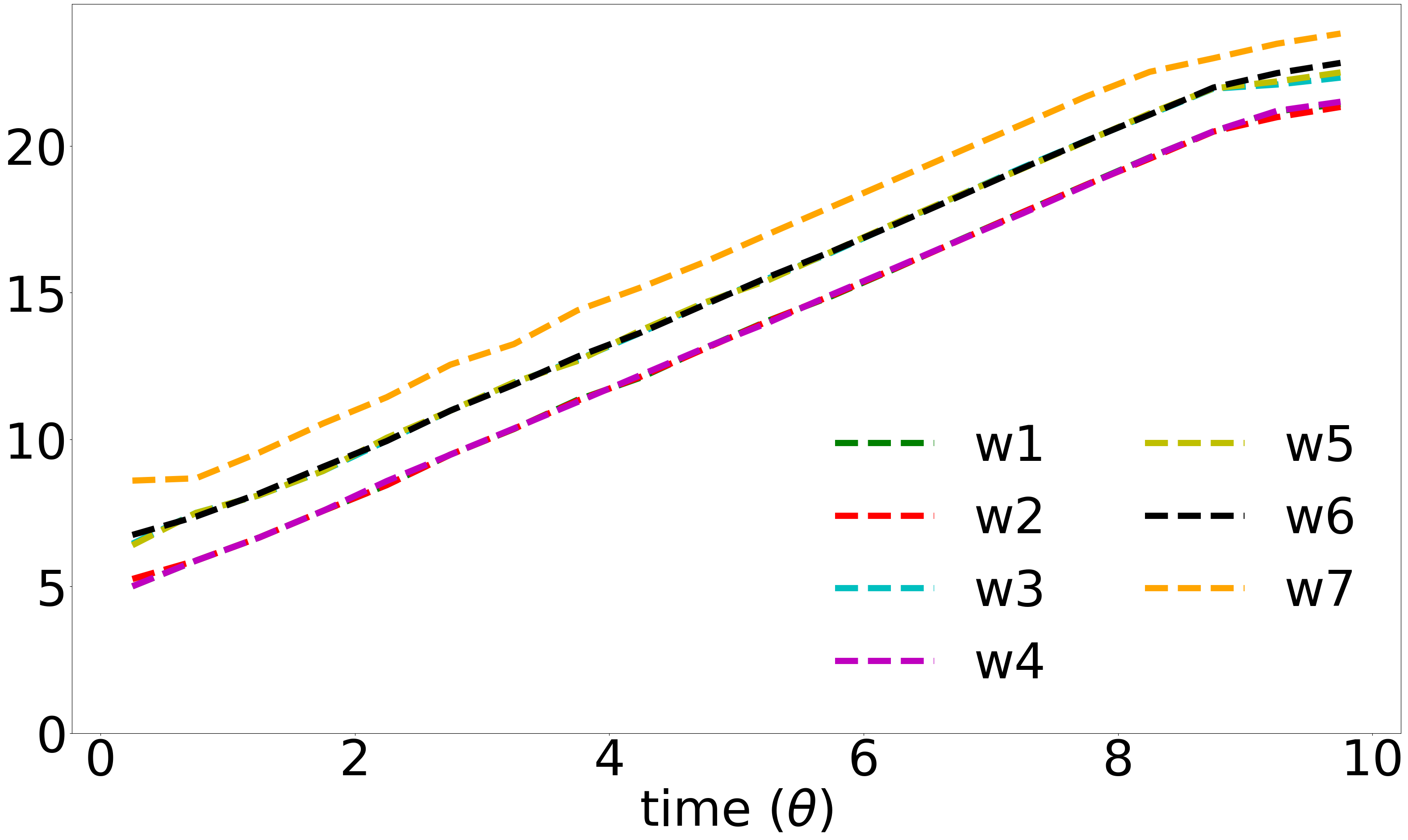}
	\includegraphics[width=0.33\textwidth]{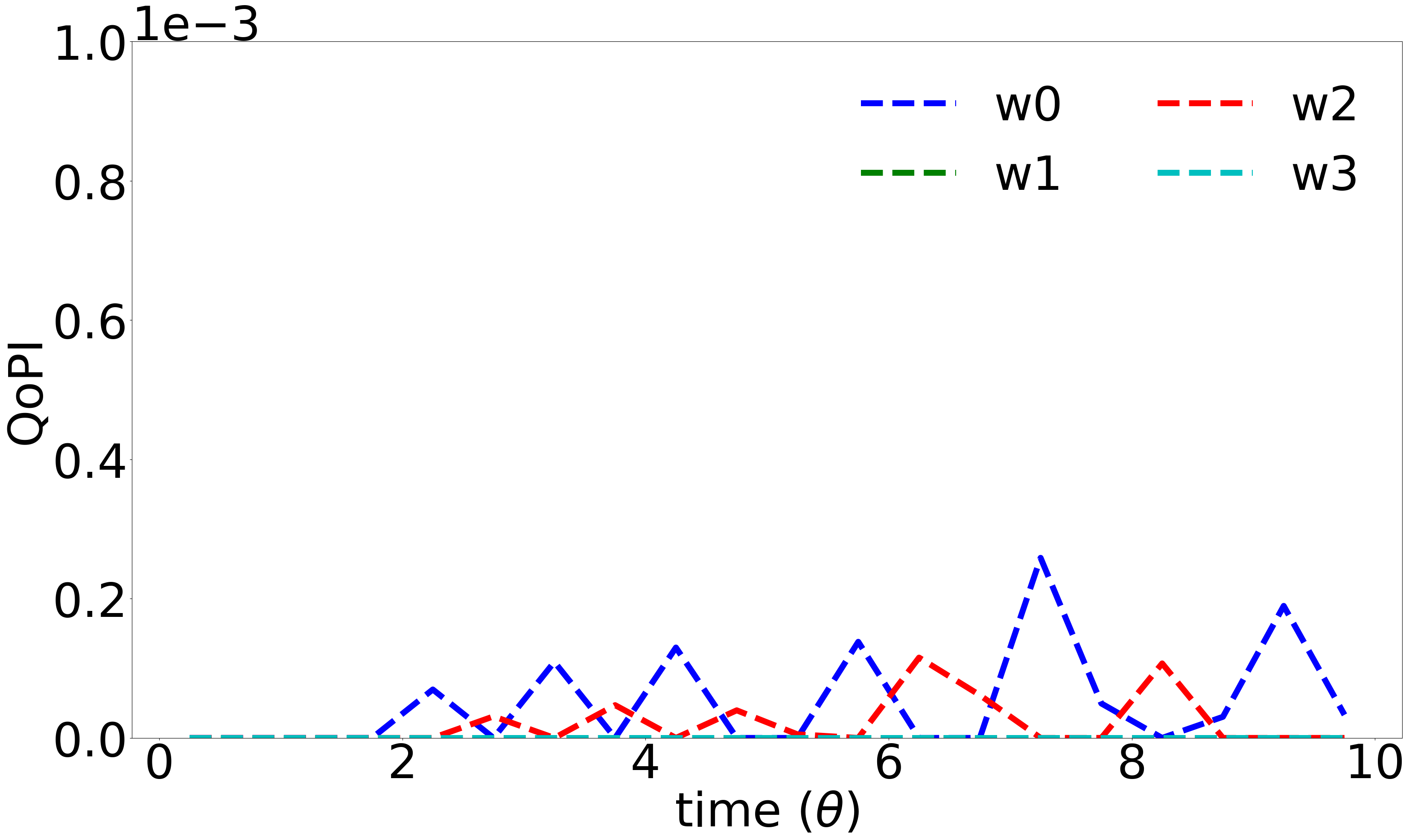}
	\includegraphics[width=0.32\textwidth]{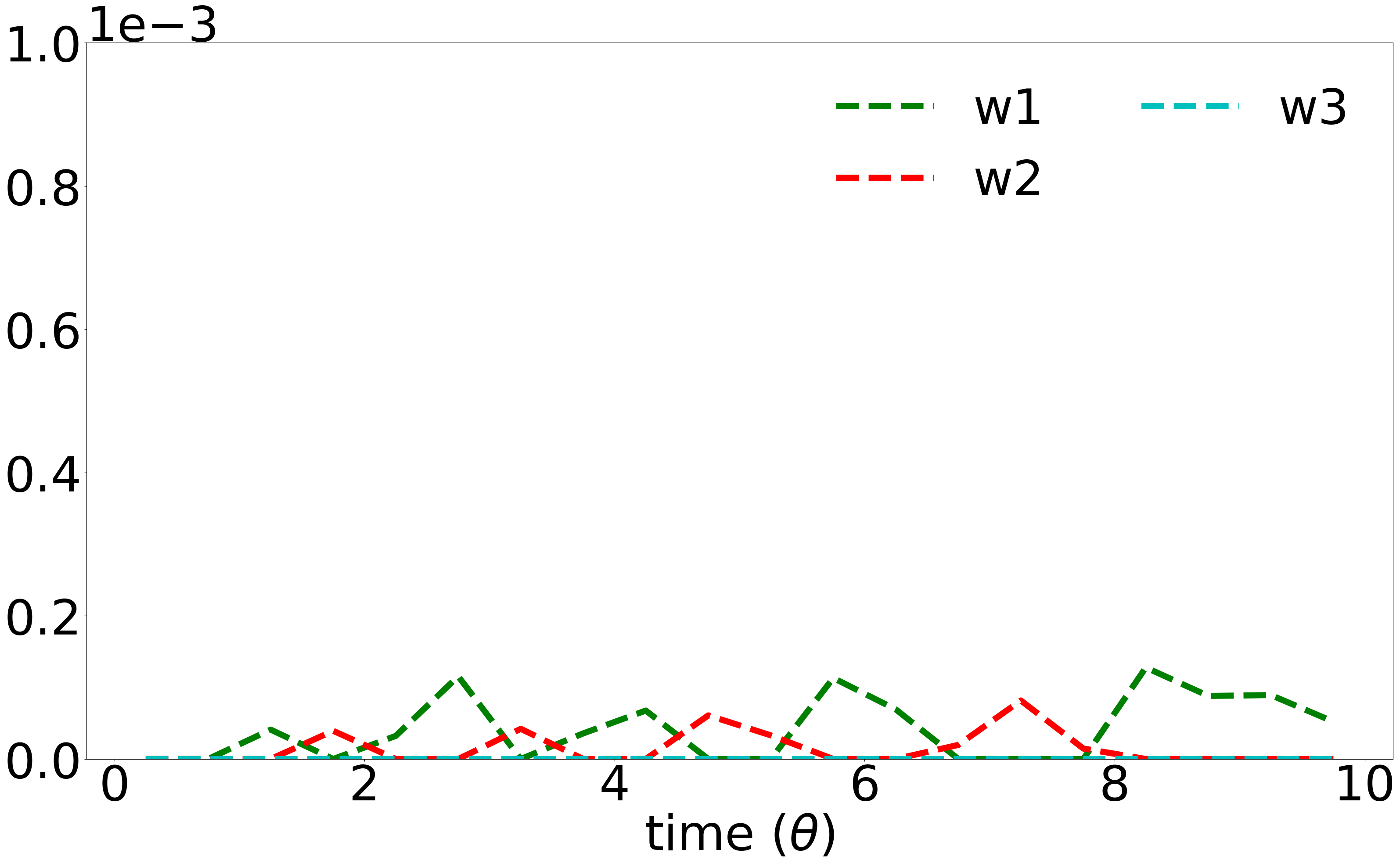}
	\includegraphics[width=0.32\textwidth]{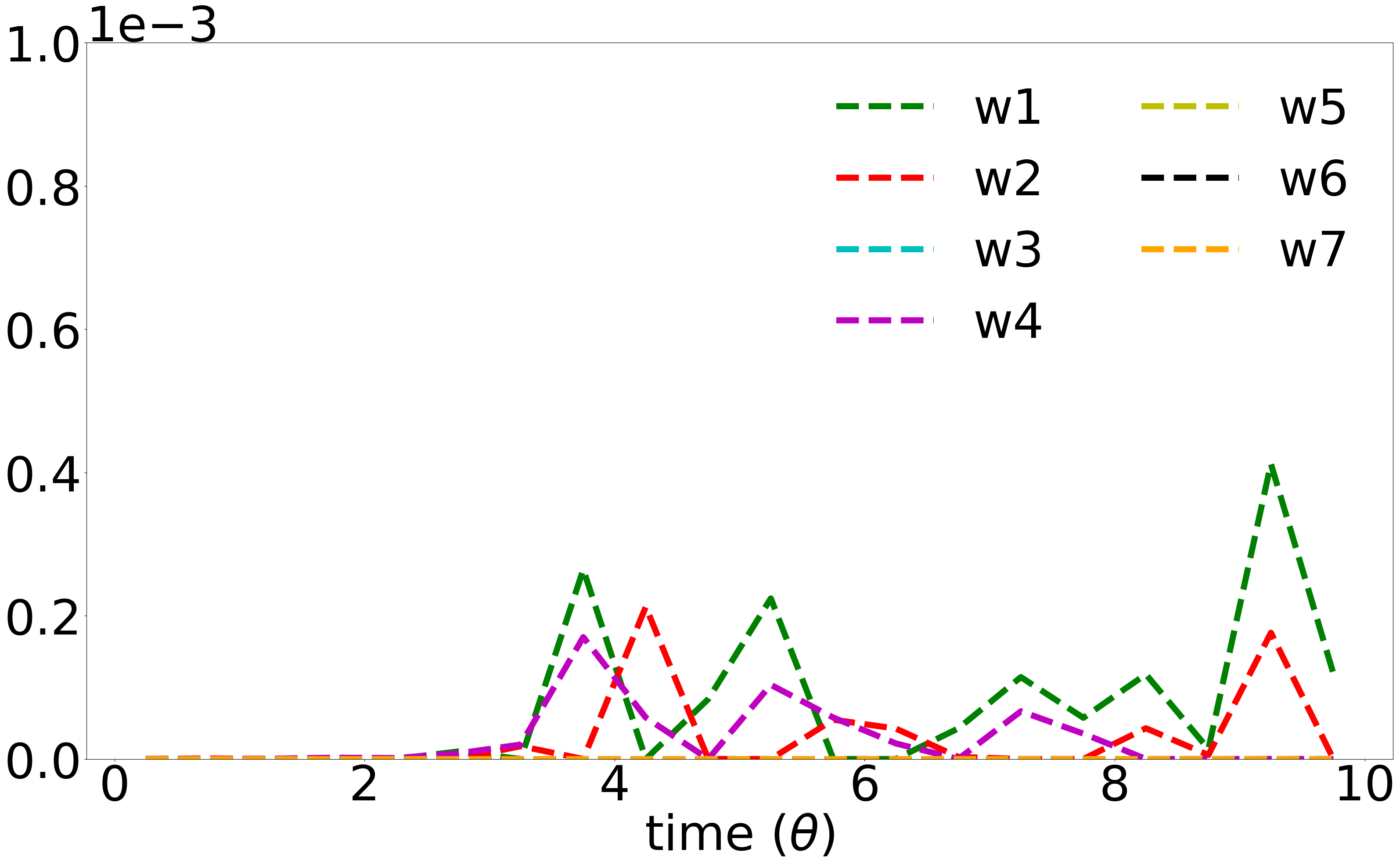}
	\caption{Walk inflow rates (top), total costs (middle) and QoPI
		(bottom) for each energy-feasible walk on the network shown in (a)
		\figref{example6WithRecharge} (top) (b) \figref{example6WithRecharge} (middle) with
		energy constraints (c) \figref{example6WithRecharge} (bottom) with energy
		constraints and a recharging station.
		\label{fig:example6Results}}
\end{figure}

\paragraph*{Energy profiles.}
Next, we can compare the energy profiles of the approximate equilibrium flow computed for the instance with and without recharging. These are depicted in \figref{combProfsToy}. We observe that the introduction of recharging stations increases the average energy consumption -- at times even above the maximum battery consumption, meaning that particles starting at these times on average use more than one complete battery charge for their travel.

\begin{figure}[h]
	\centering

	\includegraphics[width=0.49\textwidth]{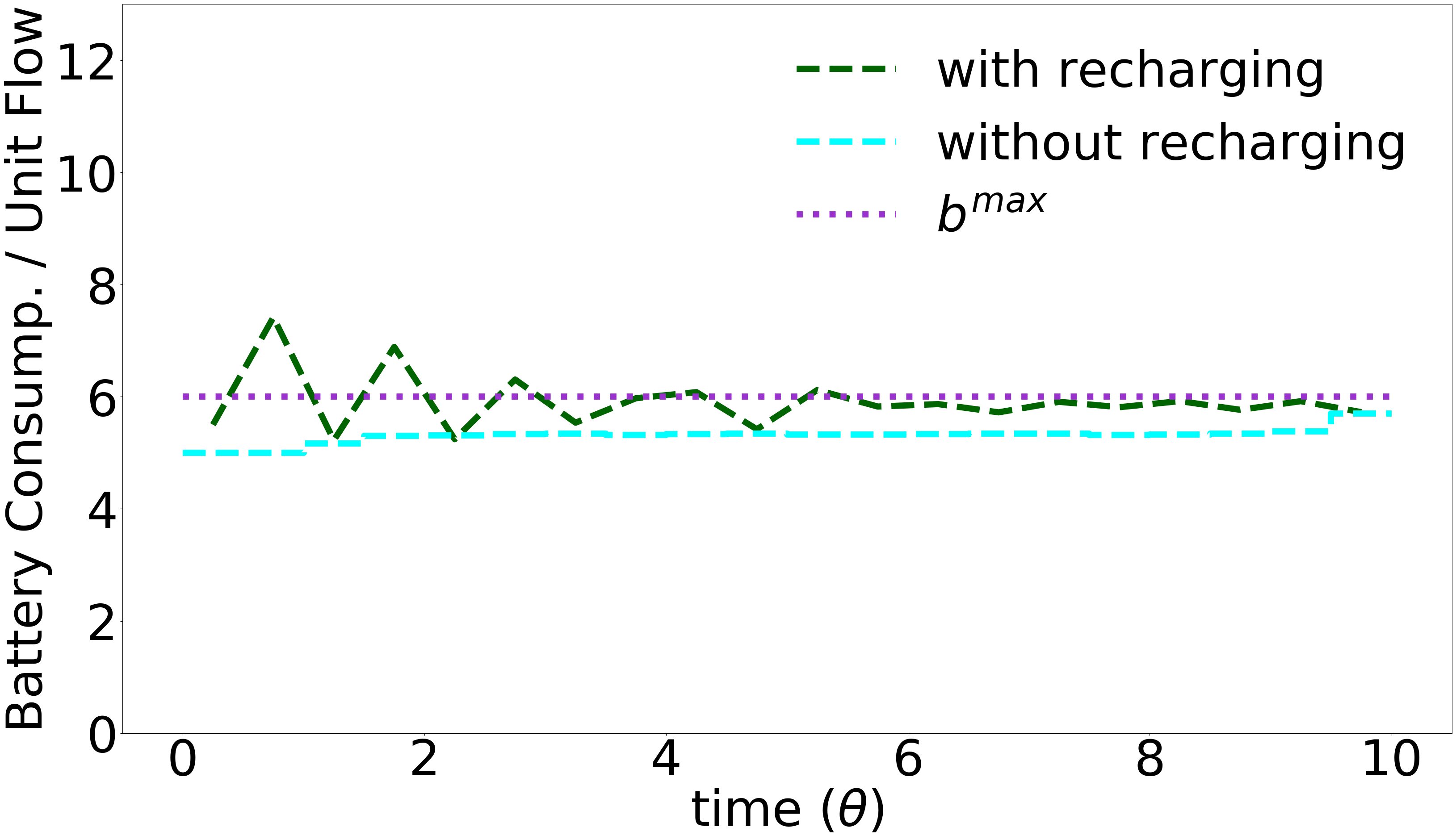}
	\caption{Energy consumption profiles for the networks Example1-B and 1-C.}
	\label{fig:combProfsToy}
\end{figure}

\paragraph*{Comparison with exact equilibria.}
As the instance considered here is quite small, we can also deduce some properties of exact equilibria in this network and compare those to the approximate equilibria computed by our algorithm. Namely, the flow computed for Example1-A (without energy constraints) is very close to a flow which sends everything into walk $W_0$ until time $\theta=2$ and splits the flow in a $2:1$ ratio between walks $W_0$ and $W_2$ after that. It is easy to verify that such flow is indeed an exact dynamic equilibrium for this network. For Example1-B (with energy consumption) we see that after around time $\theta=2$ the flow again is close to a stable split in $2:1$ ratio between two walks: This time between $W_1$ and $W_2$. One can again verify that this is an energy-feasible equilibrium. For the time before $\theta=2$, however, a simple constant split like in the case of Example1-A would not be an equilibrium. Instead a more gradual change from the initial flow split towards the one after time $2$ is necessary here, in order to also satisfy the equilibrium condition during this transition phase. We can also see this in the plot for the walk-flows of Example1-B.

\paragraph*{Overtaking in energy feasible equilibria.}
Finally, we want to point out the following effect taking place during the constant flow split after time $\theta=2$ in the (approximate) energy-feasible equilibrium computed for Example1-B which is not possible for dynamic equilibria in single-commodity networks: Namely, that simultaneous starting particles may overtake each other at intermediate nodes while still arriving at the sink at the same time. This effect can lead to much more involved structures of energy-feasible equilibria compared to dynamic equilibria. In particular, it seems unlikely that energy-feasible dynamic equilibria can be constructed by repeatedly extending a given partial equilibrium as it is possible for dynamic equilibria in single-commodity networks (cf. \cite{Koch11}). Since if one were to extend a given partial equilibrium flow, particles starting within the new extension period might overtake particles of a previous phase and then form a queue, hereby increasing the travel time of those earlier particles and possibly leading to violations of the equilibrium condition in the previously calculated part of the flow. Consequently, to directly compute an energy-feasible dynamic equilibrium the whole time-horizon $[0,T]$ has to be taken into account at once.

\subsubsection{Results for the Nguyen Network}

\begin{figure}[h!]
	\centering
	\resizebox{.5\linewidth}{!} {\begin{tikzpicture}[node distance={20mm}, thick, main/.style = {draw,
circle, normalEdge}, initial text={}]
\node[main] (1) {$1$};
\node[main, inner sep=2.5pt] (12) [right of=1] {$12$};
\node[main] (5) [below of=1] {$5$};
\node[main] (9) [below of=5] {$9$};
\node[main] (4) [left of=5] {$4$};
\node[main] (6) [right of=5] {$6$};
\node[main] (7) [right of=6] {$7$};
\node[main] (8) [right of=7] {$8$};
\node[main, inner sep=2.5pt] (10) [right of=9] {$10$};
\node[main, inner sep=2.5pt] (11) [right of=10] {$11$};
\node[main] (2) [right of=11] {$2$};
\node[main, inner sep=2.5pt] (13) [below of=10] {$13$};
\node[main] (3) [right of=13] {$3$};
\path[->] (1) edge [normalEdge] (12);
\path[->] (1) edge [normalEdge] (5);
\path[->] (12) edge [normalEdge] (6);
\path[->] (12) edge [normalEdge] (8);
\path[->] (4) edge [normalEdge] (5);
\path[->] (4) edge [normalEdge] (9);
\path[->] (5) edge [normalEdge] (6);
\path[->] (5) edge [normalEdge] (9);
\path[->] (6) edge [normalEdge] (7);
\path[->] (6) edge [normalEdge] (10);
\path[->] (7) edge [normalEdge] (8);
\path[->] (7) edge [normalEdge] (11);
\path[->] (8) edge [normalEdge] (2);
\path[->] (9) edge [normalEdge] (10);
\path[->] (9) edge [normalEdge] (13);
\path[->] (10) edge [normalEdge] (11);
\path[->] (11) edge [normalEdge] (2);
\path[->] (11) edge [normalEdge] (3);
\path[->] (13) edge [normalEdge] (3);
\path[->,every loop/.style={looseness=10}] (6) edge [in=15,out=75, loop,normalEdge,darkgreen] (6);
\path[->,every loop/.style={looseness=10}] (8) edge [in=15,out=75, loop,normalEdge,darkgreen] (8);
\path[->,every loop/.style={looseness=10}] (9) edge [in=15,out=75, loop,normalEdge,darkgreen] (9);
\end{tikzpicture}}
	\caption{Pictorial depiction of the Nguyen-C network with recharging stations (with
		only one mode of recharging) at nodes labeled 6, 8 and 9.}
	\label{fig:nguyenNetwork}
\end{figure}
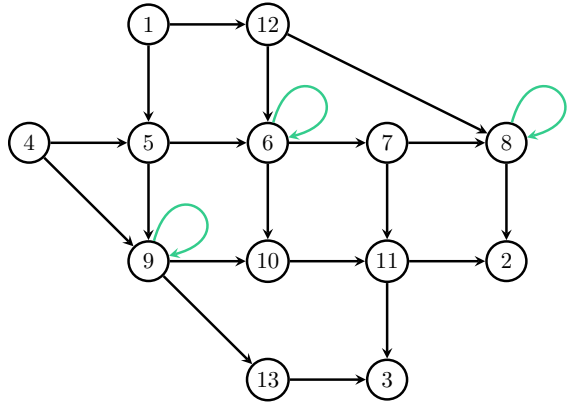

Next, we report on the performance of our algorithm on variants of the Nguyen network. These are, again, without energy constraints (-A), with energy
constraints (-B) and with energy constraints and recharging stations (-C) at nodes
labeled 6, 8, and 9 as shown in \figref{nguyenNetwork} (considering only one recharging mode per charging station). 

\paragraph*{Convergence measures.}
\tabref{nguyen} presents the performance measures of \algoref{fixedPoint} on these three variants of
Nguyen network with different number of commodities. All the instances except Nguyen-B with
twelve commodities converge to flows with QoPI values very close to zero. \figref{nguyenConv} shows the plots for $\Delta h$ and QoPI per iteration for the Nguyen network with four commodities demonstrating how the walk-flows converge to an approximate
energy-feasible dynamic equilibrium. Note here the logarithmic scale of the x-axis. Both plots show a very fast decrease for both $\Delta h$ and QoPI in the early iterations reaching a QoPI of less than $0.1$ after only about 10 iterations. On the other hand the extended tail of these plots indicate that a large number of
iterations are spent in a finer adjustment of flows among walks. It might be possible to improve this behaviour in later iterations by using a different setting of algorithmic parameters in future experiments.

\begin{table}[htbp]
\caption{Performance of \algoref{fixedPoint} on variants of Nguyen network with
different number of commodities. The following parameters are used: $\epsilon =
0.01$, $b^{max} = 5$, $\alpha^0 = 0.005$, wall clock time limit = 7200s, iteration
limit = 40000. `TimeLim' indicates that the run was terminated because the soft time limit
was exceeded.}
\scriptsize
\begin{tabular}{lccccccccc}\toprule
\# Commodities & \multicolumn{3}{c}{4} & \multicolumn{3}{c}{8} & \multicolumn{3}{c}{12} \\\cmidrule(lr){2-4}\cmidrule(lr){5-7}\cmidrule(lr){8-10}
Total inflow & \multicolumn{ 3}{c}{3600} & \multicolumn{ 3}{c}{7200} &
\multicolumn{3}{c}{10800} \\\cmidrule(lr){2-4}\cmidrule(lr){5-7}\cmidrule(lr){8-10}
Variant & A & B & C & A & B & C & A & B & C \\
\cmidrule(lr){2-2}\cmidrule(lr){3-3}\cmidrule(lr){4-4}\cmidrule(lr){5-5}\cmidrule(lr){6-6}
\cmidrule(lr){7-7}\cmidrule(lr){8-8}\cmidrule(lr){9-9}\cmidrule(lr){10-10}
Total \# walks & 25 & 17 & 25 & 35 & 27 & 35 & 44 & 36 & 44 \\
Wall clock time taken & 2634.712 & 275.452 & 1891.441 & 21.762 & 418.538 & 1163.050 & TimeLim & TimeLim & 3902.077 \\
Mean time/DNL & 0.061 & 0.042 & 0.060 & 0.190 & 0.125 & 0.236 & 0.246 & 0.156 & 0.349 \\
Mean time/FP-Update & 0.008 & 0.007 & 0.010 & 0.015 & 0.010 & 0.014 & 0.019 & 0.018 & 0.019 \\
\# iterations & 34349 & 5072 & 24376 & 100 & 2923 & 4468 & 26042 & 39048 & 10246 \\
\# walks with h > 0 & 18 & 17 & 21 & 29 & 24 & 28 & 27 & 25 & 31 \\
$\Delta h$ & 0.010 & 0.010 & 0.011 & 0.012 & 0.011 & 0.012 & 0.060 & 8315.566 & 0.010 \\
$\Delta h$ (relative) & 0.000 & 0.000 & 0.000 & 0.000 & 0.000 & 0.000 & 0.000 & 0.000 & 0.000 \\
QoPI (absolute) & 11.242 & 19.854 & 44.903 & 2386.938 & 268.378 & 863.509 & 1153.254 & 251934.439 & 2698.931 \\
QoPI & 0.000 & 0.000 & 0.001 & 0.059 & 0.003 & 0.012 & 0.008 & 5.070 & 0.040 \\\bottomrule
\end{tabular}
\begin{tabular}{lccccccccc}
\# Commodities & \multicolumn{ 3}{c}{16} & \multicolumn{3}{c}{20} \\ \cmidrule(lr){2-4}\cmidrule(lr){5-7}
Total inflow & \multicolumn{ 3}{c}{14400} & \multicolumn{ 3}{c}{18000} \\ \cmidrule(lr){2-4}\cmidrule(lr){5-7}
Variant & A & B & C & A & B & C \\
\cmidrule(lr){2-2}\cmidrule(lr){3-3}\cmidrule(lr){4-4}\cmidrule(lr){5-5}\cmidrule(lr){6-6}
\cmidrule(lr){7-7}
Total \# walks & 58 & 49 & 58 & 67 & 58 & 67 \\
Wall clock time taken & TimeLim & TimeLim & 462.7341 & TimeLim & TimeLim & TimeLim \\
Mean time/DNL & 0.362 & 0.449 & 0.462 & 0.346 & 0.451 & 0.456 \\
Mean time/FP-update & 0.024 & 0.022 & 0.026 & 0.028 & 0.025 & 0.030 \\
\# iterations & 17880 & 14749 & 917 & 18368 & 14582 & 14284 \\
\# walks with h > 0 & 36 & 42 & 34 & 36 & 49 & 36 \\
$\Delta h$ & 0.093 & 0.166 & 0.010 & 0.105 & 0.102 & 0.091 \\
$\Delta h$ (relative) & 0.000 & 0.000 & 0.000 & 0.000 & 0.000 & 0.000 \\
QoPI (absolute) & 368.589 & 910.713 & 2372.727 & 407.333 & 455.671 & 477.226 \\
QoPI & 0.002 & 0.006 & 0.017 & 0.002 & 0.002 & 0.003 \\ \bottomrule
\end{tabular}
\label{tab:nguyen}
\end{table}
\begin{figure}[h]
	\centering
	\includegraphics[width=0.49\textwidth]{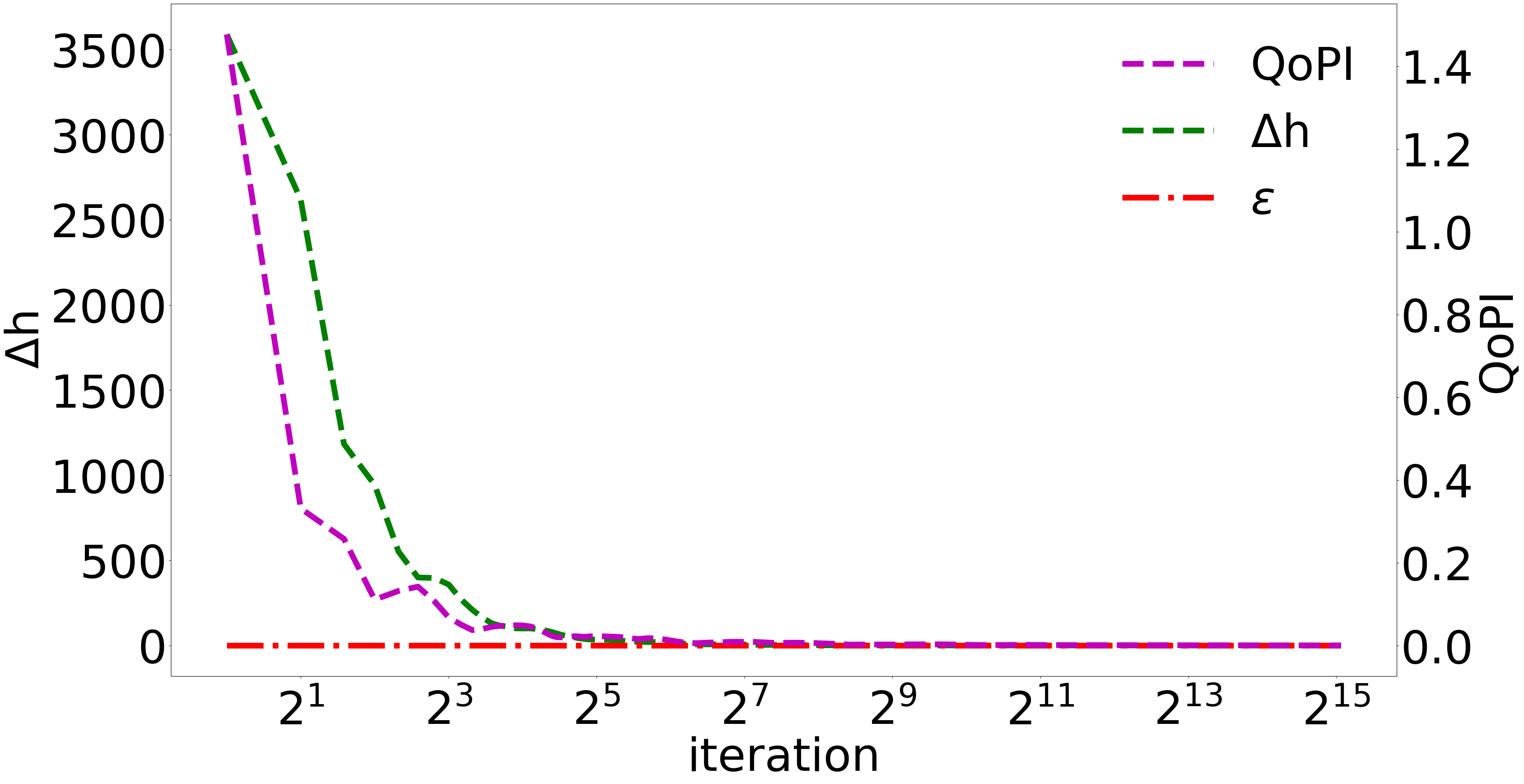}
	\includegraphics[width=0.49\textwidth]{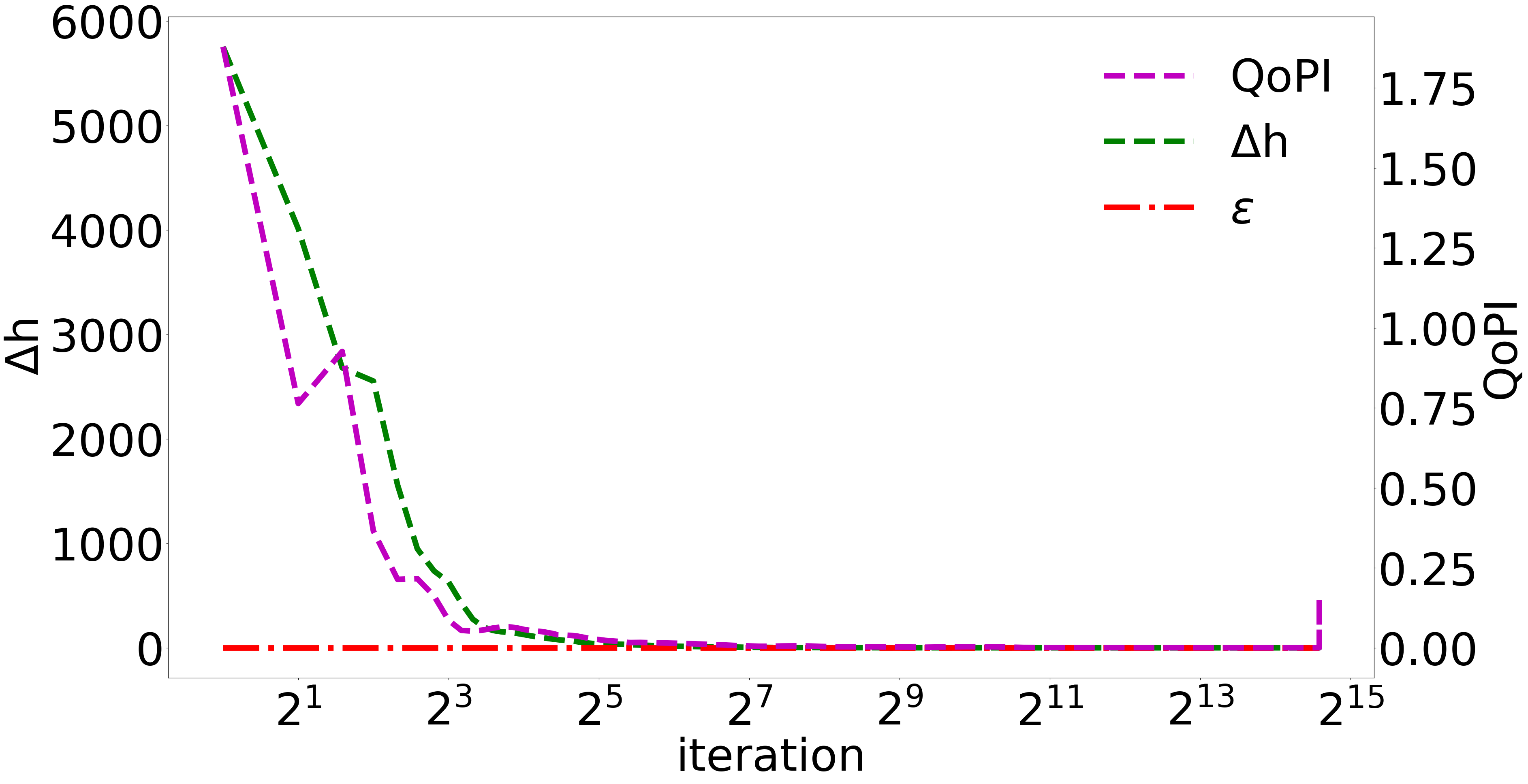}
	\caption{Change in the norm of walk-flows ($\Delta h$) and QoPI over iterations for
		the networks Nguyen-A (left) and Nguyen-C (right) for four commodities.}
	\label{fig:nguyenConv}
\end{figure}

\paragraph*{Travel times and energy profile.}
For the Nguyen network with four commodities we give some more details on the computed approximate equilibria in \Cref{fig:nguyen4minTravTimes,fig:nguyen4EnergyProfs,fig:combProfsNguyen}. \figref{nguyen4minTravTimes} shows the travel times of the four commodities. We can see here that the introduction of energy-constraints in Nguyen-B leads to higher travel times compared to Nguyen-A for three of the four commodities (as some of the shorter walks become infeasible) The addition of charging stations in Nguyen-C then reduces the minimum walk travel times for all these commodities again. Compared to the respective travel time profiles in Nguyen-A network (the case without energy-constraints), the commodity denoted `comm0' still incurs a higher travel time while the other commodities exhibit very similar profiles.

\figref{nguyen4EnergyProfs} shows the corresponding minimum, maximum and mean energy
consumption profiles, respectively, for Nguyen-B (top row)
and Nguyen-C (bottom row).  \figref{nguyen4minTravTimes} shows the energy profiles aggregated over all four commodities. We observe that in particular for commodity `comm3' the decrease in travel times from Nguyen-B to Nguyen-C comes with a significant increase in the energy consumption (especially evident from the plots for mean energy consumption). At the same time `comm2' also profits from the introduction of recharging stations in terms of travel time while incurring only a very modest increase in energy consumption: The maximum energy consumption increases significantly, but the average energy consumption stays almost the same, suggesting that while new (more energy intensive) walks become available, only a small percentage of this commodity's particles actually take these less energy efficient walks.
Altogether this shows that the effects of adding recharging stations on travel time and energy consumption can vary significantly between different agents. 

\begin{figure}[h]
	\centering
	\includegraphics[width=0.32\textwidth]{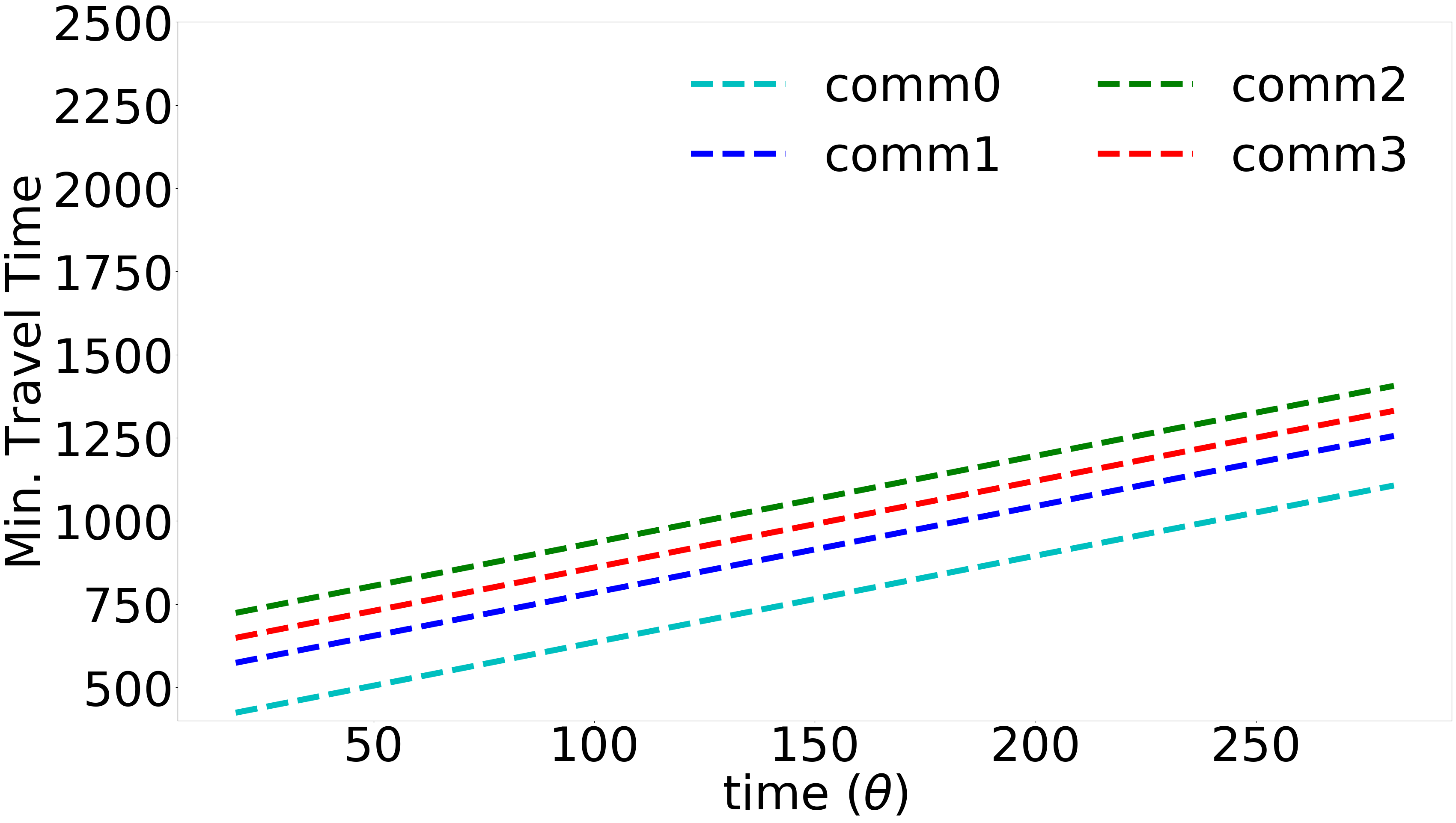}
	\includegraphics[width=0.32\textwidth]{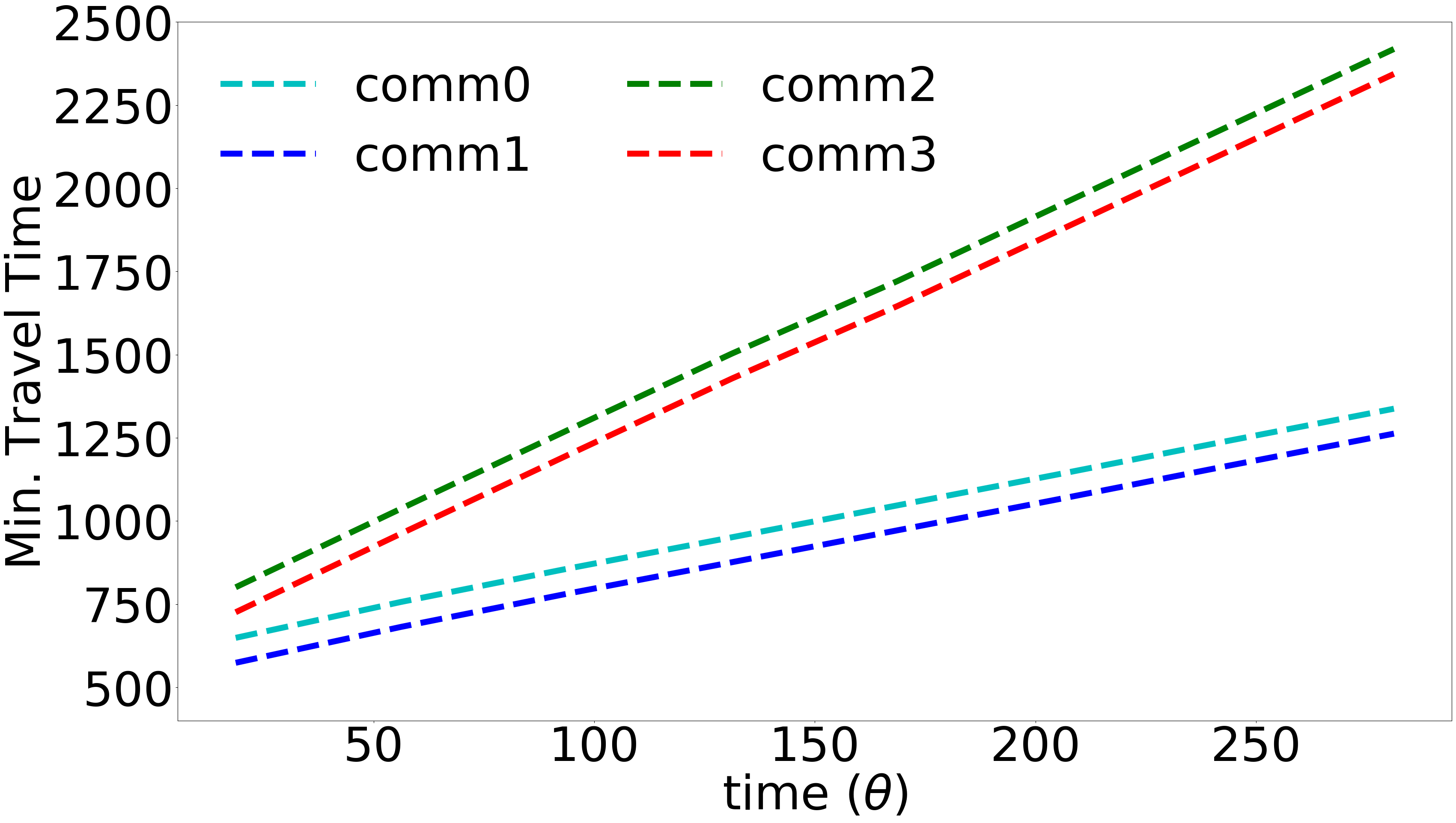}
	\includegraphics[width=0.32\textwidth]{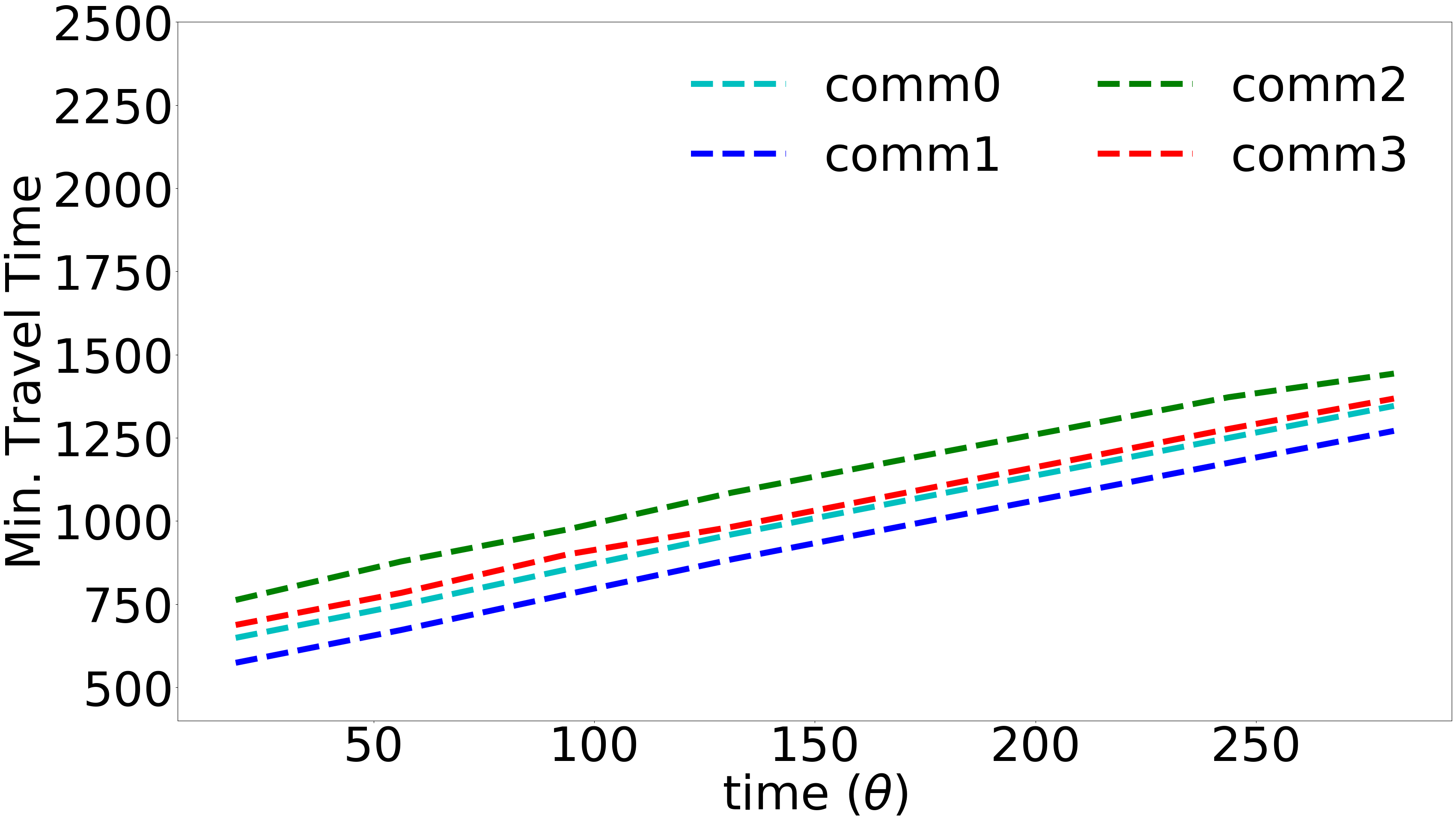}
	\caption{Minimum walk travel times for (four) different commodities for the Nguyen-A
		(left), Nguyen-B (centre) and Nguyen-C networks (right).}
	\label{fig:nguyen4minTravTimes}
\end{figure}
\begin{figure}[h]
\centering
\includegraphics[width=0.49\textwidth]{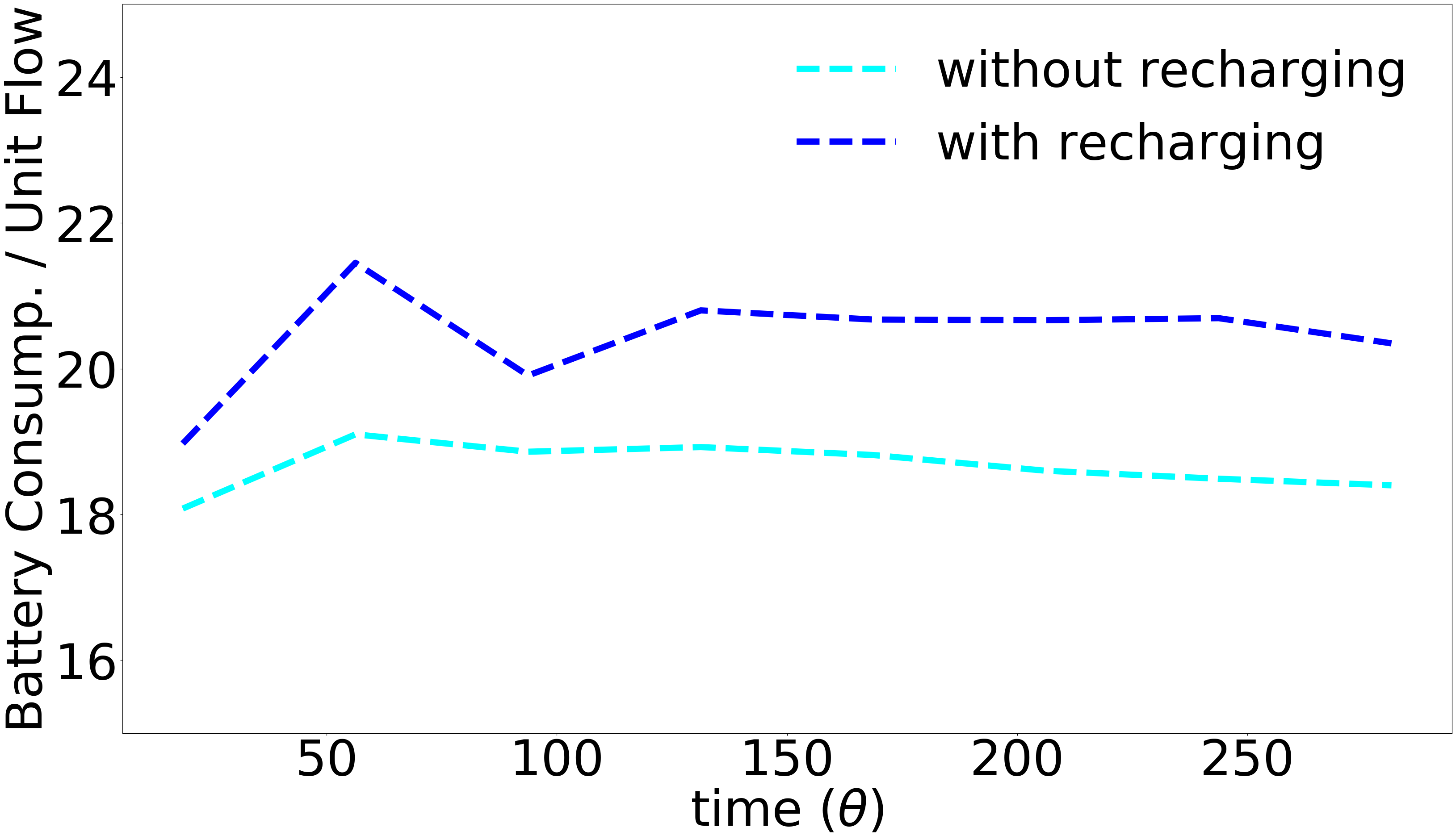}
\caption{Energy consumption profiles for the networks Nguyen-B and Nguyen-C four commodities.}
\label{fig:combProfsNguyen}
\end{figure}
\begin{figure}[h]
\centering
\includegraphics[width=0.32\textwidth]{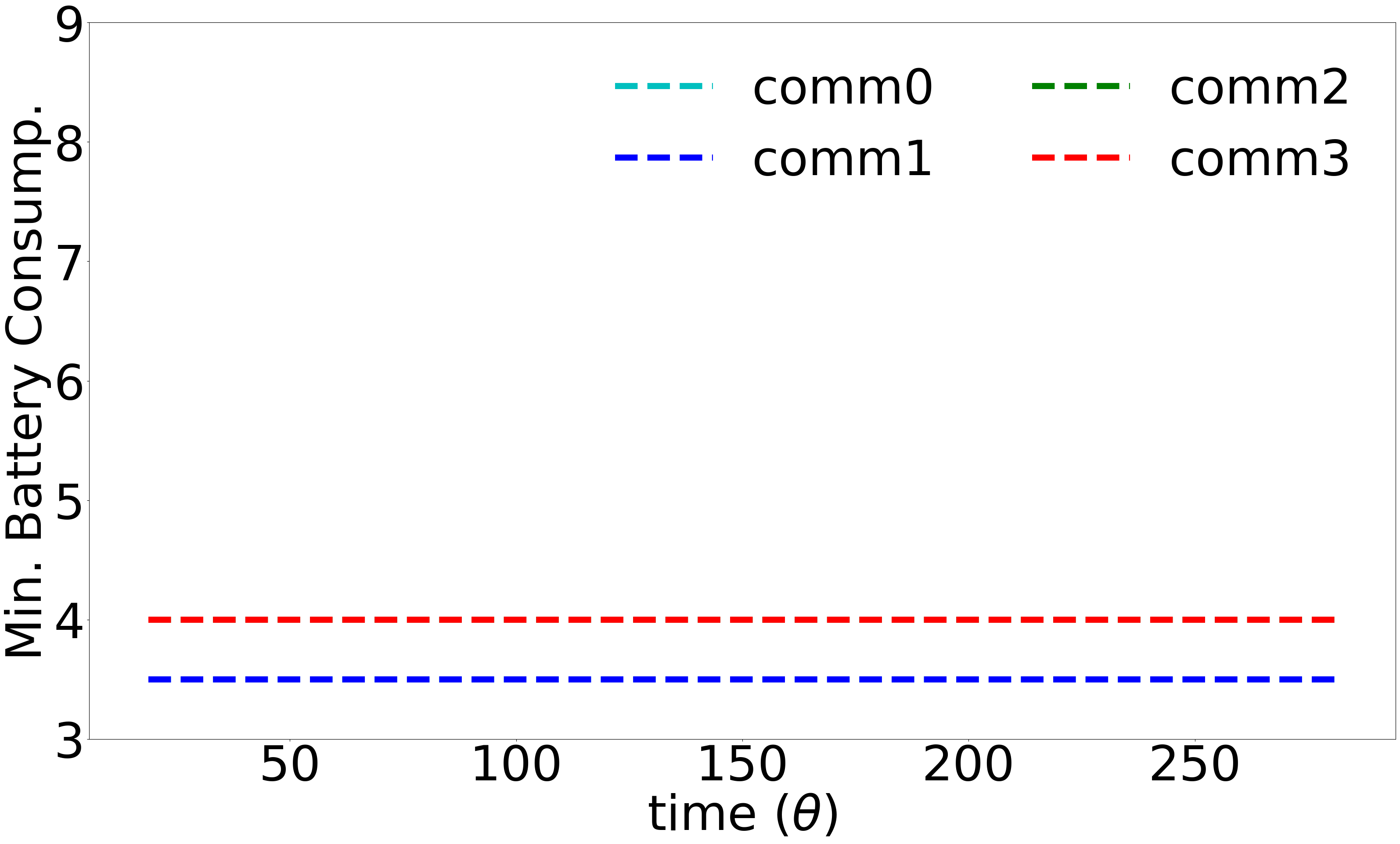}
\includegraphics[width=0.32\textwidth]{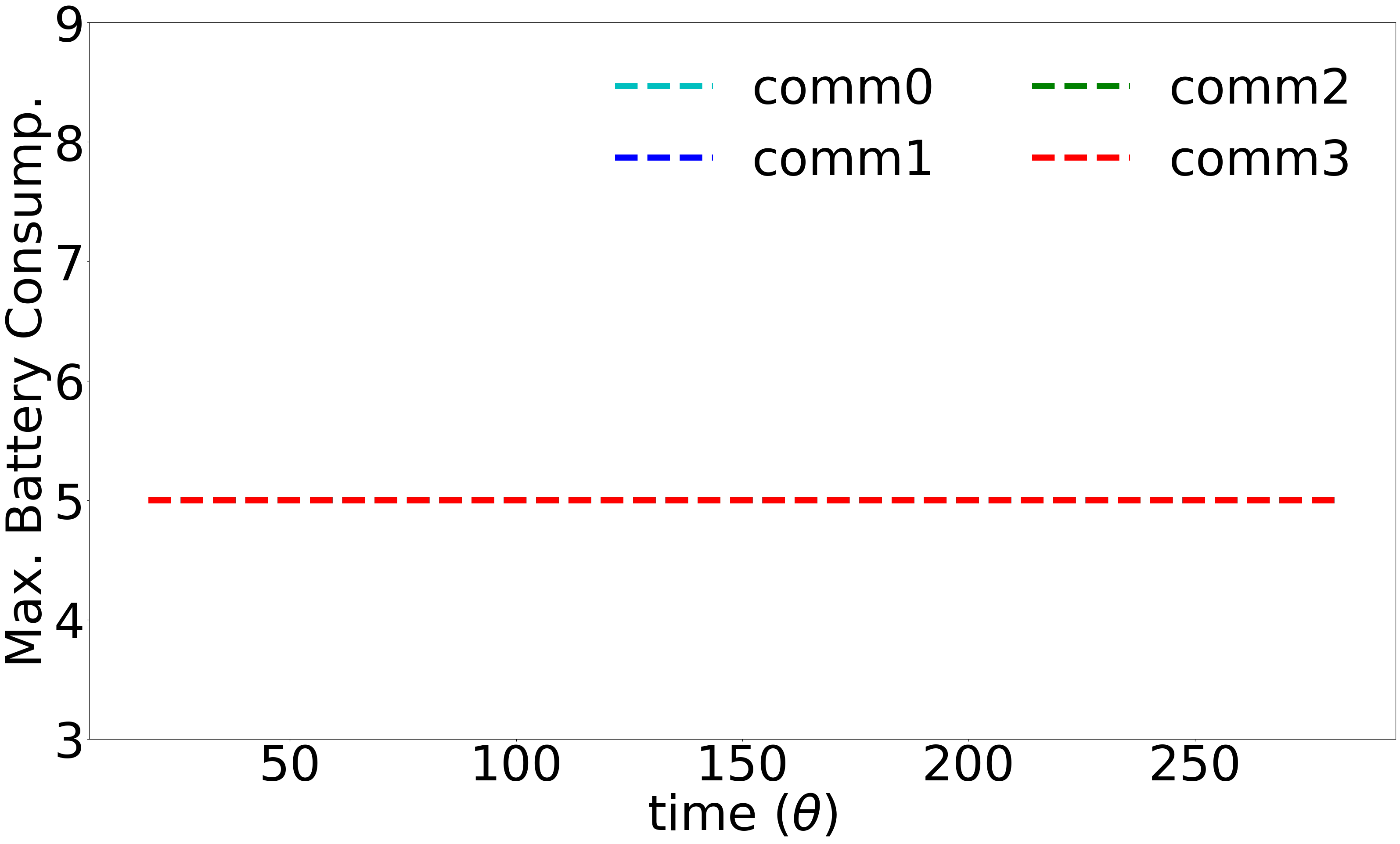}
\includegraphics[width=0.32\textwidth]{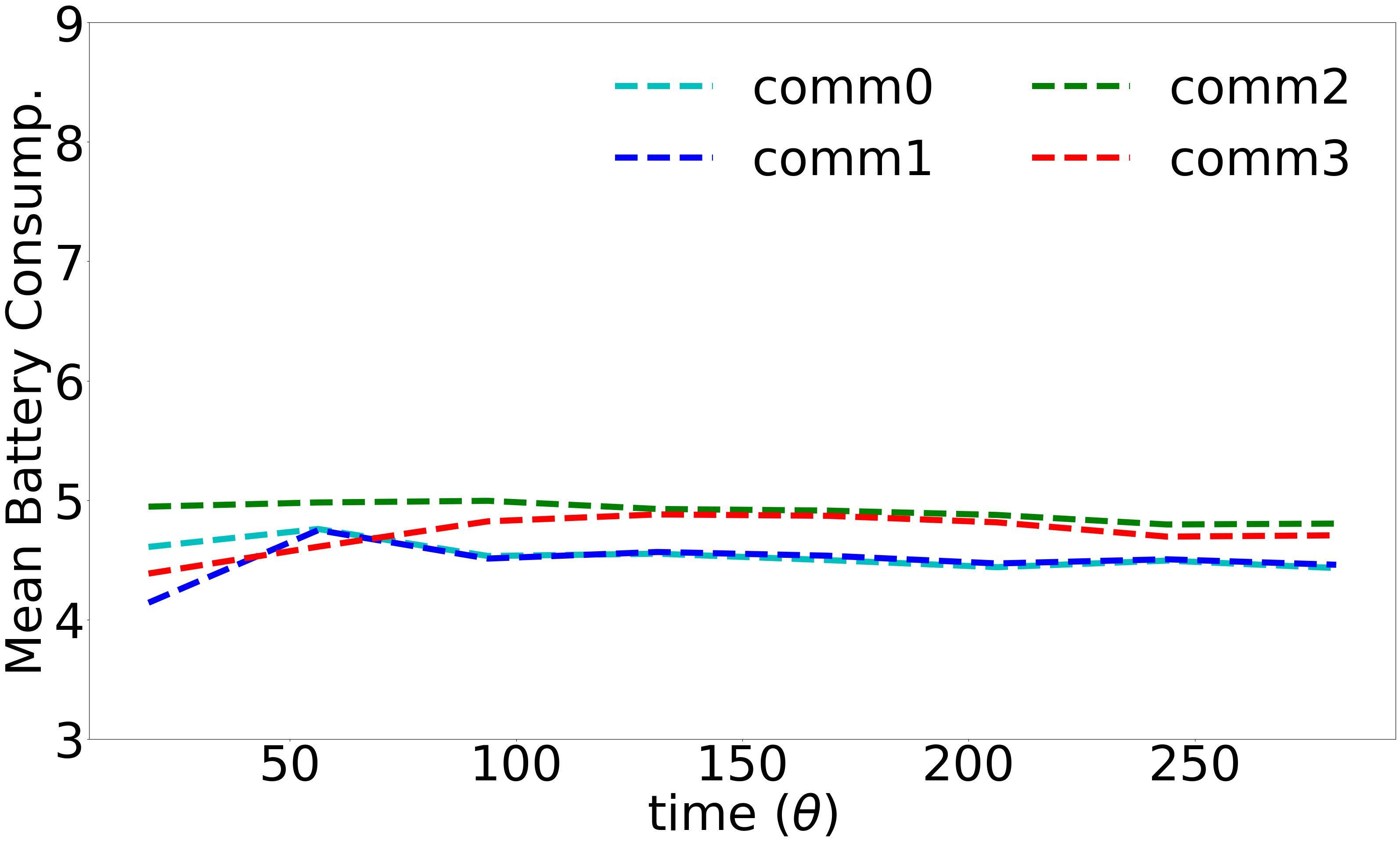}
\includegraphics[width=0.32\textwidth]{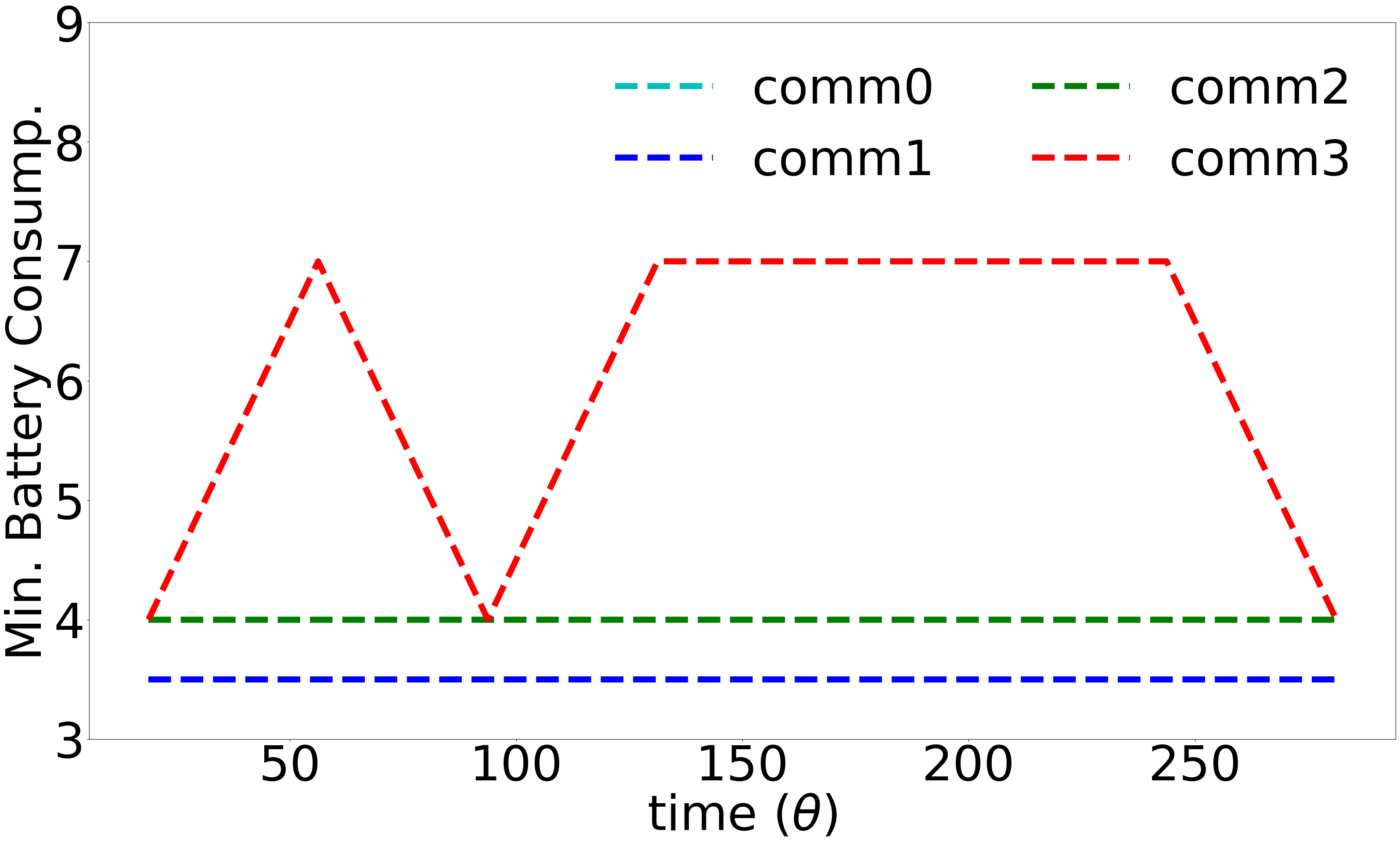}
\includegraphics[width=0.32\textwidth]{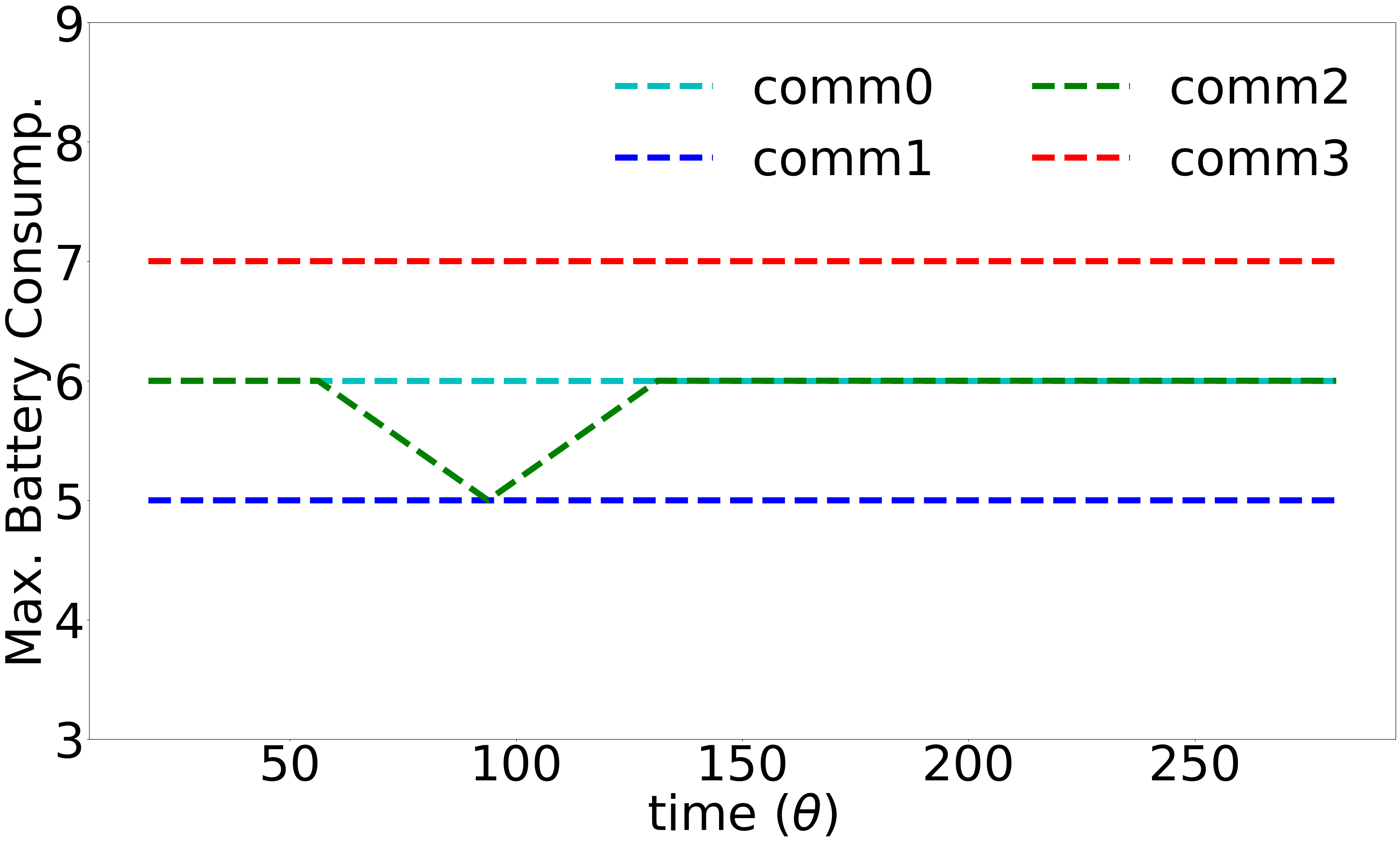}
\includegraphics[width=0.32\textwidth]{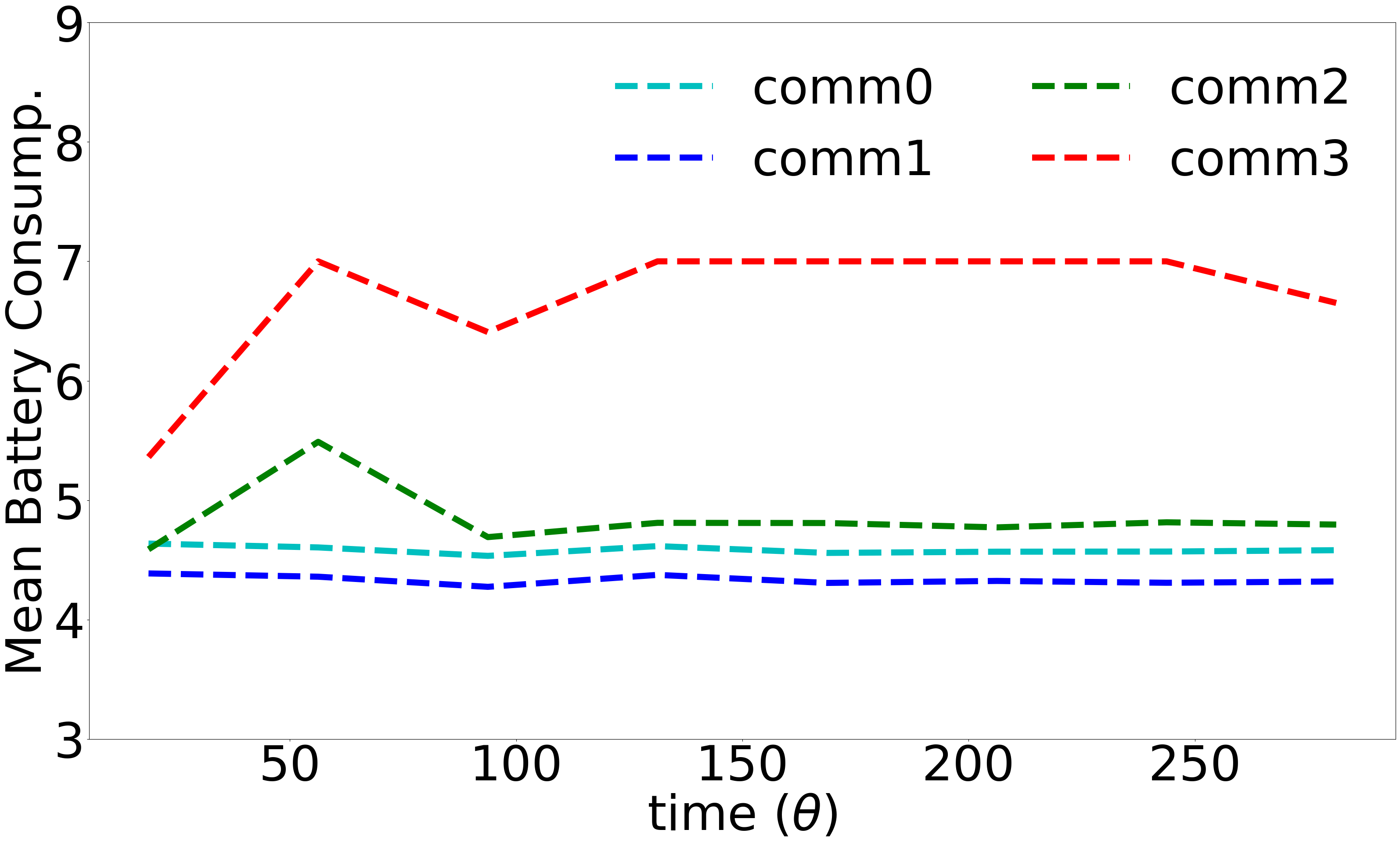}
\caption{Minimum, maximum and the mean energy consumption profiles for (four)
different commodities for the Nguyen-B (top) and Nguyen-C (bottom) networks.}
\label{fig:nguyen4EnergyProfs}
\end{figure}

\paragraph*{Effects of recharging prices.}
\figref{nguyenEnergy} (left) shows the effect of different prices for recharging
for the Nguyen-C network
on the energy consumption profiles via use of different values of the parameter $\widetilde{\lambda}_i$. The energy
consumption is initially higher when the
recharging prices are low (corresponding to smaller values of $\widetilde{\lambda}_i$), but
eventually, higher prices discourage the agents to use walks that require recharging, and the
respective energy profiles converge to the one with no recharging. 
The effect of prices on the mean (\figref{nguyenEnergy}, right), minimum
(\figref{nguyenPricedTimes}, left), and the maximum
(\figref{nguyenPricedTimes}, right) travel times taken for walks with a positive flow at
times $\theta \in [0,T]$ over all commodities is also shown. The maximum travel times are the lowest when there is no
price for recharging and increase with prices. For very high prices, the plots for maximum travel
time converge to the one with no recharging. The effect on the minimum travel times is
not as stark, however, the prices corresponding to values $\widetilde{\lambda}_i=10$
and $20$ seem to benefit the agents that start after specific time-points.
\begin{figure}[h]
\centering
\includegraphics[width=0.48\textwidth]{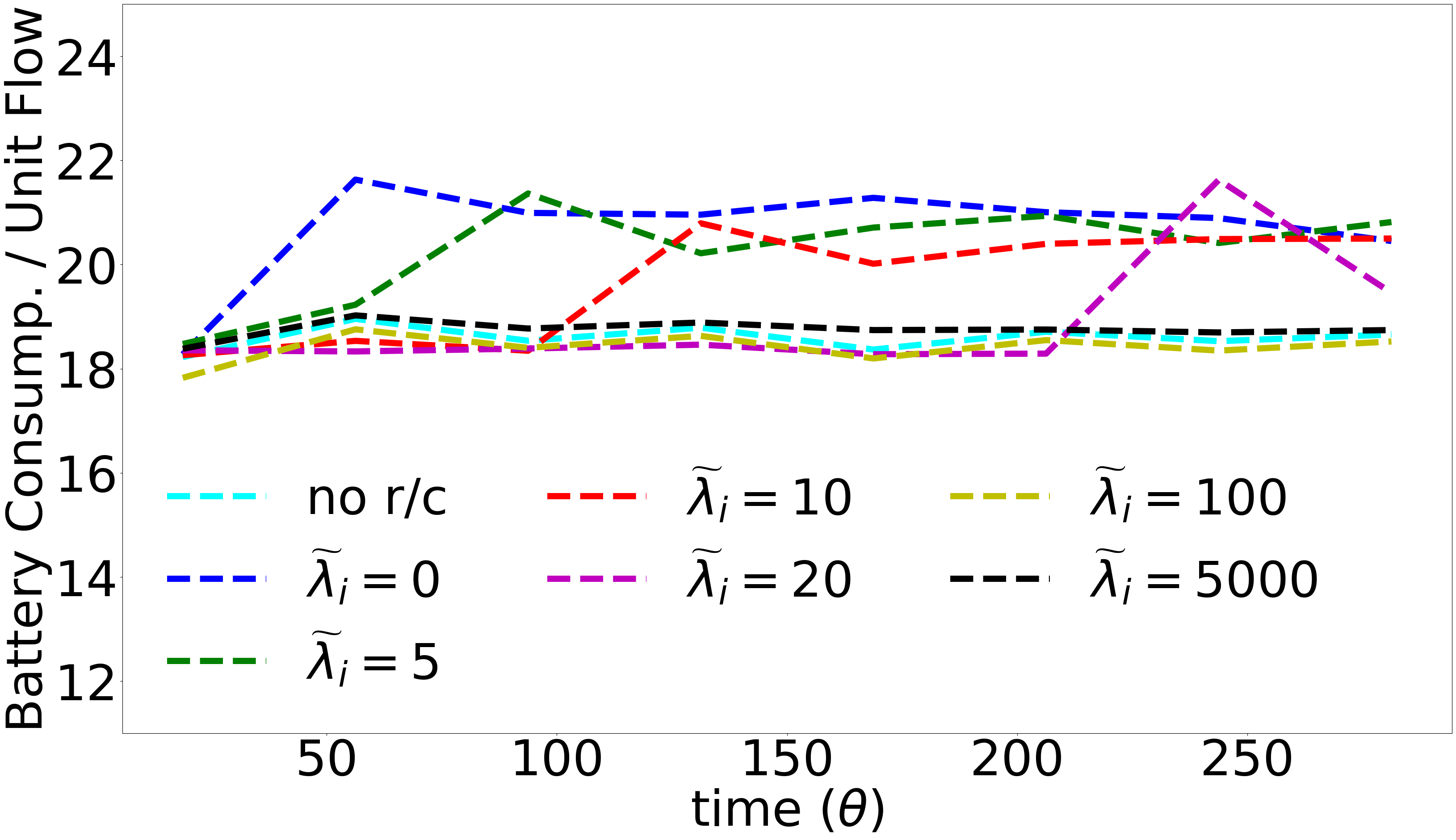}
\includegraphics[width=0.50\textwidth]{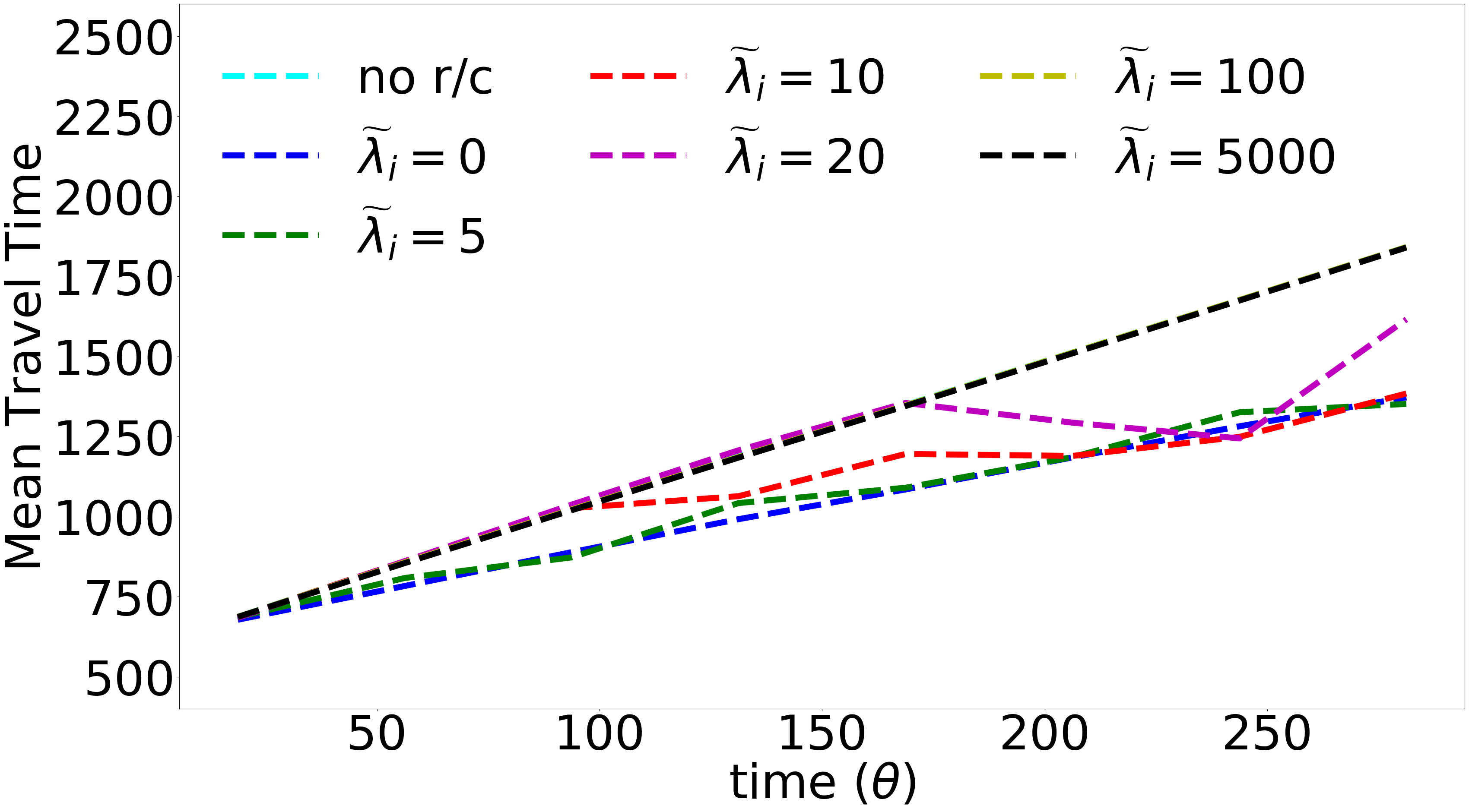}
\caption{Variation in the energy consumption profiles with different values of the
parameter $\tilde{\lambda}_i$ for recharing (left) and the mean travel times taken
for Nguyen-C (right) network with four commodities. $\alpha^0 = 0.001$ has been used
for these experiments.}
\label{fig:nguyenEnergy}
\end{figure}
\begin{figure}[h]
\centering
\includegraphics[width=0.49\textwidth]{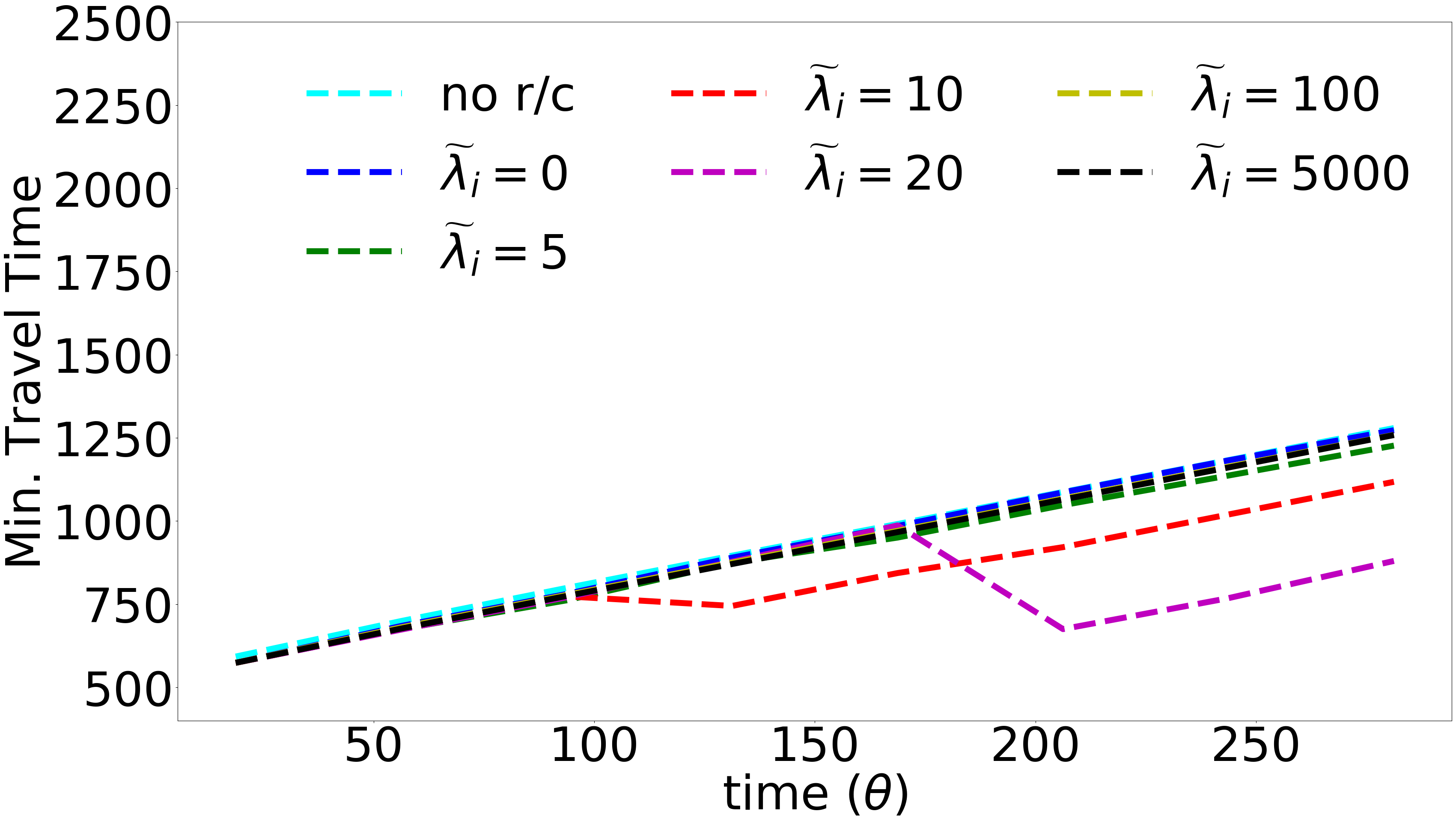}
\includegraphics[width=0.49\textwidth]{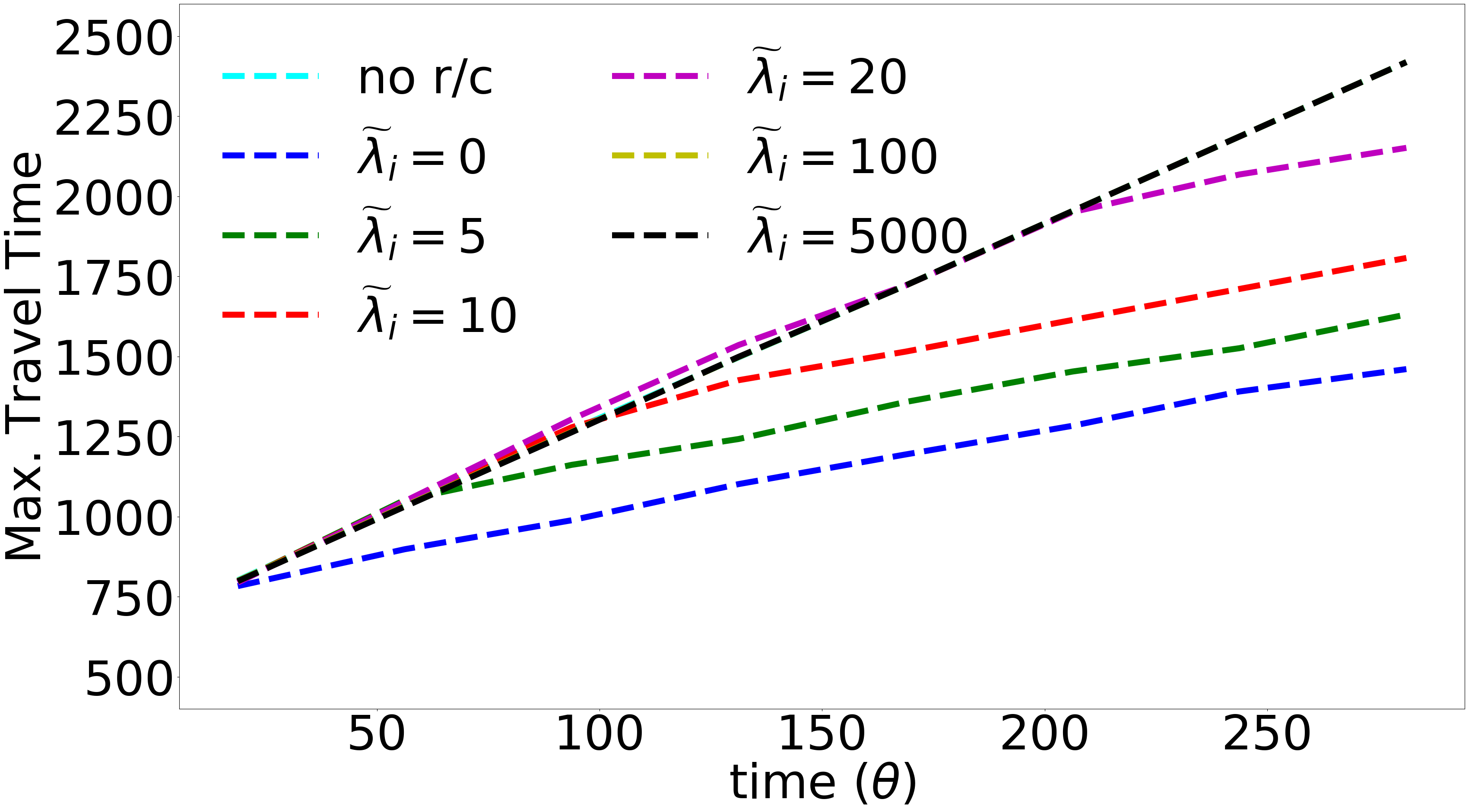}
\caption{The minimum (left) and the maximum (right) travel times taken over four
commodities for walks with a positive flow for the Nguyen-C network with different
values of the parameter $\widetilde{\lambda}_i$.}
\label{fig:nguyenPricedTimes}
\end{figure}
The plots in \figref{nguyenPricedTimes1} for the mean of the minimum (left) and the
mean of the maximum travel times (right) exhibit an effect similar to that on the
minimum and the maximum travel times.
\begin{figure}[h]
\centering
\includegraphics[width=0.49\textwidth]{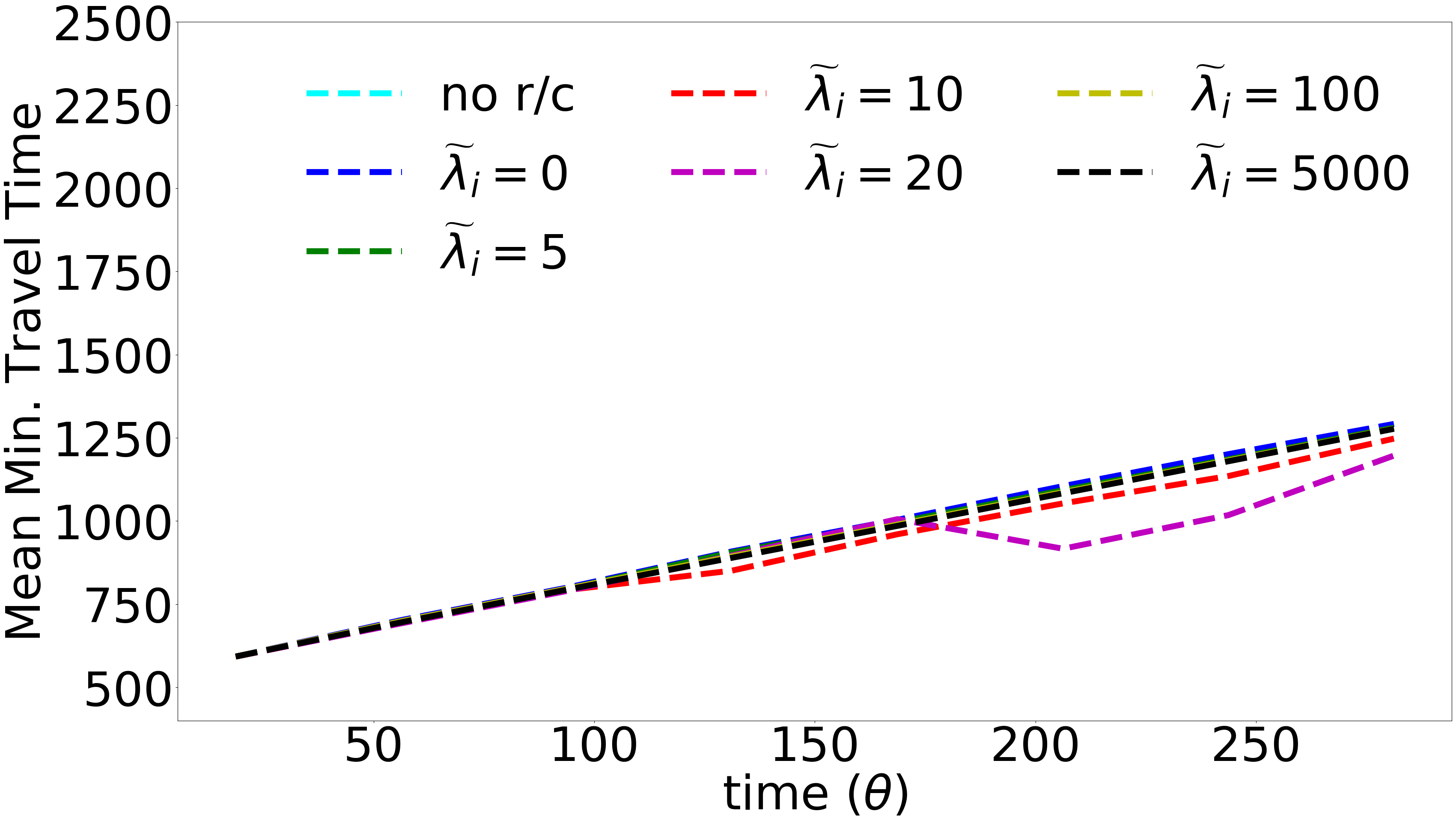}
\includegraphics[width=0.49\textwidth]{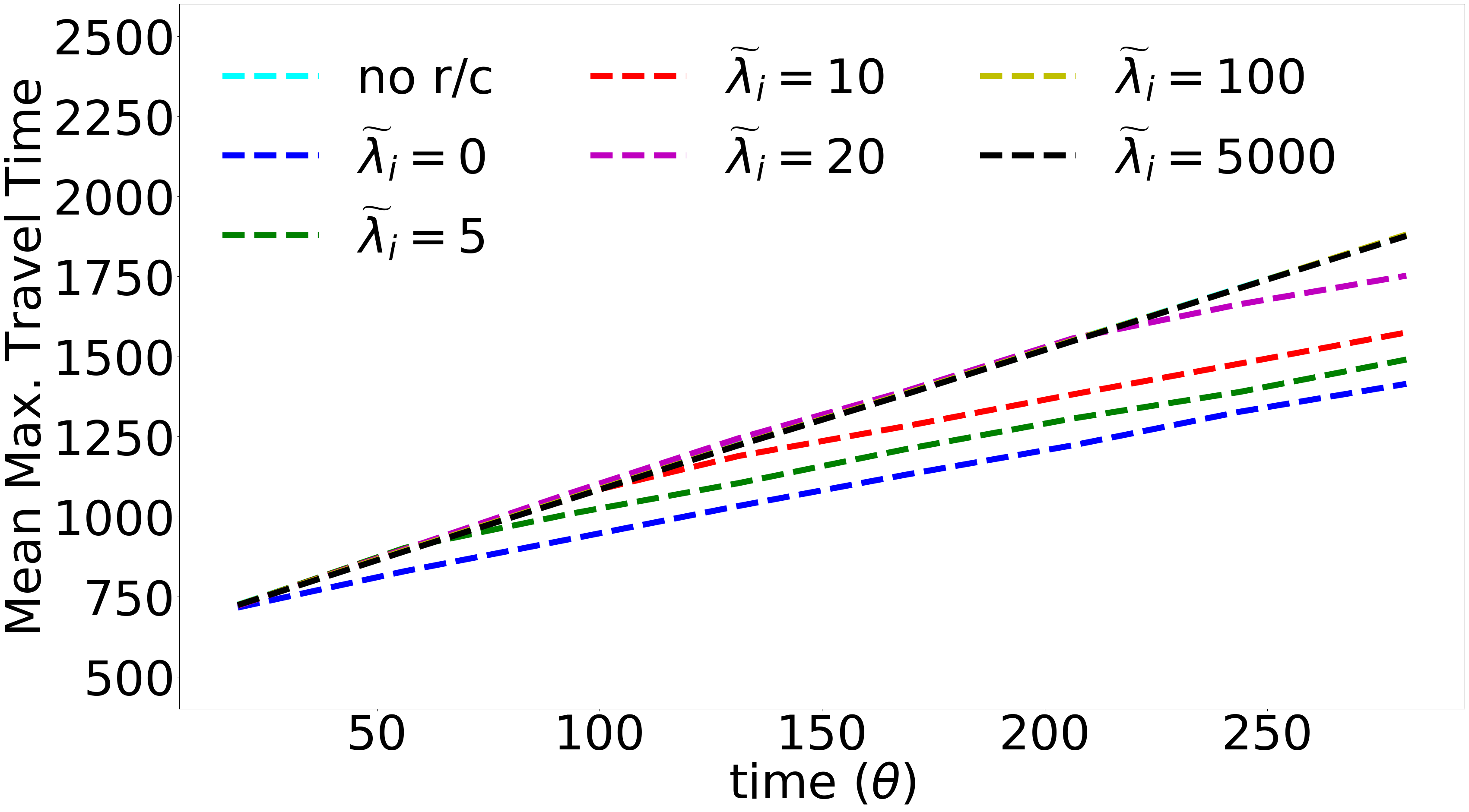}
\caption{The mean of minimum (left) and the mean of the maximum (right) travel times
over all four commodities for walks with a positive flow for the Nguyen-C network
with different values of the parameter $\widetilde{\lambda}_i$.}
\label{fig:nguyenPricedTimes1}
\end{figure}

\paragraph*{Effects of recharging station placement.}
To see the impact of different number and location of recharging stations, we used
recharging stations at nodes 6, 8 and 9 (with a zero price for recharging) for the
Nguyen-C network as shown in \figref{nguyenNetwork}.
We use the notation shown in \tabref{nguyenNotation} to indicate a particular
combination of the operating recharging stations.
\begin{table}[htbp]
\caption{Notation indicating the number and location of recharging stations for the
Nguyen-C network.}
\centering
\begin{tabular}{cccccccc}\toprule
\# r/c stations & \multicolumn{ 3}{c}{1} & \multicolumn{ 3}{c}{2} & 3 \\\cmidrule(lr){2-4}\cmidrule(lr){5-7}\cmidrule(lr){8-8}
r/c station(s) at node(s) & \{6\} & \{8\} & \{9\} & \{6, 8\} & \{6, 9\} & \{8, 9\} & \{6, 8, 9\} \\
Notation & R1s1 & R1s2 & R1s3 & R2s1 & R2s2 & R2s3 & R3 \\ \bottomrule
\end{tabular}
\label{tab:nguyenNotation}
\end{table}
\figref{rnguyen4Energy} (left) shows the energy consumption profiles with the
corresponding mean travel times in \figref{rnguyen4Energy} (right). The highest
energy consumption corresponds to combinations R2s3, R3, and R1s3 (in the mentioned
order) that include the recharging
station at node 9. \figref{rnguyen4Times} shows the minimum and maximum travel times. 
For clarity, the mean times taken using 1, 2 and 3 recharging stations are
shown in \figref{rnguyen4TimesR123} which highlight that the mentioned combinations
corresponding to high energy use involving the recharging
station at node 9 lead to shorter travel times on average. Similar patterns are observed for
the means of the minimum and the maximum travel times as shown in \figref{rnguyen4Times1}.
When using eight commodities for the Nguyen-C network, we observe patterns similar to
the case of four commodities in the energy profiles and the mean travel times as
shown in \figref{rnguyen8}.
\begin{figure}[h]
\centering
\includegraphics[width=0.48\textwidth]{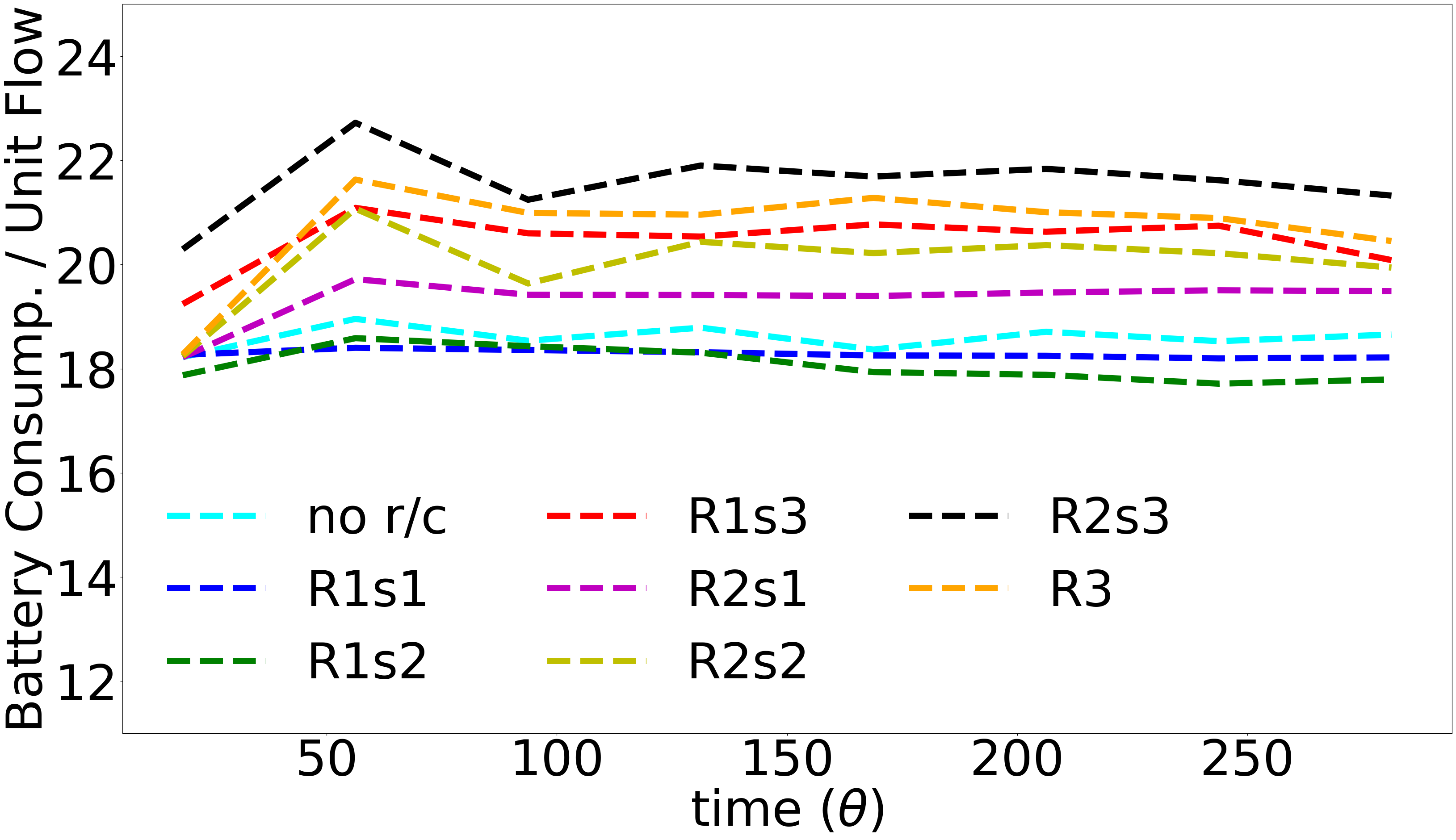}
\includegraphics[width=0.50\textwidth]{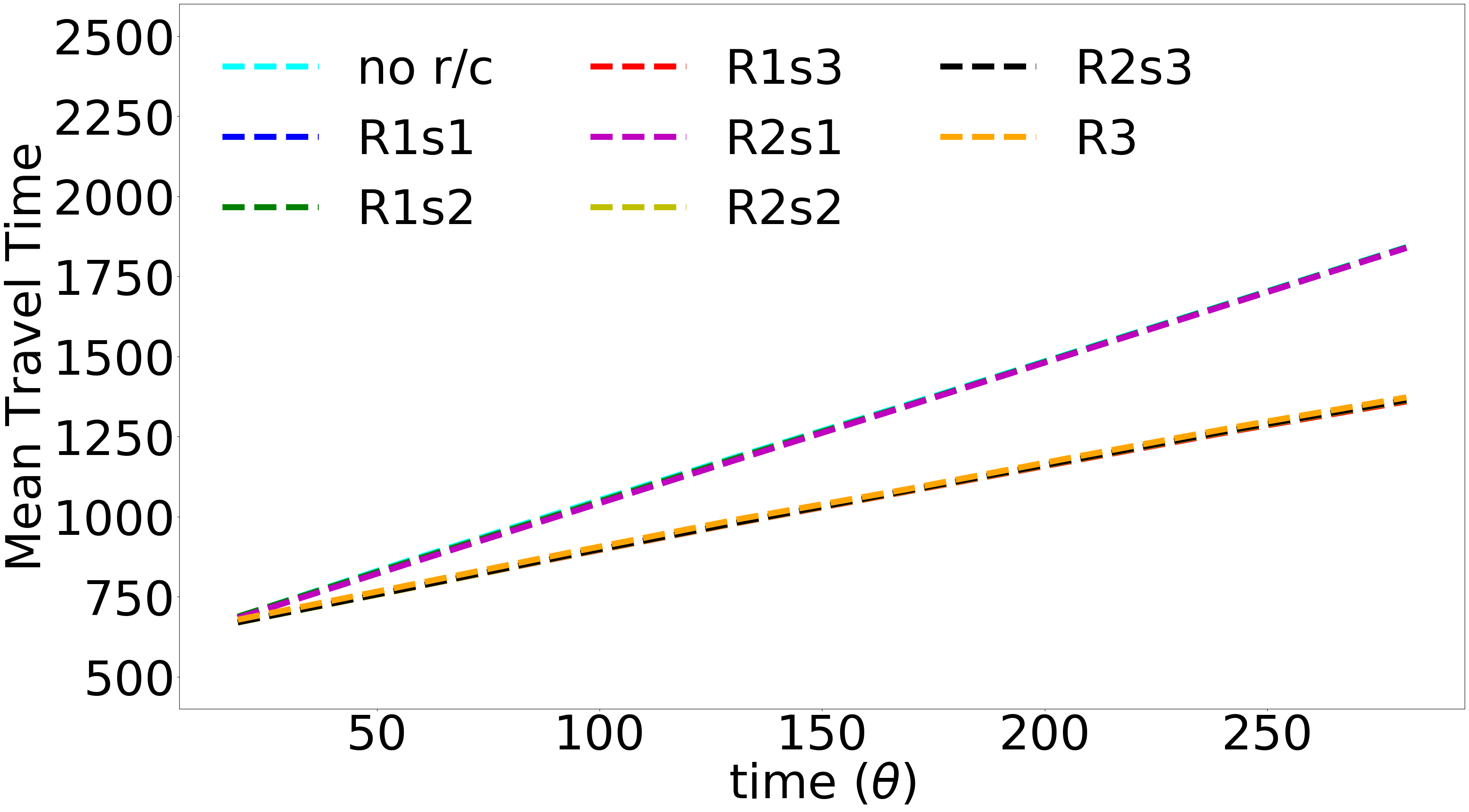}
\caption{The energy consumption profiles (left) and the mean travel times (right)
taken over all four commodities for walks with a positive flow for the Nguyen-C
network with different different combinations of recharging stations at nodes 6, 8 and/or 9.
}
\label{fig:rnguyen4Energy}
\end{figure}
\begin{figure}[h]
\centering
\includegraphics[width=0.49\textwidth]{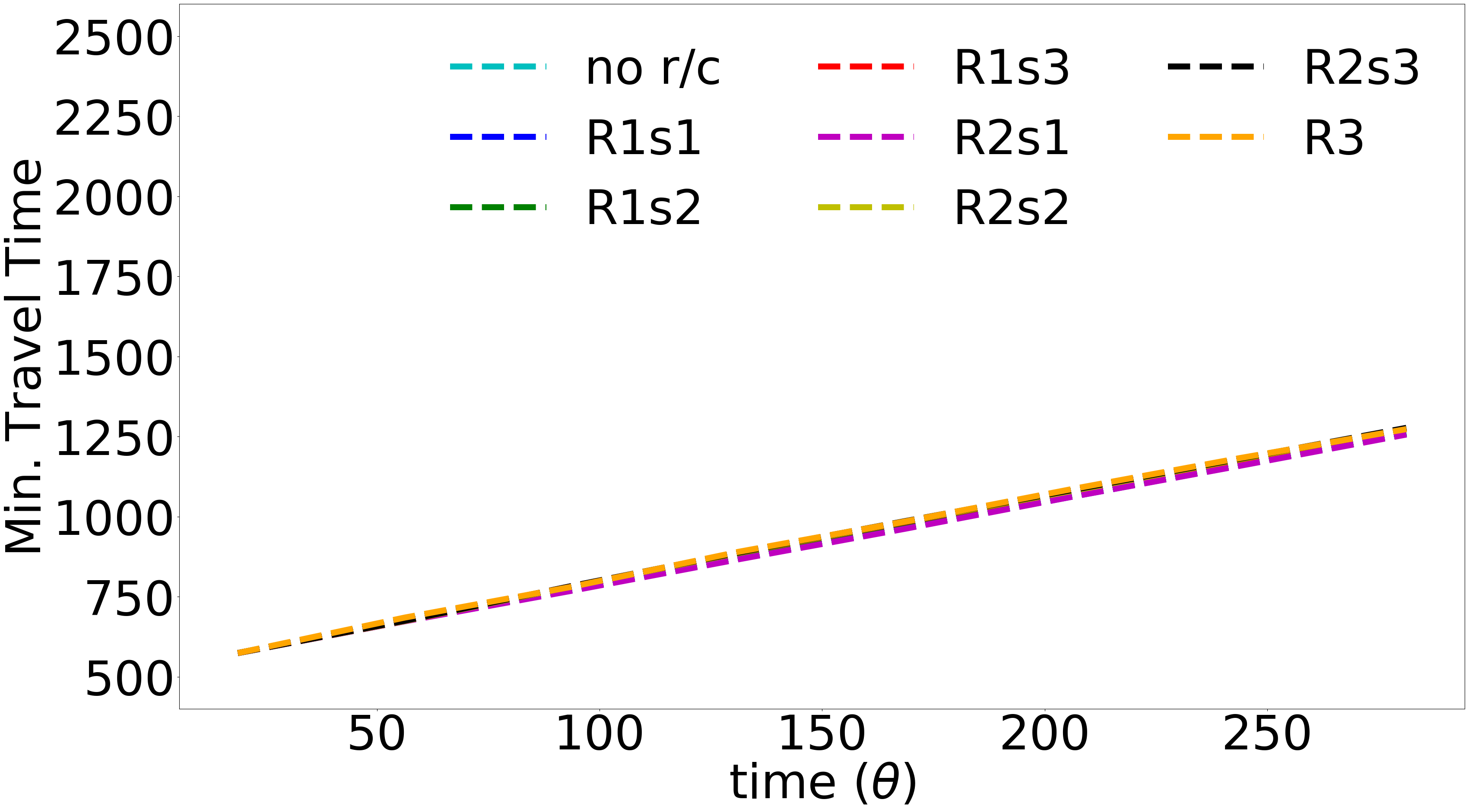}
\includegraphics[width=0.49\textwidth]{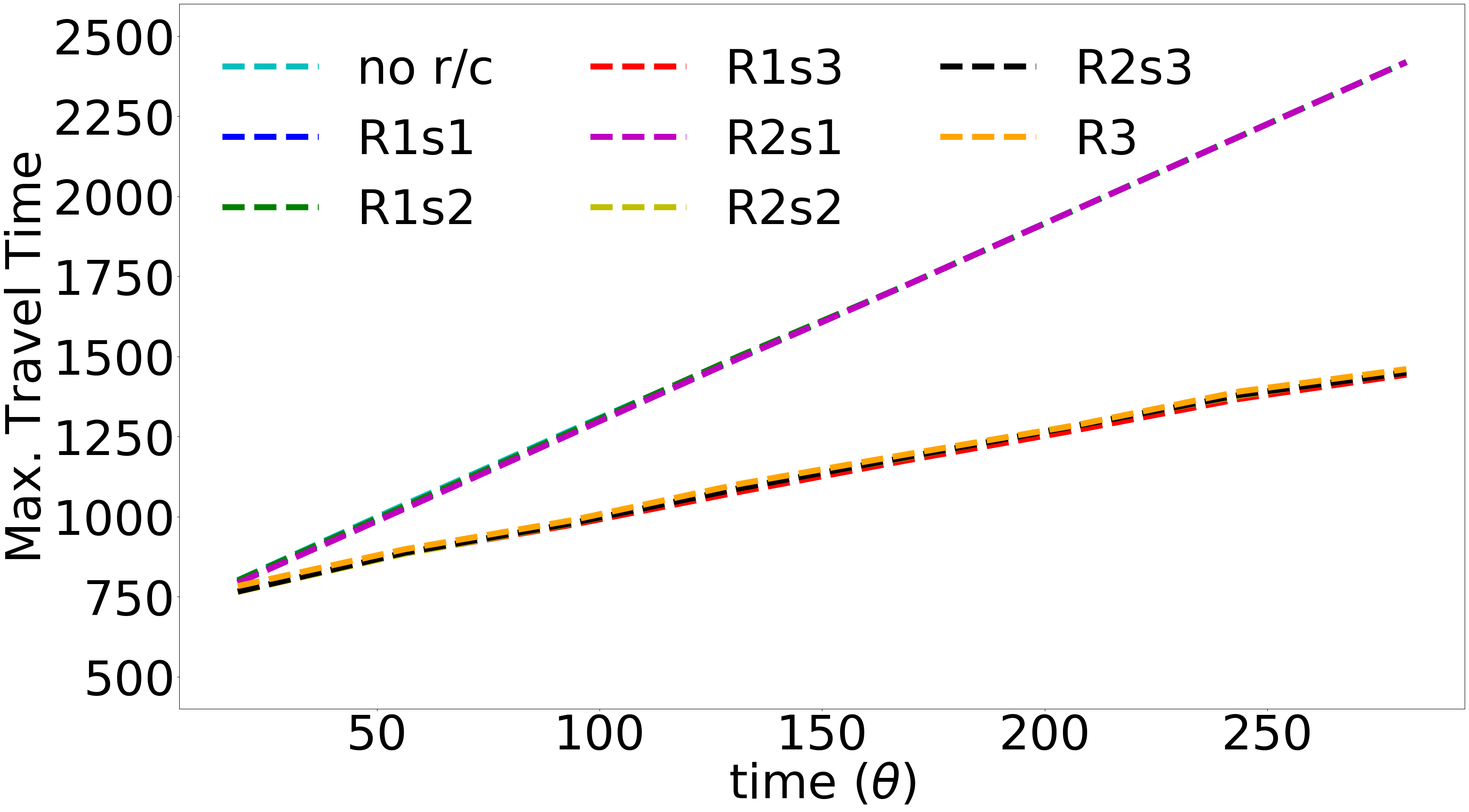}
\caption{The minimum (left) and the maximum (right) travel times
over all four commodities for walks with a positive flow for the Nguyen-C network
with different combinations of recharging stations at nodes 6, 8 and/or 9.}
\label{fig:rnguyen4Times}
\end{figure}
\begin{figure}[h]
\centering
\includegraphics[width=0.32\textwidth]{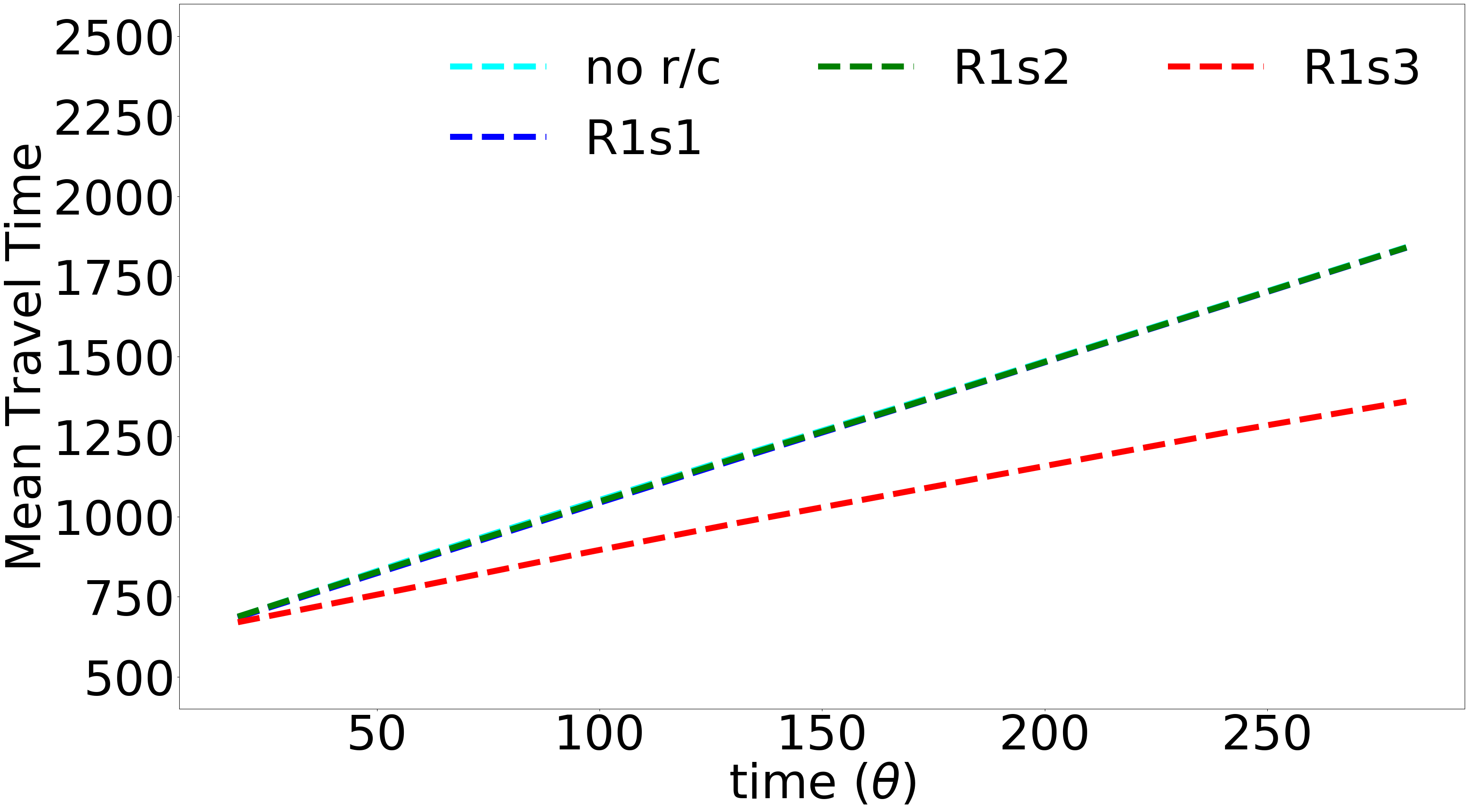}
\includegraphics[width=0.32\textwidth]{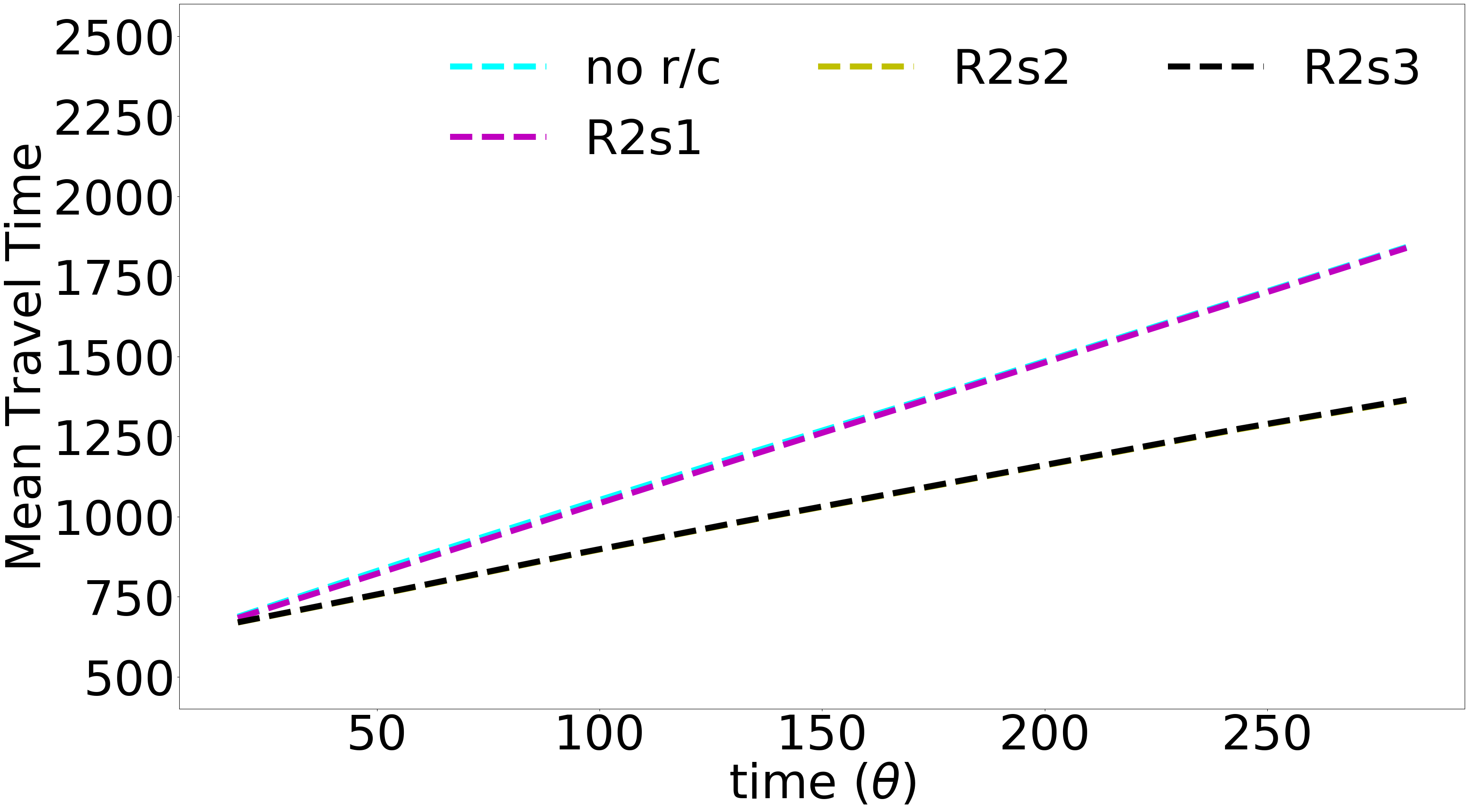}
\includegraphics[width=0.32\textwidth]{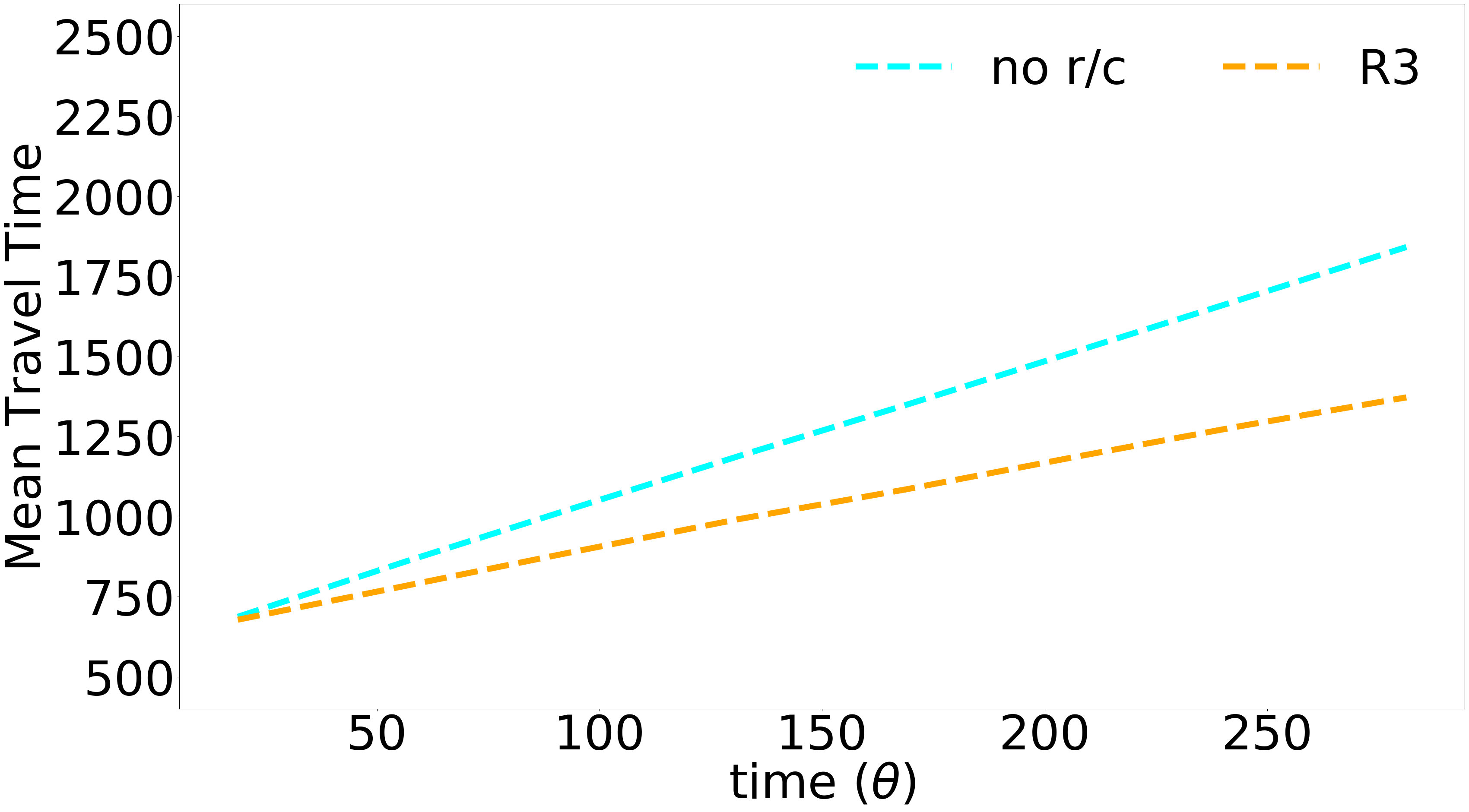}
\caption{The mean travel times for walks with a positive flow for Nguyen-C network
with one (left), two (centre) and three recharging stations (right).}
\label{fig:rnguyen4TimesR123}
\end{figure}
\begin{figure}[h]
\centering
\includegraphics[width=0.49\textwidth]{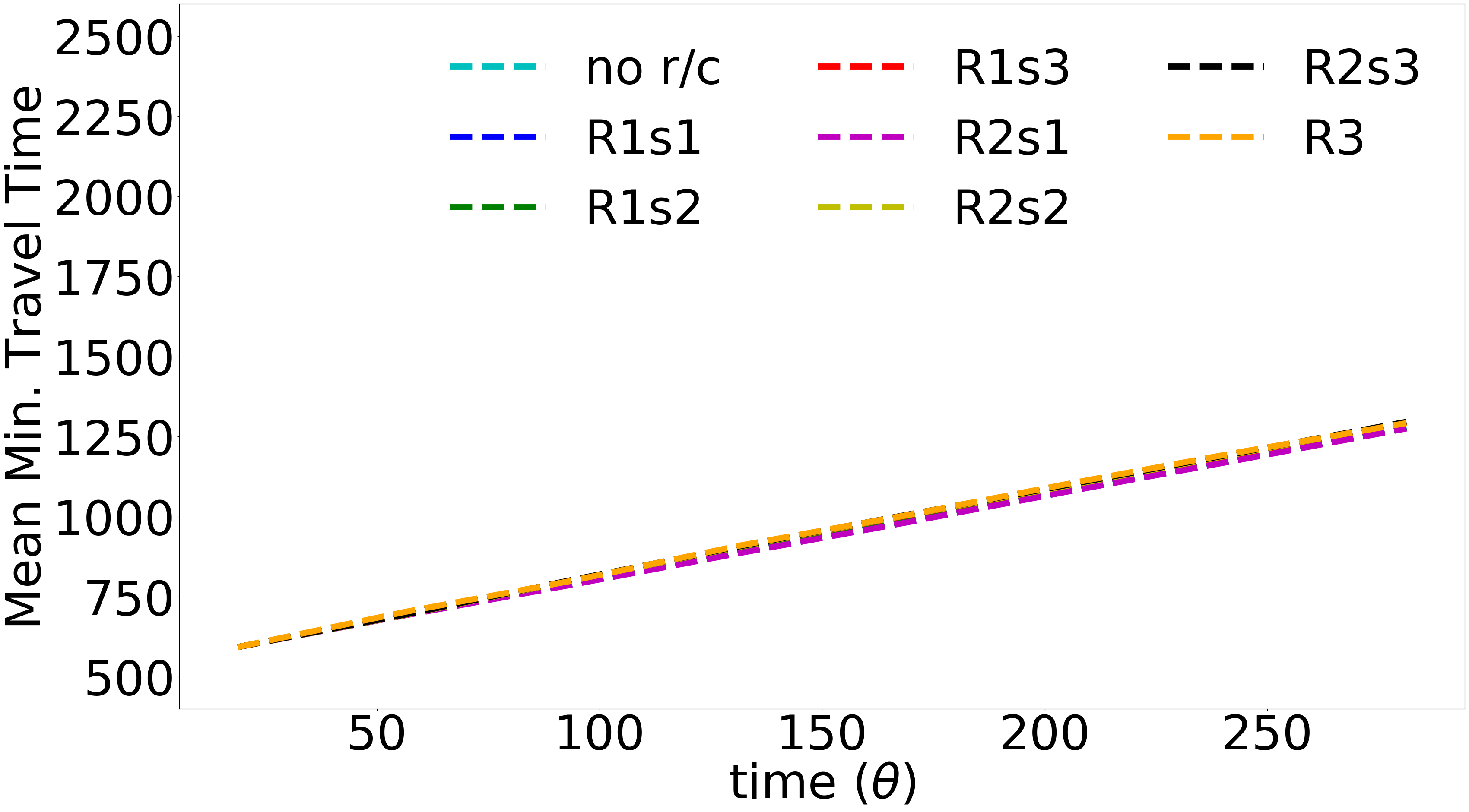}
\includegraphics[width=0.49\textwidth]{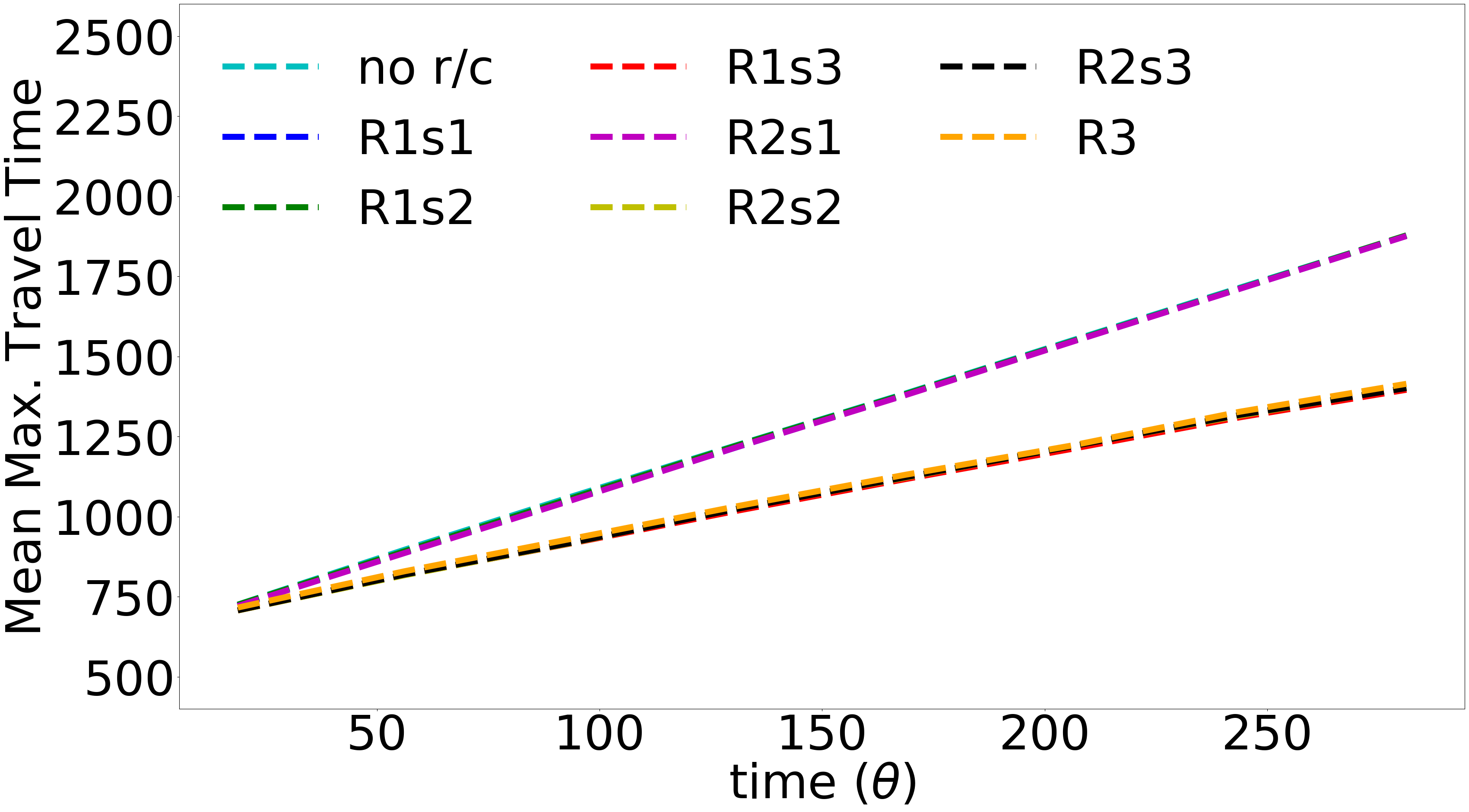}
\caption{The mean of minimum (left) and the mean of the maximum (right) travel times
over all four commodities for walks with a positive flow for the Nguyen-C network
with different combinations of recharging stations at nodes 6, 8 and/or 9.}
\label{fig:rnguyen4Times1}
\end{figure}
\begin{figure}[h]
\centering
\includegraphics[width=0.49\textwidth]{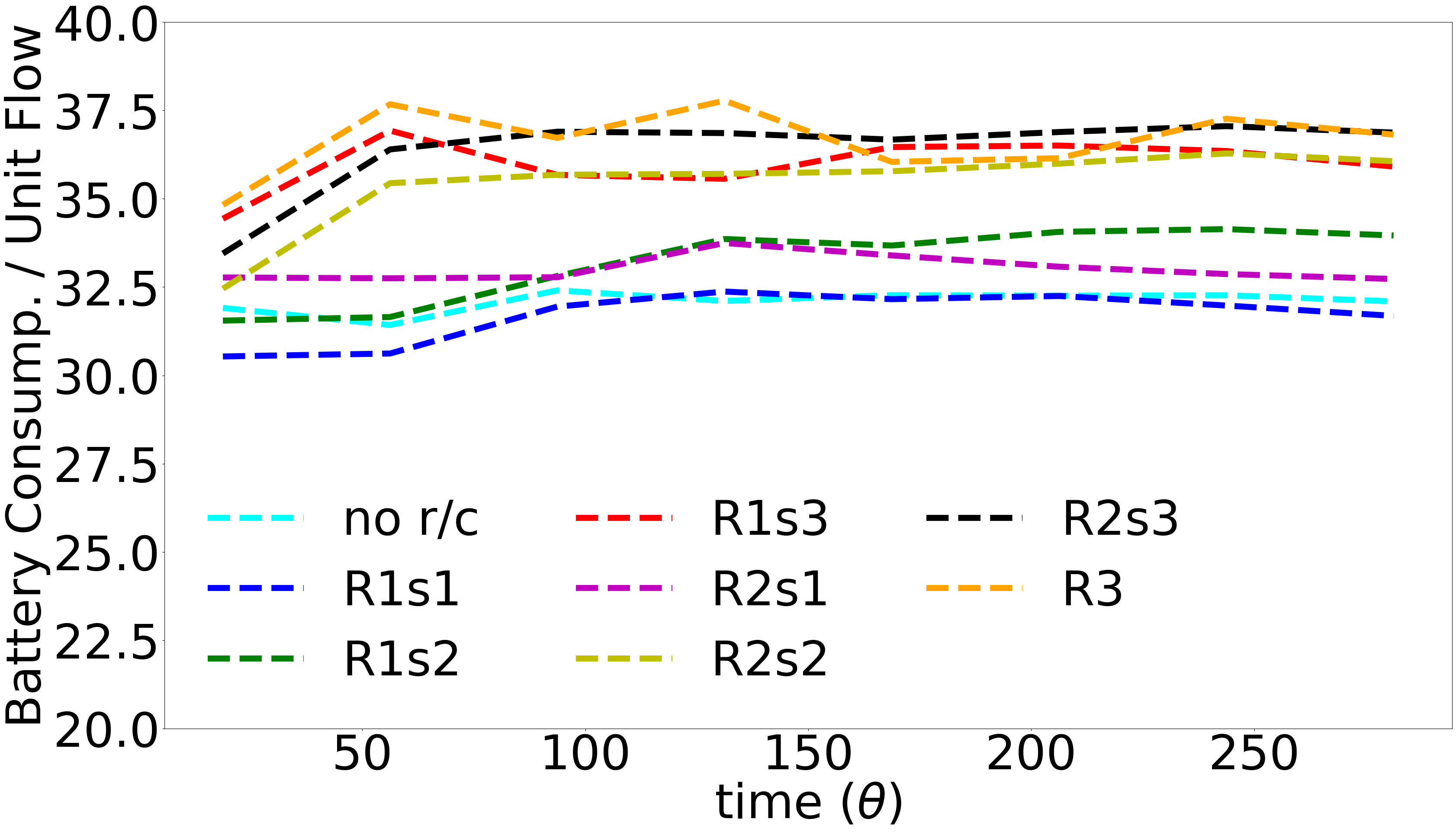}
\includegraphics[width=0.49\textwidth]{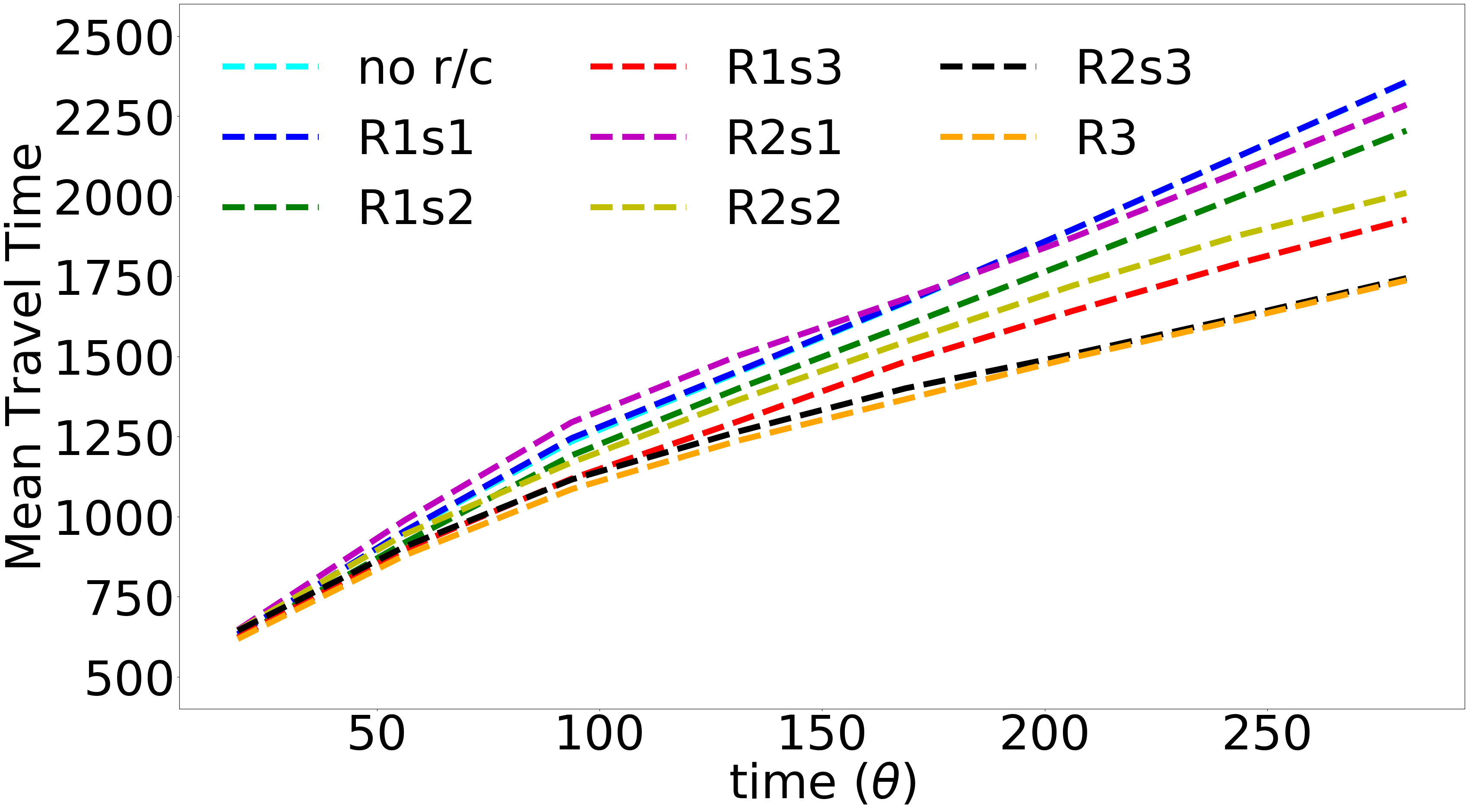}
\caption{The energy consumption profiles (left) and the mean travel times (right)
over eight commodities for walks with a positive flow for the Nguyen-C network
with different combinations of recharging stations at nodes 6, 8 and/or 9.}
\label{fig:rnguyen8}
\end{figure}

\FloatBarrier
\subsubsection{Results for the Sioux Falls Network}

For the Sioux Falls network, the number of walks to be evaluated increases
drastically with the number of recharging stations. Thus, we only include a single recharging station for this network. 
\figref{siouxNetwork} (left) depicts the Sioux Falls-C network with a recharging
station at node $8$. 

\tabref{Sioux} then describes the parameters used in the algorithm and the results for two different battery capacities $b^{max}$. For one of the variants (with energy consumption but without recharging and a battery capacity of $b^{max} = 10$) the algorithm terminated because it reached the desired precision. For three other variants the algorithm terminated after between $10$ and $16$ iteration as it reached the time limit of the runs. This is due to a large fraction of
computational time spent in the network loading and the fixed-point update step
(this is in turn due to the large number of feasible walks in the network).  However, even for those instances the achieved precision was quite small. For one variant (with energy consumption but without recharging and a battery capacity of $b^{max}=6$) no flow could be computed as there was one commodity without an energy feasible walk between source and sink.
\figref{siouxNetwork} (right) shows the convergence trend (in terms of $\Delta h$ and QoPI) for
the Sioux Falls network with recharging and $b^{max}=10$ on a logarithmic scale for the first $10$
iterations. Here, we can again see that most of the progress already happens in the first few iterations after which further progress happens much more slowly.

\begin{table}[htbp]
\caption{Performance of \algoref{fixedPoint} on variants of Sioux network with four
commodities. The following parameters are used: $\epsilon = 0.01$, wall clock time
limit = 7200s, iteration limit = 5000. The abbreviation `NF' indicates that no feasible
walk could be found for at least one of the commodities.}
\scriptsize
\centering
\begin{tabular}{lcccccc}\toprule
 & & \multicolumn{2}{c}{$b^{max} = 6$} & \multicolumn{2}{c}{$b^{max} = 10$} \\\cmidrule(lr){3-4}\cmidrule(lr){5-6}
Variant & A & B & C & B & C \\\cmidrule(lr){2-2}\cmidrule(lr){3-3}\cmidrule(lr){4-4}\cmidrule(lr){5-5}\cmidrule(lr){6-6}
Total \# walks              & 14843             & NF            & 27920           & 90                       & 68674      \\
Wall clock time             & TimeLim           & --            & TimeLim         & 584.771                  & TimeLim    \\
Mean time/DNL               & 85.074            & --            & 257.964         & 0.339                    & 621.421    \\
Mean time/FP-Update         &  8.695            & --            & 20.764          & 0.034                    & 58.764     \\
\# iterations               & 71                & --            & 25              & 1457                     & 10         \\
\# walks with $h>0$         & 10                & --            & 12              & 11                       & 16         \\ 
$\Delta h$                  & 10.745            & --            & 27.236          & 0.100                    & 258.073    \\
$\Delta h$ (relative)       & 0.002             & --            & 0.005           & 0.000                    & 0.045      \\
QoPI (absolute)             & 355.168           & --            & 364.975         & 40.345                   & 1579.097   \\
QoPI                        & 0.007             & --            & 0.003           & 0.001                    & 0.027      \\ \bottomrule
\end{tabular}
\label{tab:Sioux}
\end{table}
\begin{figure}[h!]
	\centering
    \resizebox{.25\linewidth}{!} {\begin{tikzpicture}[node distance={20mm}, thick, main/.style = {draw,
circle, normalEdge}, initial text={}]
\node[main] (1) {$1$};
\node[main] (3) [below of=1] {$3$};
\node[main] (4) [right of=3] {$4$};
\node[main] (5) [right of=4] {$5$};
\node[main] (6) [right of=5] {$6$};
\node[main] (2) [above of=6] {$2$};
\node[main] (9) [below of = 5] {$9$};
\node[main] (8) [right of=9] {$8$};
\node[main] (7) [right = 2cm of 8] {$7$};
\node[main, inner sep=2.5pt] (12) [below = 3cm of 3] {$12$};
\node[main, inner sep=2.5pt] (11) [right of=12] {$11$};
\node[main, inner sep=2.5pt] (10) [right of=11] {$10$};
\node[main, inner sep=2.5pt] (16) [right of=10] {$16$};
\node[main, inner sep=2.5pt] (18) [right = 2cm of 16] {$18$};
\node[main, inner sep=2.5pt] (17) [below of= 16] {$17$};
\node[main, inner sep=2.5pt] (19) [below of= 17] {$19$};
\node[main, inner sep=2.5pt] (15) [left of=19] {$15$};
\node[main, inner sep=2.5pt] (14) [left of=15] {$14$};
\node[main, inner sep=2.5pt] (23) [below of=14] {$23$};
\node[main, inner sep=2.5pt] (22) [right of=23] {$22$};
\node[main, inner sep=2.5pt] (21) [below of=22] {$21$};
\node[main, inner sep=2.5pt] (20) [right of=21] {$20$};
\node[main, inner sep=2.5pt] (24) [left of=21] {$24$};
\node[main, inner sep=2.5pt] (13) [left of=24] {$13$};
\path[->] (1.north east) edge [normalEdge] (2.north west);
\path[->] (2.south west) edge [normalEdge] (1.south east);
\path[->] (3.north east) edge [normalEdge] (4.north west);
\path[->] (4.south west) edge [normalEdge] (3.south east);
\path[->] (4.north east) edge [normalEdge] (5.north west);
\path[->] (5.south west) edge [normalEdge] (4.south east);
\path[->] (5.north east) edge [normalEdge] (6.north west);
\path[->] (6.south west) edge [normalEdge] (5.south east);
\path[->] (9.north east) edge [normalEdge] (8.north west);
\path[->] (8.south west) edge [normalEdge] (9.south east);
\path[->] (8.north east) edge [normalEdge] (7.north west);
\path[->] (7.south west) edge [normalEdge] (8.south east);
\path[->] (12.north east) edge [normalEdge] (11.north west);
\path[->] (11.south west) edge [normalEdge] (12.south east);
\path[->] (11.north east) edge [normalEdge] (10.north west);
\path[->] (10.south west) edge [normalEdge] (11.south east);
\path[->] (10.north east) edge [normalEdge] (16.north west);
\path[->] (16.south west) edge [normalEdge] (10.south east);
\path[->] (16.north east) edge [normalEdge] (18.north west);
\path[->] (18.south west) edge [normalEdge] (16.south east);
\path[->] (14.north east) edge [normalEdge] (15.north west);
\path[->] (15.south west) edge [normalEdge] (14.south east);
\path[->] (15.north east) edge [normalEdge] (19.north west);
\path[->] (19.south west) edge [normalEdge] (15.south east);
\path[->] (23.north east) edge [normalEdge] (22.north west);
\path[->] (22.south west) edge [normalEdge] (23.south east);
\path[->] (13.north east) edge [normalEdge] (24.north west);
\path[->] (24.south west) edge [normalEdge] (13.south east);
\path[->] (24.north east) edge [normalEdge] (21.north west);
\path[->] (21.south west) edge [normalEdge] (24.south east);
\path[->] (21.north east) edge [normalEdge] (20.north west);
\path[->] (20.south west) edge [normalEdge] (21.south east);
\path[->] (1.south west) edge [normalEdge] (3.north west);
\path[->] (3.north east) edge [normalEdge] (1.south east);
\path[->] (3.south west) edge [normalEdge] (12.north west);
\path[->] (12.north east) edge [normalEdge] (3.south east);
\path[->] (12.south west) edge [normalEdge] (13.north west);
\path[->] (13.north east) edge [normalEdge] (12.south east);
\path[->] (4.south west) edge [normalEdge] (11.north west);
\path[->] (11.north east) edge [normalEdge] (4.south east);
\path[->] (11.south west) edge [normalEdge] (14.north west);
\path[->] (14.north east) edge [normalEdge] (11.south east);
\path[->] (14.south west) edge [normalEdge] (23.north west);
\path[->] (23.north east) edge [normalEdge] (14.south east);
\path[->] (23.south west) edge [normalEdge] (24.north west);
\path[->] (24.north east) edge [normalEdge] (23.south east);
\path[->] (5.south west) edge [normalEdge] (9.north west);
\path[->] (9.north east) edge [normalEdge] (5.south east);
\path[->] (9.south west) edge [normalEdge] (10.north west);
\path[->] (10.north east) edge [normalEdge] (9.south east);
\path[->] (10.south west) edge [normalEdge] (15.north west);
\path[->] (15.north east) edge [normalEdge] (10.south east);
\path[->] (15.south west) edge [normalEdge] (22.north west);
\path[->] (22.north east) edge [normalEdge] (15.south east);
\path[->] (22.south west) edge [normalEdge] (21.north west);
\path[->] (21.north east) edge [normalEdge] (22.south east);
\path[->] (2.south west) edge [normalEdge] (6.north west);
\path[->] (6.north east) edge [normalEdge] (2.south east);
\path[->] (6.south west) edge [normalEdge] (8.north west);
\path[->] (8.north east) edge [normalEdge] (6.south east);
\path[->] (8.south west) edge [normalEdge] (16.north west);
\path[->] (16.north east) edge [normalEdge] (8.south east);
\path[->] (16.south west) edge [normalEdge] (17.north west);
\path[->] (17.north east) edge [normalEdge] (16.south east);
\path[->] (17.south west) edge [normalEdge] (19.north west);
\path[->] (19.north east) edge [normalEdge] (17.south east);
\path[->] (19.south west) edge [normalEdge] (20.north west);
\path[->] (20.north east) edge [normalEdge] (19.south east);
\path[->] (7.south west) edge [normalEdge] (18.north west);
\path[->] (18.north east) edge [normalEdge] (7.south east);
\path[->] (18.south west) edge [normalEdge] (20.north east);
\path[->] (20.east)+(.6mm,0) edge [normalEdge] (18.south);
\path[->] (10.south east)+(0,-2mm) edge [normalEdge] (17.west);
\path[->] (17.north west) edge [normalEdge] +(-1.4cm,1.5cm);
\path[->] (22.south east)+(0,-2mm) edge [normalEdge] +(1.2cm,-1.5cm);
\path[->] (20.north west) edge [normalEdge] +(-1.4cm,1.5cm);
\path[->,every loop/.style={looseness=10}] (8) edge [in=15,out=75, loop,normalEdge,darkgreen] (8);
\end{tikzpicture}}
    \includegraphics[width=0.49\textwidth]{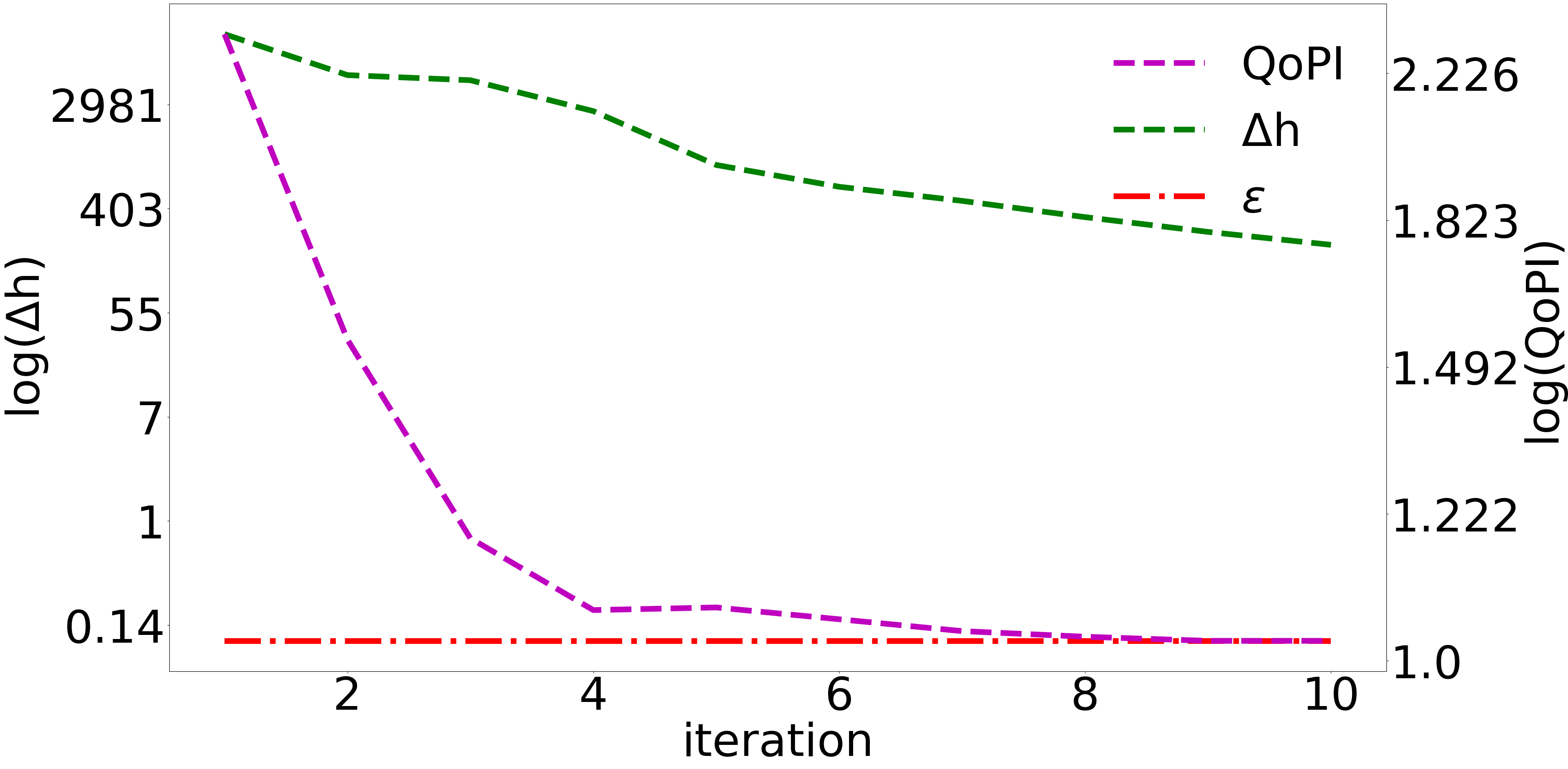}
	\caption{Pictorial depiction of the Sioux Falls network with a recharging station
    at node $8$ (left) and the change in the norm of walk-flows ($\Delta h$) and QoPI
    over iterations for this network with four commodities.}
	\label{fig:siouxNetwork}
\end{figure}

\section{Conclusions}
In this paper, we introduced and analyzed a DTA model 
for the operation of electrical vehicles.
The model combines the Vickrey deterministic queueing model with graph-based gadgets modeling  recharging operations:  a combined routing and recharging strategy of an EV can be reduced to choosing a  walk 
(possibly containing cycles) within this gadget-extended network. 
As our main theoretical result, we proved the existence of dynamic equilibria in this DTA model.
We further  discretized the model in order to apply a fixed-point algorithm
computing approximate dynamic equilibria. The fixed-point algorithm 
is designed to balance the resulting effective cost among all battery-feasible walks. We applied this algorithm to instances from the literature
demonstrating the effect of battery-constraints and recharging options
to the resulting equilibrium travel times and energy consumption profiles, respectively. As the placement and operation of recharging infrastructure
is a key challenge for the whole development of EV-mobility, we believe
that our model and algorithmic approach can serve as the basis for
predicting the resulting effects of such infrastructure designs.

There are several open problems and challenges that
are still widely open. Our approach is inherently walk-based
and therefore it has limitations when it comes to
computing dynamic equilibria for larger instances.
For the Sioux Falls network, for $b^{max}=10$ and two recharging stations, the number of walks to be
evaluated exceeds $10^8$ which leads to unreasonably high
computational times for the network loading routine. We  observed, however,  that
at termination of the fixed-point algorithm, the number of walks with a positive flow is a small fraction of the
total number of walks.
We believe that further insights are required for designing a better algorithm to
compute approximate capacitated dynamic equilibria for larger instances.

\clearpage
  \bibliographystyle{plain}
\bibliography{master-bib,comp}

\clearpage
\appendix
\section{Technical Lemmas}

\begin{lemma}\label{lemma:PointwiseConvergenceOfConcatOfUniformConvergence}
	Let $A,B \subseteq \IR$ be two subset of real numbers, $G^k: A \to B$ a sequence of functions converging uniformly to some function $G:A \to B$ and $F^k: B \to \IR$ another sequence of functions converging uniformly to some continuous function $F:B \to \IR$.
	
	Then $F^k\circ G^k: A \to B$ is a sequence of functions converging point-wise to $F\circ G: A \to B$.
\end{lemma}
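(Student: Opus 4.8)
The statement is the classical fact that uniform convergence of the outer functions, together with continuity of the limit, is exactly what is needed to push a limit through a composition; only pointwise convergence is claimed (and indeed only pointwise is available, since the $G^k$ need not have compact range). I would fix an arbitrary point $a \in A$ and estimate $\abs{F^k(G^k(a)) - F(G(a))}$ by inserting the intermediate term $F(G^k(a))$ and applying the triangle inequality, obtaining
\[
	\abs{F^k(G^k(a)) - F(G(a))} \leq \abs{F^k(G^k(a)) - F(G^k(a))} + \abs{F(G^k(a)) - F(G(a))}.
\]

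For the first summand, note that $G^k(a) \in B$ for every $k$ (this is where we use that the $G^k$ map into $B$, the common domain of the $F^k$ and $F$), so
\[
	\abs{F^k(G^k(a)) - F(G^k(a))} \leq \sup_{b \in B}\abs{F^k(b) - F(b)} = \norm{F^k - F}_\infty \xrightarrow{k \to \infty} 0
\]
by the assumed uniform convergence $F^k \to F$. For the second summand, uniform convergence $G^k \to G$ in particular gives $G^k(a) \to G(a)$ in $B$; since $F$ is continuous at the point $G(a)$, it follows that $F(G^k(a)) \to F(G(a))$, i.e. the second summand also tends to $0$. Combining the two, for any $\varepsilon > 0$ one picks $K_1$ with $\norm{F^k - F}_\infty \leq \varepsilon/2$ for $k \geq K_1$ and $K_2$ with $\abs{F(G^k(a)) - F(G(a))} \leq \varepsilon/2$ for $k \geq K_2$, and then $\abs{F^k(G^k(a)) - F(G(a))} \leq \varepsilon$ for all $k \geq \max\{K_1,K_2\}$. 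As $a \in A$ was arbitrary, $F^k \circ G^k \to F \circ G$ pointwise.

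There is essentially no obstacle here; the only point that deserves a moment's care is bounding the first summand by the \emph{global} supremum norm $\norm{F^k - F}_\infty$ rather than by a value at a moving point, which is legitimate precisely because $G^k(a)$ lies in $B$ for every $k$. Note also that continuity of $F$ is genuinely needed for the second summand, whereas no continuity of the $F^k$ themselves is required.
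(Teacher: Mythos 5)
Your proof is correct and follows essentially the same route as the paper: both insert the intermediate term $F(G^k(\cdot))$, bound one summand by $\norm{F^k - F}_\infty$ via uniform convergence of the outer sequence, and handle the other via continuity of $F$ at the limit point together with convergence of $G^k$. The only cosmetic difference is that you invoke continuity of $F$ at $G(a)$ directly with pointwise convergence $G^k(a) \to G(a)$, whereas the paper routes this through a $\delta$ from continuity combined with $\norm{G - G^k}_\infty \leq \delta$; both are equivalent here.
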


\begin{proof}
	Take any $\theta \in A$ and $\varepsilon > 0$. Since $F$ is continuous, there exists some $\delta > 0$ such that for every $\theta' \in A$ with $\abs{\theta - \theta'} \leq \delta$ we have $\abs{F(\theta)-F(\theta')} \leq \nicefrac{\varepsilon}{2}$. Furthermore, from the uniform convergence of $G^k$ and $F^k$ we get the existence of some $K \in \IN$ such that for every $k \geq K$ we have $\norm{G-G^k}_\infty \leq \delta$ and $\norm{F-F^k}_\infty \leq \nicefrac{\varepsilon}{2}$. Together this gives us
	\begin{align*}
		&\abs{F\circ G(\theta) - F^k\circ G^k(\theta)} 
			\leq \abs{F\circ G(\theta) - F\circ G^k(\theta)} + \abs{F\circ G^k(\theta) - F^k\circ G^k(\theta)} \\
		&\quad\leq \nicefrac{\varepsilon}{2} + \nicefrac{\varepsilon}{2} = \varepsilon.
	\qedhere
	\end{align*}
\end{proof}

\begin{lemma}\label{lemma:UniformConvergenceOfPointwiseConvergence}
	Let $f^k$ be a sequence of functions from some compact interval $[a,b]$ to $\IR$ that converges point-wise to some function $f$. If all $f^k$ are Lipschitz-continuous with some common Lipschitz constant $L$ and $f$ is continuous then $f^k$ converges uniformly to $f$.
\end{lemma}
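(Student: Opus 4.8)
The plan is to run the standard three-$\varepsilon$ argument, exploiting the equicontinuity supplied by the common Lipschitz constant $L$ together with the compactness of $[a,b]$. Observe first that $f$, being a pointwise limit of $L$-Lipschitz functions, is itself $L$-Lipschitz, so in fact the continuity hypothesis on $f$ costs nothing; nonetheless I will only use that $f$ is (uniformly) continuous on the compact interval $[a,b]$. Fix $\varepsilon > 0$ and choose $\delta > 0$ small enough that simultaneously $L\delta \leq \nicefrac{\varepsilon}{3}$ and $\abs{f(x)-f(y)} \leq \nicefrac{\varepsilon}{3}$ whenever $\abs{x-y}\leq\delta$ (the latter by uniform continuity of $f$).

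Next, using compactness of $[a,b]$, I would fix finitely many points $a = x_0 < x_1 < \dots < x_m = b$ with $x_j - x_{j-1} \leq \delta$ for every $j$, so that each $x \in [a,b]$ lies within distance $\delta$ of some $x_j$. Since $f^k(x_j) \to f(x_j)$ for each of the \emph{finitely many} indices $j$, there is a single $N \in \IN$ such that $\abs{f^k(x_j) - f(x_j)} \leq \nicefrac{\varepsilon}{3}$ for all $k \geq N$ and all $j \in \{0,\dots,m\}$.

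Then for an arbitrary $x \in [a,b]$ and any $k \geq N$, picking $x_j$ with $\abs{x - x_j} \leq \delta$ and using the $L$-Lipschitz continuity of $f^k$ on the first term, the triangle inequality gives
\[
	\abs{f^k(x) - f(x)} \leq \abs{f^k(x) - f^k(x_j)} + \abs{f^k(x_j) - f(x_j)} + \abs{f(x_j) - f(x)} \leq L\delta + \frac{\varepsilon}{3} + \frac{\varepsilon}{3} \leq \varepsilon.
\]
Since $x$ was arbitrary, this yields $\norm{f^k - f}_\infty \leq \varepsilon$ for all $k \geq N$, which is exactly uniform convergence.

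There is no genuine obstacle here; this is the classical fact that equicontinuity plus pointwise convergence on a compact set upgrades to uniform convergence. The only point demanding a little care is that one must simultaneously control the oscillation of the approximants $f^k$ \emph{and} of the limit $f$ over short intervals, which is why $\delta$ is chosen to serve both roles at once (and why some continuity assumption on $f$, here automatic, is invoked).
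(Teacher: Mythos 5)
Your proof is correct and follows essentially the same route as the paper's: the standard three-$\varepsilon$ argument using uniform continuity of $f$, a finite $\delta$-net of $[a,b]$, and the common Lipschitz constant to control $\abs{f^k(x)-f^k(x_j)}$; your single $\delta$ serving both roles is exactly the paper's $\min\{\nicefrac{\varepsilon}{3L},\delta\}$. The observation that $f$ is automatically $L$-Lipschitz (so the continuity hypothesis is redundant) is a correct bonus remark not present in the paper.
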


\begin{proof}
	Let $\varepsilon > 0$. Since $f$ is continuous on a compact set, it is uniformly continuous. Thus, there exists some $\delta > 0$ such that for any two points $x,y \in [a,b]$ with $\abs{x-y} \leq \delta$ we have $\abs{f(x)-f(y)} \leq \nicefrac{\varepsilon}{3}$. Now, fix some partition of $[a,b]$ into intervals $[x_0,x_1), [x_1,x_2), \dots, [x_{N-1},x_N]$ with length at most $\min\set{\nicefrac{\varepsilon}{3L},\delta}$ each. Then choose $K \in \IN$ such that for all $k \geq K$ and all $i \in \{0,1, \dots, N\}$ we have $\abs{f^k(x_i) - f(x_i)} \leq \nicefrac{\varepsilon}{3}$. Then we have for every $k \geq K$ and any $x \in [a,b]$ that there exists some $i$ with $\abs{x_i-x} \leq \min\set{\nicefrac{\varepsilon}{3L},\delta}$ and, thus, we have
	\begin{align*}
		\abs{f^k(x) - f(x)} 
			&\leq \abs{f^k(x)-f^k(x_i)} + \abs{f^k(x_i)-f(x_i)} + \abs{f(x_i)-f(x)} \\
			&\leq L\cdot\abs{x-x_i} + \frac{\varepsilon}{3} + \frac{\varepsilon}{3} \leq \frac{3\varepsilon}{3} = \varepsilon.
	\end{align*}
	As this holds for all $x \in [a,b]$ (with the same $K$), we have shown that $f^k$ converges uniformly to $f$.
\end{proof}

\begin{lemma}\label{lemma:covering}
	Given a finite interval $[a,b] \subseteq \IR$, a subset $J \subseteq [a,b]$ of positive measure and for every $\theta \in J$ some positive number $\delta_\theta > 0$. Then there exists some $\theta \in J$ such that $J \cap [\theta,\theta+\delta_\theta]$ has positive measure.
\end{lemma}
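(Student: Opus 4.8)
The plan is to first reduce to the case of a uniform lower bound on the thresholds $\delta_\theta$, and then to exhibit the desired point directly. Since $\delta_\theta > 0$ for every $\theta \in J$, we can write $J = \bigcup_{n \in \IN} J_n$ with $J_n := \set{\theta \in J | \delta_\theta \geq 1/n}$, so by countable subadditivity at least one $J_n$ has positive measure; fix such an $n$ and set $\varepsilon := 1/n$. Because $J \cap [\theta,\theta+\delta_\theta] \supseteq J_n \cap [\theta,\theta+\varepsilon]$ for every $\theta \in J_n$, it now suffices to produce a single $\theta \in J_n$ with $\mu\big(J_n \cap [\theta,\theta+\varepsilon]\big) > 0$.

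To produce such a point I would invoke the Lebesgue density theorem: almost every point of the positive-measure set $J_n$ is a point of density $1$, so fix one such point $\theta$. At a density point the complement has density $0$, so $\mu\big(J_n^{\,c} \cap [\theta,\theta+h]\big) \leq \mu\big(J_n^{\,c} \cap [\theta-h,\theta+h]\big) = o(h)$ as $h \to 0^+$, and hence $\mu\big(J_n \cap [\theta,\theta+h]\big) = h - o(h) > 0$ for all sufficiently small $h > 0$. Choosing any such $h \leq \varepsilon$ gives $\mu\big(J_n \cap [\theta,\theta+\varepsilon]\big) \geq \mu\big(J_n \cap [\theta,\theta+h]\big) > 0$, and therefore $\mu\big(J \cap [\theta,\theta+\delta_\theta]\big) > 0$, as required.

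A more elementary variant, avoiding the density theorem, works directly with the essential supremum $\omega := \inf\set{t \in [a,b] | \mu(J_n \cap [t,b]) = 0}$ of $J_n$. Using continuity of measure one checks $a < \omega \leq b$, that $\mu(J_n \cap [\omega,b]) = 0$, and that $\mu(J_n \cap [t,b]) > 0$ for every $t < \omega$; thus positive measure of $J_n$ accumulates immediately to the left of $\omega$. Picking $\theta \in J_n$ with $\max\{a,\omega-\varepsilon\} \leq \theta < \omega$ — possible since $\mu\big(J_n \cap [\max\{a,\omega-\varepsilon\},\omega)\big) > 0$ — we get $\theta+\varepsilon \geq \omega$ and hence $\mu\big(J_n \cap [\theta,\theta+\varepsilon]\big) \geq \mu\big(J_n \cap [\theta,\omega)\big) = \mu\big(J_n \cap [\theta,b]\big) > 0$.

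The only mildly delicate point, and the one I would be most careful to get right, is that the thresholds $\delta_\theta$ vary with $\theta$; this is precisely what forces the opening decomposition into the sets $J_n$. Once a uniform $\varepsilon$ is available, the statement reduces to the essentially one-line fact that a set of positive measure cannot have all of its right-hand $\varepsilon$-windows of measure zero, and no continuity or measurability of the map $\theta \mapsto \delta_\theta$ is needed anywhere.
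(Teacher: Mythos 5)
Your proof is correct, but it takes a genuinely different route from the one in the paper. You reduce to a uniform threshold by decomposing $J$ into the level sets $J_n = \set{\theta \in J | \delta_\theta \geq 1/n}$ and then produce the desired point either via the Lebesgue density theorem or via an essential-supremum argument; the paper instead shows that the set $\partial J$ of points of $J$ not lying in the \emph{interior} of any interval $(\theta,\theta+\delta_\theta)$ is countable (by a packing argument on the sets $\partial J^n$), passes to a compact subset $C \subseteq J \setminus \partial J$ of positive measure by inner regularity, and extracts a finite subcover of $C$ by the open intervals, one of which must then meet $C$ in positive measure. Your argument is shorter and relies only on standard one-variable real analysis (especially in the essential-supremum variant, which needs nothing beyond countable subadditivity and monotonicity), whereas the paper's argument is self-contained and avoids the density theorem. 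The one point you should tighten: since nothing is assumed about the map $\theta \mapsto \delta_\theta$, the level sets $J_n$ need not be Lebesgue measurable, so writing $\mu(J_n \cap \cdot)$ is not literally licensed. This is not a fatal gap --- countable subadditivity, monotonicity, the density theorem (in its outer-density form, e.g.\ via a measurable hull of $J_n$), and your essential-supremum computation all go through verbatim for Lebesgue \emph{outer} measure, and the final set $J \cap [\theta,\theta+\delta_\theta]$ is measurable because $J$ is, so positive outer measure does yield positive measure --- but you should say so explicitly. Note that the paper's proof is immune to this issue by design: its exceptional set $\partial J$ is countable and hence null irrespective of any measurability of $\delta$, which is presumably why the authors chose that route, given that in the application the $\delta_{\bar\theta}$ arise from an existential condition on the restriction set $S$ and carry no regularity whatsoever.
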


\begin{proof}
	For any $n \in \IN$ we define the set of all points in $J$ which are the border of some interval of size at least $1/n$ but not in the interior of any interval by
		\[\partial J^n \coloneqq \Set{\bar\theta \in J | \exists \theta \in J: \delta_\theta \geq \frac{1}{n}, \bar\theta \in \set{\theta,\theta+\delta_\theta} \text{ and } \forall \theta \in J: \bar\theta \notin (\theta,\theta+\delta_\theta)}\]
	and claim that there is at most a countable number of such points. Indeed, note that any point in $\partial J^n$ must be the border point of an interval of size at least $1/n$ not containing any other such point. Thus, for any point in $\partial J^n$ there can be at most one other point from $\partial J^n$ within a distance of less than $1/n$. This implies that there are only countably many such points.
	
	But then the set
		\[\partial J \coloneqq \Set{\bar\theta \in J | \forall \theta \in J: \bar\theta \notin (\theta,\theta+\delta_\theta)} = \bigcup_{n \in \IN}\partial J^n\]
	is also countable and, thus, has measure zero. This, in turn, implies that $J \setminus \partial J$ still has positive measure. 
	
	Now, since the Lebesgue measure is inner regular, $J \setminus \partial J$ contains a compact subset $C$ of positive measure. As the family of open intervals $\set{(\theta,\theta+\delta_\theta) | \theta \in J}$ is an open covering of $J \setminus \partial J$, it is also a covering of $C$ and, thus, contains a finite subcover. Consequently, at least one of the open intervals $(\theta,\theta+\delta_\theta)$ of this subcover must have a  intersection of positive measure with $C$. This $\theta$ then also satisfies the desired property of the lemma.
\end{proof}

\end{document}